\large\color{gray},
\newtheorem{theorem}{Theorem}
\newtheorem{lemma}[theorem]{Lemma}
\newtheorem{definition}{Definition}
\newcommand{\myparagraph}[1]{\vspace{2mm}\noindent{\textbf{#1.}}}
\def\REVIEW{0}  
\title{Resolving Conflicts with Grace:\\ Dynamically Concurrent Universality}
\author{Petr Kuznetsov$^1$ \and Nathan Josia Schrodt$^2$}
\date{%
    $^1$ LTCI, Télécom Paris, Institut Polytechnique de Paris, France\\%
    $^2$Télécom Paris, Institut Polytechnique de Paris, France \& Technical University of Darmstadt, Germany%
}
\begin{document}
\maketitle

\begin{abstract}
    Synchronization is the major obstacle to scalability in distributed computing.
    Concurrent operations on the shared data engage in synchronization when they encounter a \emph{conflict}, i.e., 
    their effects depend on the order in which they are applied.
    Ideally, one would like to detect conflicts in a \emph{dynamic} manner, i.e., adjusting to the current system state. 
    Indeed, it is very common that two concurrent operations conflict only in some rarely occurring states.
    In this paper,
    we define the notion of \emph{dynamic concurrency}:
    an operation employs strong synchronization primitives only if it \emph{has} to arbitrate with concurrent operations, given the current system state.
    We then present a dynamically concurrent universal construction.
\end{abstract}


\section{Introduction}
When we design a distributed system that enables concurrent access to the shared data, we often have to face our main nemesis -- synchronization.
It is common that concurrent operations affect one another and, thus, cannot be executed in parallel;
we say that the operations encounter a \emph{conflict}.
Potential conflicts are typically anticipated and resolved using \emph{consensus}~\cite{flp}.
The processes agree on the order in which the 
operations are applied, as in the celebrated universal construction by Herlihy~\cite{Her91}. 

As consensus is perceived an expensive tool, a lot of efforts were invested in \emph{optimistic} algorithms that avoid consensus-based synchronization in \emph{good scenarios}, in particular when no contention is encountered, i.e., when an operation runs in isolation~\cite{paxos,HLM03-of,solofast}.
%
A step further in this direction is to avoid using consensus \emph{in the absence of conflicts}, even when operations contend.
Indeed, one can come up with a universal construction in which concurrent operations that do not conflict are executed in parallel without engaging in consensus to decide on their execution order~\cite{lamport2010generalized,bazzi2022clairvoyant}.  

However, conflicts are treated here in a conservative way: two operations are considered conflicting if they do not commute \emph{in some state} (but might commute in another).
%
This can result in very inefficient implementations.
%
%
Indeed, many data structures exhibit conflicts \emph{rarely}, only in certain states.   
Consider, for example an \emph{asset-transfer} system~\cite{at2-cons} that allows a set of users to execute \emph{transactions} sending assets across their accounts. 
When two transactions $\textit{transfer}(\$100)$ and $\textit{transfer}(\$50)$ are concurrently applied on the same account holding $\$100$, their effect depends on the \emph{order} in which they are executed:  the first one will be accepted, and the second one -- rejected.
On the other hand, if the account holds at least $\$150$, the two transactions can be applied in any order with the same effect.

The notion of a dynamic conflict in asset-transfer systems is defined in a straightforward manner, and one can come up with a system that uses consensus only \emph{when necessary}, i.e., concurrent transactions  cannot \emph{all} be served~\cite{cryptoconcurrency}.
To the best of our knowledge, however, no attempt has been taken to define dynamic conflicts in a \emph{universal} way, that would apply to \emph{any} sequential object.
%
%
We address this issue in the concurrent model, where processes communicate by applying operations on shared \emph{base} objects, heading for a \emph{universal construction}, a \emph{wait-free linearizable}~\cite{linearizability,Her91} implementation of any object, given its sequential specification. 
We distinguish between basic read-write operations (offering \emph{consensus power} $1$~\cite{Her91,flp,LA87}) and \emph{strong} (read-modify-write) synchronization primitives that enable nontrivial consensus solutions, such as CAS or LL/SC. 
Our goal is a \emph{dynamically concurrent} universal construction that avoids using strong synchronization if concurrent operations commute \emph{in the current system state}. 
The goal, quite easily grasped on an intuitive level, is however evasive when it comes to formal treatment.
How is the ``current state'' defined in an execution of a concurrent algorithm?

Indeed, consider an operation in a concurrent history. 
The operation's interval may be concurrent with different operations at different moments of its execution.  
How do we define the state of the implemented object and decide whether the operation commutes with concurrent operations in that state? 
Imagine an implementation in which every operation $op$ can be thought of taking effect at a well-defined moment between its invocation and its response, a \emph{linearization point}~\cite{linearizability}.
Then, each moment during the execution of $op$ determines a system state, consisting of the effects of operations that have already taken place and the set of operations that have been invoked but not yet linearized. 
Even if the actual implementation does not allow us to determine an irreversible linearization point for each operation~\cite{strong-lin}, we can derive a superset of all possible system states that \emph{might have occurred}. 
Intuitively, an implementation is said to be \emph{dynamically concurrent} if it uses strong synchronization for an operation only when, in some possible system state during its execution, it does not commute with some concurrent operations (see Figure~\ref{fig:history} for an illustration).

The canonical approach to building a universal construction is to order operations using consensus and applying the sequential specification of the object to the ordering. 
A natural generalization of this approach, aimed at allowing concurrency, is to replace the linear model with a directed acyclic \emph{dependency graph} of operations, in which all topological orderings yield equivalent linearizations. 
This way commuting operations are executed in parallel, which might considerably improve performance in workloads where dynamic conflicts are rare, often argued to be the most common case.

Our dynamically concurrent universal construction builds on this idea using a shared dependency graph together with a set of active operations. 
The current state is obtained by topologically ordering the current version of the graph, while the set of active operations is determined by identifying those that have been invoked but not yet added to the graph. 
If commutativity with the concurrent operations is observed, the current operation can be appended to the graph directly; otherwise, the algorithm resorts to ordering the operations via consensus. 
A particular difficulty here is to ensure that conflicting operations committed via different paths (direct commitment and consensus-based) cannot bypass each other, preserving correctness.
We achieve this by a variant of commit-adopt~\cite{Gaf98} that guarantees that a directly committed operation is always placed in the causal past of every concurrent operation that encounters a conflict with it.

To sum up, in this paper we:

\begin{enumerate}
    \item[(1)] Define the notion of a \emph{dynamically concurrent} object, the one that resorts to consensus only when a dynamic conflict arises.
    \item[(2)] Describe a dynamically concurrent universal construction.
\end{enumerate}

The paper is organized as follows. 
In Section~\ref{sec:related}, we overview the related work and 
in Section~\ref{sec:model}, we recall our model assumptions. 
In Section~\ref{sec:dyncon}, we formally define the notion of dynamic concurrency. 
In Section~\ref{sec:protocol}, we present our protocol and analyze its correctness. 
Section~\ref{sec:conc} concludes the paper with the discussion of ramifications of this work and open questions. 
Certain proofs are delegated to the appendix.

\section{Related Work}
\label{sec:related}

%
Most concurrent data structures exhibiting strong consistency and availability guarantees, such as linearizability~\cite{linearizability,AW04} and wait-freedom~\cite{Her91}, require using of consensus or strong synchronization primitives.   

As consensus often becomes an implementation's bottleneck, a lot of effort has been invested in \emph{optimistic} solutions that avoid consensus in ``good runs'', in particular, when no contention is observed~\cite{hq,zyzzyva,hotstuff}.
In concurrent data structures, this approach instantiated in \emph{solo-fast} implementations~\cite{solofast}, where strong synchronization is only used when \emph{step contention} is observed.


Schneider~\cite{Sch90} introduced the notion of a \emph{state-machine replication}, a concurrent implementation of an object given its sequential specification.
Lamport~\cite{paxos} discovered the first fault-tolerant state-machine replication protocol and then generalized it to account for parallel execution of \emph{non-conflicting operations}~\cite{lamport2010generalized}.
In the same vein, Zielinski proposed an elegant and fast \emph{generic broadcast} protocol that ensures that conflicting broadcast messages are delivered in the same order (non-conflicting ones do not have to be ordered).
Both proposals assume a \emph{static} definition of conflicts, where certain pairs of commands are conservatively considered as conflicting if they do not commute in \emph{some state}.
The static understanding of conflicts has been also considered in the Byzantine fault-tolerant setting: in Byzantine generalized Paxos~\cite{byzgenpaxos}, Byblos~\cite{bazzi2022clairvoyant} and RedBlue~\cite{li2012redblue}.

Aspnes and Herlihy \cite{Aspnes90} presented a wait-free construction for objects where every operations either commutes with or overwrites any other operation in any history. Like our approach, their method uses a precedence graph to capture dependencies between operations. Requiring that every operation either commutes or overwrites any other is a major restriction, since it limits the construction to objects with consensus number one. In contrast, our construction supports arbitrary objects and detects conflicts dynamically.

The notion of dynamic (state-dependent) conflicts has also been considered by Clements et al.~\cite{ClementsKZMK13} in their work on \emph{commutativity rule}, an efficient generalization of disjoint-access-parallelism by Israeli and Rappoport~\cite{dap} to semantics-aware systems.     
Kulkarni et al.~\cite{kulkarni2009defining,KulkarniNPSP11} discuss algorithms that achieve serializability of transactions and adjust to \emph{low-level} dynamic conflicts incurred by underlying \emph{base-object} operations: transactions that do not encounter low-level data conflict proceed in parallel, without unnecessary lock-based synchronization. 
In contrast, we focus here on \emph{high-level semantics-aware} dynamic conflicts and \emph{wait-free} implementations that do not engage in consensus in conflict-free scenarios.  

%

To the best of our knowledge, the only attempt to account for dynamic conflicts in this context was taken in CryptoConcurrency~\cite{cryptoconcurrency}, an asset transfer protocol that processes concurrent transactions on the same account in parallel, without resorting to consensus to order them, as long as they are not exhausting the current account balance. 
In this paper, we generalize this idea to universal construction that allows concurrent operations to proceed in parallel, as long as they are not in conflict in the current object state.

\section{Model}
\label{sec:model}
We consider a system of $n$ asynchronous processes. Each process can be arbitrarily fast or slow relative to the other processes and can \emph{crash fail}, i.e., stop taking steps at any moment. The processes communicate through fundamental shared base objects like atomic read-write registers. Those base objects can be used to implement more complex, shared, \emph{sequential objects}.

\myparagraph{Objects}
A \emph{sequential object $X$} is defined by a tuple $(Q, q_0, O, R, \sigma)$ where $Q$ is a set of states, $q_0 \in Q$ is the initial state, $O$ is a set of operations, $R$ is a set of outputs, and $\sigma: O \times Q \to R \times Q$ is a sequential specification with $\sigma((o, q)) = (r, q')$ if and only if operation $o$ applied to $X$ in state $q$ returns $r$ and changes its state to $q'$.
Since we consider only sequential objects, we will henceforth refer to them simply as \emph{objects}.

\myparagraph{Steps and Algorithms}{
The behavior of a process when implementing a \emph{high-level object} (as opposed to a base object) is described by a deterministic automaton that specifies how each process responds to invocations of that object. Upon receiving an invocation, the process advances by taking \emph{steps} prescribed by its automaton. We refer to the descriptions of these automatons simply as \emph{implementations} or \emph{algorithms}.
Each step consists of some terminating local computation together with a single operation on a shared base object. Infinite loops of purely local computation without any interaction with a base object are disallowed.
In addition, an algorithm may invoke operations of other higher-level shared objects. Such invocations act as subautomata: control is delegated to the invoked object until its operation finishes, after which the original automaton continues. Conceptually, this recursive structure can be flattened, leaving only local computations and operations on base objects.
After each step, a process updates its local state according to its transition rules and may return a response to the currently active operation of the high-level object being implemented. Crucially, a process is not allowed to start executing a new operation on this implemented object until the previous operation has terminated.
}

\myparagraph{Runs and Operations}{
A \emph{run} is a (possibly infinite) sequence of events obtained by interleaving the steps of all processes, beginning from the initial global state.
A run especially includes every taken step as well as invocations and response events of operations of every (high-level or base) object involved.
For simplicity, we assume that each operation is unique, i.e., an operation is invoked at most once.
This uniqueness can be ensured by assigning a caller-specific identifier and a sequence number to every operation.
We denote the process executing an operation $op$ as $op.p$, and refer to the invocation and response (event) of $op$ as $op.inv$ and $op.res$, respectively.
Invocations and responses can be uniquely associated with operations.
}

\myparagraph{Histories and Linearizability}
A \emph{history} is a sequence $H$ of invocations and responses without duplicates, where every response $op.res$ is preceded by its unique matching invocation $op.inv$, and each invocation $op.inv$ corresponds to at most one response. We define $op \in H$ if and only if $op.inv \in H$.
A run \emph{induces} a history for a given object by listing every invocation and response of operations on that object that occur in the run, in the order in which they appear.
A history is \emph{sequential} if every invocation can only be followed by a corresponding response. Thus, a sequential history $S$ is a list of operations and vice versa. 
Whenever $S$ is finite, the last operation in $S$ might not have a response. 

Operation $op$ \emph{precedes $op'$ in $H$} if $op.res$ precedes $op'.inv$ in $H$. We write $op <_H op'$. If neither $op <_H op'$ nor $op' <_H op$ but $op, op' \in H$, $op$ and $op'$ are \emph{concurrent in $H$}.
The relation $<_H$ is irreflexive, asymmetric and transitive and therefore forms a strict partial order. For a sequential history, $<_H$ is a total order. We refer to $<_H$ as the \emph{precedence relation of $H$}.
A sequential history $S$ is \emph{legal} if it satisfies the sequential specifications of the objects, i.e., for each object, starting from its initial state and sequentially applying the operations of $S$ that involve this object yields exactly the responses recorded in $S$.
An operation $op$ is \emph{complete in $H$} if $H$ contains both the invocation and the response of $op$. A \emph{completion of $H$} is a history $H'$ that only includes all complete operations of $H$ and a subset of incomplete operations of $H$ with corresponding responses.
Furthermore, $H'$ has to preserve the precedence relation of $H$.
Let us denote the history that consists of all invocations and responses of a process $p$ in a history $H$ as $H \mid p$.

\begin{definition}
    \label{def:history.equivalent}
    Histories $H$ and $H'$ are \textbf{equivalent} if we have $H \mid p = H' \mid p$ for all processes $p$.
\end{definition}

\begin{definition}
    \label{def:history.linearization}
    Let $H$ be a history. A sequential history $S$ is a \textbf{linearization of $H$} if
    \begin{itemize}
        \item $S$ is equivalent to a completion of $H$,
        \item $S$ preserves the precedence relation of $H$ and
        \item $S$ is legal.
    \end{itemize}
    If there exist such $S$ for a history $H$, we call $H$ \textbf{linearizable}.
\end{definition}

An implementation (or algorithm) $I$ is \emph{linearizable} if, for every run $R$ of $I$, the history induced by $R$ is linearizable.
Linearizability is a widely accepted correctness condition for concurrent objects. It requires that every operation on the object appears to take effect at a single point in time between its invocation and response, as though the object was accessed atomically.

\myparagraph{Wait-freedom}
The gold standard for progress guarantees is \emph{wait-freedom}.
An implementation of an object is \emph{wait-free} if, whenever a process that does not crash fail invokes an operation, that process is guaranteed to complete the operation in a finite number of its own steps, regardless of the actions or failures of other processes.

\myparagraph{Snapshots and Consensus}
A fundamental object with a wait-free linearizable implementation using only atomic read-write registers is the \emph{snapshot} object \cite{snapshot}.
Each process has a dedicated component it can update with a $\textit{write}$ operation, and all processes can invoke $\textit{snapshot}()$ to obtain a consistent view of all components.
Another key object is \emph{consensus}, exporting a single operation $\textit{propose}(v)$.
Each process can propose values $v$ from a set $V$, and all invocations must return the same value (\emph{agreement}), which must be one of the proposed values (\emph{validity}).
We say that a consensus object \emph{stores} a value $v$, once an invocation of $\textit{propose}$ returned that value.
Agreement ensures that no other $\textit{propose}$ operation on the same object can return a different value.
Unlike the snapshot object, consensus cannot be implemented in a wait-free linearizable manner using only read-write registers \cite{flp}.
Any base object that, together with read-write registers, enables solving wait-free consensus, such as CAS or LL/SC, is called a \emph{strong synchronization primitive}.

\section{Dynamic Concurrency}
\label{sec:dyncon}

We now introduce the notion of a dynamically concurrent implementation.
Before formalizing the notions of the system state and dynamic concurrency in Definitions~\ref{def:system-state} and~\ref{def:dynamic-concurrency}, we define the equivalence and commutativity relations and then provide some intuition on dynamic concurrency.

\subsection{Equivalences and Commutativity}
In this section, we define equivalences between sequences of operations, as well as the notion of commutativity,
crucial in formalizing the concept of dynamic concurrency in the next section and, accordingly, in determining whether certain operations require ordering in a dynamically concurrent universal construction.
Let $s$ and $s'$ be disjoint sequences of operations. We denote $s \cdot s'$ as the concatenation of $s$ and $s'$. A single operation is interpreted as a sequence of length one.
\begin{definition}
    \label{def:equivalence-relation-of-orderings}
    Let $s$ and $s'$ be orderings of a set of operations $S$. The orderings $s$ and $s'$ are \textbf{equivalent} if according to the sequential specification the application of $s$ to the initial state results in the same state as the application of $s'$ and every operation in $S$ returns the same output in both orderings. We write $s \simeq s'$.
    
    If $s \simeq s'$, and $s$ and $s'$ share a common prefix $p$ such that $s = p \cdot r$ and $s' = p \cdot r'$ for some $r$ and $r'$, we say that $r$ and $r'$ are \textbf{equivalent in $p$}.
\end{definition}

We immediately see that the relation $\simeq$ of sequences of operations, defined in Definition~\ref{def:equivalence-relation-of-orderings}, is a well-defined equivalence relation for a given set of operations, i.e., it is reflexive, symmetric and transitive.

Two states are said to be \emph{distinguishable} if there exists a sequence of operations whose execution from these states yields different responses.
If certain states cannot be distinguished in this way, the object's states can be replaced by equivalence classes, where two states are equivalent if they are not distinguishable.
Working with this quotient state space can increase the number of equivalent orderings and, potentially, the scenarios in which operations commute according to Definition~\ref{def:commutativity}.

\begin{definition}
    \label{def:commutativity}
    Let $op$ be an operation and $l$ and $s$ sequences of operations with $op \notin l$, $op \notin s$ and $l \cap s = \emptyset$.
    Operation $op$ \textbf{commutes in $l$ with $s$} if $l \cdot s \cdot op \simeq l \cdot op \cdot s$.

    Let $S$ be a set of operations with $op \notin S$ and $l\cap S = \emptyset$.
    Operation $op$ \textbf{commutes in $l$ with $S$} if $op$ commutes in $l$ with every ordering of $S$.
\end{definition}
\begin{lemma}
    \label{lemma:equivalence-preservation}
    Let $l$ be a sequence of operations, $s = s_1 \cdot s_2$ and $s' = s_1' \cdot s_2'$ orderings of the same set of operations $S$, and $op$ an operation with $op \notin l$, $op \notin S$ and $l \cap S = \emptyset$.
    
    If $l \cdot s \simeq l \cdot s'$ and $op$ commutes with $s_1$, $s_1'$, $s$ and $s'$ in $l$, then it follows that $l \cdot s_1 \cdot op \cdot s_2 \simeq l \cdot  s_1' \cdot op \cdot s_2'$.
\end{lemma}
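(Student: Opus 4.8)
The plan is to prove the claim by exhibiting a finite chain of $\simeq$-equivalences connecting $l \cdot s_1 \cdot op \cdot s_2$ to $l \cdot s_1' \cdot op \cdot s_2'$ and then invoking transitivity of $\simeq$ (Definition~\ref{def:equivalence-relation-of-orderings}). The idea is to use the commutativity hypotheses to slide $op$ out of its position just after $s_1$ all the way to the front, right after $l$; then ``carry it across'' the whole block using commutativity of $op$ with $s$; then apply the given equivalence $l\cdot s \simeq l \cdot s'$ to replace $s$ by $s'$ while $op$ is parked at the end; and finally slide $op$ back in after $s_1'$ using commutativity with $s'$ and then with $s_1'$.

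First I would isolate the one auxiliary fact that makes all these manipulations legal: right-compositionality of $\simeq$. Namely, if $u$ and $v$ are orderings of the same operation set $U$ with $u \simeq v$, and $w$ is a sequence of operations disjoint from $U$, then $u \cdot w \simeq v \cdot w$. This is immediate from the definition of $\simeq$: $u \simeq v$ means that applying $u$ and applying $v$ to the initial state lead to the \emph{same} state (and agree on all outputs of operations in $U$); hence executing $w$ afterwards produces identical states and identical outputs in both cases. Note that only this right-sided version is needed; we never prepend on the left, where the analogous statement fails in general. With this fact in hand, the chain reads
\[
l \cdot s_1 \cdot op \cdot s_2
\;\simeq\; l \cdot op \cdot s_1 \cdot s_2
\;=\; l \cdot op \cdot s
\;\simeq\; l \cdot s \cdot op
\;\simeq\; l \cdot s' \cdot op
\;\simeq\; l \cdot op \cdot s'
\;=\; l \cdot op \cdot s_1' \cdot s_2'
\;\simeq\; l \cdot s_1' \cdot op \cdot s_2'.
\]
Reading left to right: the first $\simeq$ is commutativity of $op$ in $l$ with $s_1$ (giving $l\cdot s_1\cdot op \simeq l\cdot op\cdot s_1$) right-composed with $s_2$; the second equality is $s = s_1\cdot s_2$; the third $\simeq$ is commutativity of $op$ in $l$ with $s$, used as $l\cdot op\cdot s \simeq l\cdot s\cdot op$; the fourth $\simeq$ is the hypothesis $l\cdot s\simeq l\cdot s'$ right-composed with $op$; the fifth $\simeq$ is commutativity of $op$ in $l$ with $s'$; the sixth equality is $s' = s_1'\cdot s_2'$; and the last $\simeq$ is commutativity of $op$ in $l$ with $s_1'$, used as $l\cdot op\cdot s_1' \simeq l\cdot s_1'\cdot op$, right-composed with $s_2'$. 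Transitivity of $\simeq$ then yields the claim. Along the way one checks the disjointness side-conditions needed even to write these concatenations, but these all follow from $op\notin l$, $op\notin S$, $l\cap S=\emptyset$, and the fact that $s_1,s_2$ (resp. $s_1',s_2'$) partition $S$.

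The argument is essentially bookkeeping and I expect no deep obstacle. The only point that deserves care is making the compositionality observation explicit — the text records that $\simeq$ is an equivalence relation but not that it is a congruence for right concatenation — and then being disciplined about which of the four commutativity hypotheses is invoked at each link of the chain and in which direction. In particular it is worth double-checking that commutativity with the ``long'' sequences $s$ and $s'$, rather than only with the pieces $s_1,s_2,s_1',s_2'$, is genuinely needed: it is precisely what lets $op$ traverse the entire block in a single move, so that the hypothesis $l\cdot s\simeq l\cdot s'$ can be applied with $op$ held safely at the end.
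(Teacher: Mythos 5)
Your proof is correct and follows essentially the same chain of equivalences as the paper's own proof, which writes $l \cdot s_1 \cdot op \cdot s_2 \simeq l \cdot op \cdot s_1 \cdot s_2 \simeq l \cdot s_1 \cdot s_2 \cdot op \simeq l \cdot s_1' \cdot s_2' \cdot op \simeq l \cdot op \cdot s_1' \cdot s_2' \simeq l \cdot s_1' \cdot op \cdot s_2'$ and appeals to transitivity. Your only addition is making explicit the right-compositionality of $\simeq$ (that $u \simeq v$ implies $u \cdot w \simeq v \cdot w$), which the paper uses implicitly; this is a reasonable point of care but not a different argument.
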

\begin{proof}
    The equivalence follows by transitivity of the equivalence relation.
    
    $l \cdot s_1 \cdot op \cdot s_2 \simeq l \cdot op \cdot s_1 \cdot s_2 \simeq l \cdot s_1 \cdot s_2 \cdot op \simeq l \cdot s_1' \cdot s_2' \cdot op \simeq l \cdot op \cdot s_1' \cdot s_2' \simeq l \cdot s_1' \cdot op \cdot s_2'$
\end{proof}

A key result of Lemma~\ref{lemma:equivalence-preservation} with $s=s'$, used throughout this paper, is that if an operation commutes with every subset of a set, every placement of this operation in this set is equivalent.

\subsection{Definitions}
The idea behind a linearization, or linearizability in general, is that each operation can be viewed as taking effect atomically at a single moment between its invocation and its response. In many cases, this perspective is straightforward, as one can identify a step where an operation takes effect. In other cases, however, this perspective serves only as an abstraction, since no single concrete point in time can be pinpointed.

Our definition uses this abstraction to define possible configurations during the execution of an operation. Suppose that a linearizable history $H$ is assigned a \emph{linearization point} to each of its operations, i.e., the moment between its invocation and its response where the operation an be thought of taking effect. Then, the notion of a system state during an operation $op$ follows naturally. Take a prefix of that enriched history that includes the invocation of $op$ but not its linearization point. The system state consists of two elements.
\begin{enumerate}
    \item The sequence of operations according to their linearization points.\footnote{Note that for a given sequential specification a sequence of operations defines the state of the object.}
    \item The set of all operations that have been invoked but not linearized.
\end{enumerate}

\begin{definition}
    \label{def:system-state}
    Let $H$ be a history and $op$ an operation.
    A \textbf{system state during $op$ in $H$} is a pair $(l,O)$, where $l$ is a sequence and $O$ a set of operations, such that:
    \begin{enumerate}
        \item There exists a prefix $H'$ of $H$ with $op.inv \in H'$ and $op.res \notin H'$.
        \item There exists a linearization $L$ of $H$ such that $l$ is a prefix of $L$ and a linearization of $H'$ that does not include $op$.
        \item The set $O$ consists precisely of those operations $op'$ for which $op'.inv \in H'$, $op' \notin l$, and $op' \neq op$.
    \end{enumerate}
\end{definition}

Note that the linearization $l$ in Definition~\ref{def:system-state} may also contain operations whose responses are not included in $H'$, provided they precede $op$ in $L$.

With the definition of system states in place, the notion of dynamic concurrency can be introduced. 
Intuitively, a dynamically concurrent implementation of an object uses a strong synchronization primitive while executing an operation $op$, only when, during its execution, it could have been in a system state where $op$ does not commute with every subset of the concurrent operations in that current state.

\begin{definition}
    \label{def:dynamic-concurrency}
    An implementation $I$ of an object is \textbf{dynamically concurrent} if for every history $H$ induced by a run of $I$ and every operation $op$ in $H$, the following holds:

    Process $op.p$ uses a strong synchronization primitive while executing $op$.
    
    $$\Longrightarrow$$
    
    There exists a system state $(l, O)$ during $op$ in $H$ such that there exists a subset $O' \subseteq O$ with which $op$ does not commute in $l$.
\end{definition}

While the above definitions formalize the concepts of system state and dynamic concurrency, their intuition may not be immediately apparent. The following example demonstrates how these notions manifest in a concrete execution.

\subsection{Example}
To illustrate the concepts of system state and dynamic concurrency, consider the history $H$ depicted in Figure~\ref{fig:history}. 
The history is induced by a run of $4$ processes executing operations on a shared \emph{list} object.
This object exports the methods $\textit{append}(v)$, $\textit{readLast}()$, $\textit{readAll}()$ and $\textit{swap}(i,j)$ for $v\in\{a,b,c,d\}$ and $0 \leq i \leq j$. The method $\textit{append}(v)$ appends $v$ at the end of the list and returns $ok$; $\textit{readLast}()$ returns the value at the last index of the list, $\bot$ for an empty list; $\textit{readAll}()$ returns the whole list; $\textit{swap}(i,j)$ swaps the values at indices $i$ and $j$ and returns $ok$ if the list has at least $j-1$ elements, otherwise it has no effect and returns $\bot$.


\begin{figure}[htbp]
    \centering
    \resizebox{\dimexpr0.9\textwidth}{!}{
        \begin{tikzpicture}[every node/.style={font=\scriptsize}, yscale=1.35, xscale=0.5]
        
        \foreach \i in {0,...,3} {\node[left, font=\small] at (0,-\i) {$p_\i$};}
        
        \draw (0,0) -- (13,0);
        
        \node (p0a) at (13,0) {[};
        \node[above=3pt, anchor=south west, xshift=-5pt] at (p0a) {$\textit{append}(c)$};
        \draw[->, dotted] (13,0) -- (17,0);
        
        \draw (0,-1) -- (4,-1);
        
        \node (p1a) at (4,-1) {[};
        \node[above=3pt, anchor=south west, xshift=-5pt] at (p1a) {$\textit{append}(b)$};
        \draw[dotted] (4,-1) -- (6,-1);
        \node (p2b) at (6,-1) {]};
        \node[below=3pt] at (p2b) {$[ok]$};
        
        \draw (6,-1) -- (12,-1);
        
        \node (p1c) at (12,-1) {[};
        \node[above=3pt, anchor=south west, xshift=-5pt] at (p1c) {$\textit{append}(d)$};
        \draw[dotted] (12,-1) -- (15,-1);
        \node (p2d) at (15,-1) {]};
        \node[below=3pt] at (p2d) {$[ok]$};
        
        \draw[->] (15,-1) -- (17,-1);
        
        \draw (0,-2) -- (1,-2);

        \node (p2a) at (1,-2) {[};
        \node[above=3pt, anchor=south west, xshift=-5pt] at (p2a) {$\textit{append}(a)$};
        \draw[dotted] (1,-2) -- (3,-2);
        \node (p2b) at (3,-2) {]};
        \node[below=3pt] at (p2b) {$[ok]$};
        
        \draw (3,-2) -- (9,-2);
        
        \node (p2c) at (9,-2) {[};
        \node[above=3pt, anchor=south west, xshift=-5pt] at (p2c) {$\textit{readLast}()$};
        \draw[dotted] (9,-2) -- (11,-2);
        \node (p2d) at (11,-2) {]};
        \node[below=3pt] at (p2d) {$[a]$};
        
        \draw (11,-2) -- (14,-2);
        
        \node (p2e) at (14,-2) {[};
        \node[above=3pt, anchor=south west, xshift=-5pt] at (p2e) {$\textit{readAll}()$};
        \draw[dotted] (14,-2) -- (16,-2);
        \node (p2f) at (16,-2) {]};
        \node[below=3pt] at (p2f) {$[[a,b,a]]$};
        
        \draw[->] (16,-2) -- (17,-2);
        
        \draw (0,-3) -- (5,-3);
        
        \node (p3a) at (5,-3) {[};
        \node[above=3pt, anchor=south west, xshift=-5pt] at (p3a) {$\textit{append}(a)$};
        \draw[dotted] (5,-3) -- (7,-3);
        \node (p3b) at (7,-3) {]};
        \node[below=3pt] at (p3b) {$[ok]$};
        
        \draw (7,-3) -- (10,-3);
        
        \node (p3c) at (10,-3) {[};
        \node[above=3pt, anchor=south west, xshift=-5pt] at (p3c) {$\textit{swap}(0,2)$};
        \draw[->, dotted] (10,-3) -- (17,-3);

        \fill (5,-1) circle (3pt);
        
        \fill (2,-2) circle (3pt);
        
        \fill (6,-3) circle (3pt);

        \end{tikzpicture}
    }
    \caption{An illustration of a history $H$ with $4$ processes using a shared list object. The response of $p_2.\textit{readAll}()$ requires every linearization to order $p_1.\textit{append}(b)$ before $p_3.\textit{append}(a)$. Note that $p_3.\textit{swap}(0,2)$ commutes with $p_2.\textit{readLast}()$ in $\big(p_2.\textit{append}(a),p_1.\textit{append}(b),p_3.\textit{append}(a)\big)$. Moreover, for every system state during $p_3.\textit{swap}(0,2)$ in $H$, $p_3.\textit{swap}(0,2)$ commutes with every subset of concurrent operations in the respective state of the object. Dynamic concurrency forbids $p_3$ to use a strong synchronization primitive during the execution of $p_3.\textit{swap}(0,2)$.}
    \label{fig:history}
\end{figure}
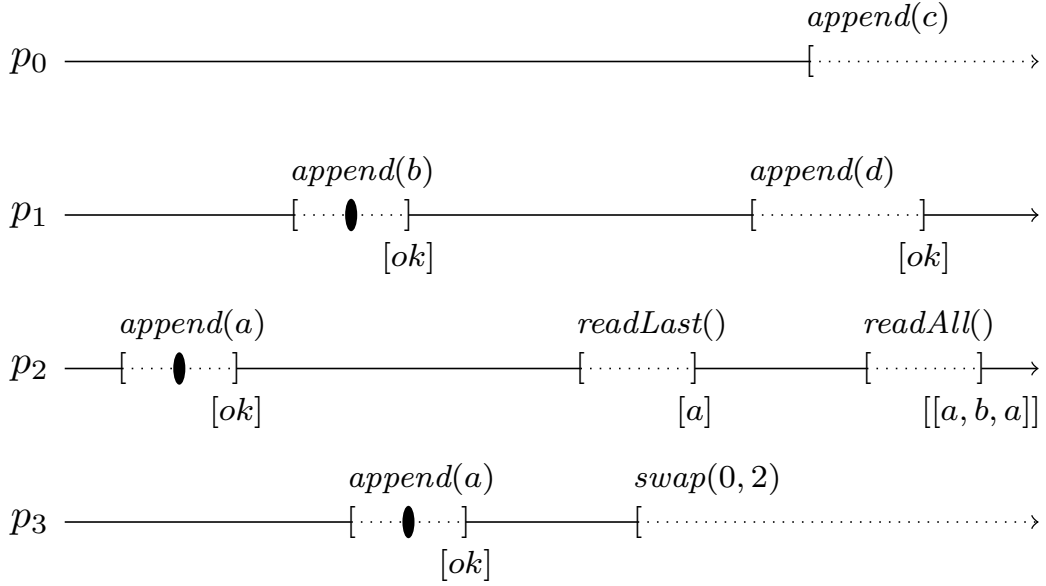


Let us now examine the system states during $op := p_3.\textit{swap}(0,2)$ in $H$.
To determine these states, we consider all possible linearizations $L$ of $H$.
The response of $p_2.\textit{readAll}()$ requires that in every linearization, $p_1.\textit{append}(b)$ is ordered before $p_3.\textit{append}(a)$.
Hence, $\big(p_2.\textit{append}(a)$, $p_1.\textit{append}(b)$, $p_3.\textit{append}(a)\big)$ forms a prefix of every linearization.

This constraint stems from the choice made in Definition~\ref{def:system-state} to require $l$ to be a prefix of a linearization of the entire history $H$, rather than simply a linearization of a prefix $H'$. The motivation for this choice is to obtain a more refined representation of the implementation's internal state. In particular, the definition assumes that after certain operations have terminated (such as $p_1.\textit{append}(b)$ and $p_3.\textit{append}(a)$ in our example), the internal state of the implementation has either already determined their relative order, even though this ordering is not yet externally observable, or the implementation will never establish such an order at all (in the case of equivalent orderings).

Furthermore, in every linearization $p_2.\textit{readLast}()$ precedes $p_2.\textit{readAll}()$, which in turn precedes $p_0.\textit{append}(c)$ (if present) and $p_1.\textit{append}(d)$.
Excluding $op$ for the moment, the possible linearizations are:
\begin{enumerate}
    \item $\textit{pre} \cdot \big(p_0.\textit{append}(c),p_1.\textit{append}(d)\big)$,
    \item $\textit{pre} \cdot \big(p_1.\textit{append}(d),p_0.\textit{append}(c)\big)$,
    \item $\textit{pre} \cdot \big(p_1.\textit{append}(d)\big)$,
\end{enumerate}
where $\textit{pre}:=\big(p_2.\textit{append}(a),p_1.\textit{append}(b),p_3.\textit{append}(a),p_2.\textit{readLast}(),p_2.\textit{readAll}()\big)$.
All $14$ linearizations that include $op$ can be obtained by inserting it at any position after the first three operations.
However, since we are only interested in those prefixes of linearizations that do not include $op$, we can simply omit it.
According to Definition~\ref{def:system-state}, we obtain the following system states during $op$ in $H$.
\begin{enumerate}
    \item $l_1=\big(p_2.\textit{append}(a),p_1.\textit{append}(b),p_3.\textit{append}(a)\big)$ with $O_1=\big\{p_2.\textit{readLast}()\big\}$,
    \item $l_2=\big(p_2.\textit{append}(a),p_1.\textit{append}(b),p_3.\textit{append}(a),p_2.\textit{readLast}()\big)$ with $O_2=\big\{\big\}$,
    \item $l_3=\big(p_2.\textit{append}(a),p_1.\textit{append}(b),p_3.\textit{append}(a),p_2.\textit{readLast}()\big)$ with $O_3=\big\{p_1.\textit{append}(d)\big\}$,
    \item $l_4=\big(p_2.\textit{append}(a),p_1.\textit{append}(b),p_3.\textit{append}(a),p_2.\textit{readLast}()\big)$\\
    with $O_4=\big\{p_1.\textit{append}(d),p_0.\textit{append}(c)\big\}$,
    \item $l_5 =\big(p_2.\textit{append}(a),p_1.\textit{append}(b),p_3.\textit{append}(a),p_2.\textit{readLast}()\big)$\\
    with $O_5 =\big\{p_1.\textit{append}(d),p_0.\textit{append}(c),p_2.\textit{readAll}()\big\}$,
    \item $l_6 =\textit{pre}$ with $O_6 =\big\{p_1.\textit{append}(d),p_0.\textit{append}(c)\big\}$,
    \item $l_7 =\textit{pre} \cdot \big(p_0.\textit{append}(c)\big)$ with $O_7 =\big\{p_1.\textit{append}(d)\big\}$,
    \item $l_8 =\textit{pre} \cdot \big(p_1.\textit{append}(d)\big)$ with $O_8 =\big\{p_0.\textit{append}(c)\big\}$,
    \item $l_9 =\textit{pre} \cdot \big(p_0.\textit{append}(c),p_1.\textit{append}(d)\big)$ with $O_9 =\big\{\big\}$,
    \item $l_{10} =\textit{pre} \cdot \big(p_1.\textit{append}(d),p_0.\textit{append}(c)\big)$ with $O_{10} =\big\{\big\}$.
\end{enumerate}

According to Definition~\ref{def:dynamic-concurrency}, dynamic concurrency forbids $p_3$ to use strong synchronization primitives during the execution of $op$, since in every system state $(l_i, O_i)$ $op$ commutes in $l_i$ with every subset of $O_i$.\footnote{The condition for all $i \in [1,10]$ is equivalent to requiring it only for $i \in \{1,5\}$.} Note that this holds even though $p_0.\textit{append}(c)$ and $p_1.\textit{append}(d)$ do not commute in any state.

Similarly, $p_2$ cannot use strong synchronization primitives during the execution of either $p_2.\textit{append}(a)$ or $p_2.\textit{readLast}()$.

As discussed above, the response of $p_2.\textit{readAll}()$ enforces that $p_1.\textit{append}(b)$ must be ordered before $p_3.\textit{append}(a)$. Suppose, however, that $p_2.\textit{readAll}()$ were absent from the history. In that case, $p_3.\textit{append}(a)$ could instead be ordered before $p_1.\textit{append}(b)$, since the response of $p_2.\textit{readLast}()$ would remain correct provided that $p_3.\textit{swap}(0,2)$ is ordered before $p_2.\textit{readLast}()$. In this scenario, $p_3.\textit{swap}(0,2)$ no longer commutes with $p_2.\textit{readLast}()$, and consequently $p_3$ would be permitted to use strong synchronization primitives while executing $p_3.\textit{swap}(0,2)$.

\section{Dynamically Concurrent Universal Construction}
\label{sec:protocol}
We now present our (wait-free and linearizable) dynamically concurrent universal construction. 
The universal construction takes the specification of a sequential object as a parameter and exports one method $\textit{publish}(op)$ that simulates the execution of an operation $op$ and returns a response of $op$.

A typical universal construction invokes multiple instances of consensus to agree on the order in which invoked operations are linearized. 
The response of each operation is then computed from this order using the object's sequential specification.

To exploit commutativity, we replace the linear ordering on operations with a shared \emph{dependency graph}. 
Intuitively, in the graph, an operation has incoming edges from every operation that should precede it.
Unrelated operations in the graph are then concurrent and commuting.
A linearization is obtained by topologically ordering this dependency graph.

For this algorithm to be correct, we essentially have to ensure the following conditions.
First of all, after the completion of $\textit{publish}(op)$ the operation $op$ has to be added to the dependency graph.
Secondly, this dependency graph has to be an acyclic graph to ensure that topological orderings exist.
Another requirement is that the topological orderings, and therefore the dependency graph, should respect the \emph{real-time ordering} of different invocations of $\textit{publish}$, i.e., the precedence relation of the induced history.
If $\textit{publish}(op)$ terminates before $\textit{publish}(op')$ is invoked, $op$ should precede $op'$ in every topological ordering, i.e., there must be an edge from $op$ to $op'$ in the dependency graph.
Furthermore, all possible topological ordering of this dependency graph should be \emph{equivalent}, as stated in Definition~\ref{def:equivalence-relation-of-orderings}. 
Intuitively, unrelated graph vertices should commute and, thus, no process can distinguish their different orderings.
Finally, we have to guarantee that this dependency graph is \emph{stable} in the sense that no response of an operation, according to any topological ordering, changes when the graph evolves. This is necessary, since such a response might have already been returned by some process as the response of that operation.

We now introduce the Announce-Book-Commit Graph, a core component of our algorithm.

\subsection{Announce-Book-Commit (ABC) Graph}\label{sec:protocol:abc-graph}
Before introducing the main part of the algorithm, we describe the principal underlying data structure used to implement the dependency graph. 
The ABC Graph composes three data structures $A$, $B$ and $C$ into one snapshot object to allow an atomic read of all three structures.

\begin{enumerate}
    \item $A$ is a set of operations.
    \item $B$ is a set of tuples $(op,b)$ where $op$ is an operation and $b$ is an integer.
    \item $C$ is a graph of operations, encoded as a set of tuples $(op,M)$, where $op$ is an operation and $M$ is a set of operations, interpreted as the set of operations that have outgoing edges to $op$. 
\end{enumerate}

The ABC Graph exports three methods $\textit{add}_A(op)$, $\textit{add}_B(op, b)$ and $\textit{add}_C(op, M)$ used to append elements to the corresponding sets in the following way.
The set $A$ is the set of all operations $op$ with some preceding $\textit{add}_A(op)$ operation.
The set $B$ is the set of all tuples such that $(op,b) \in B$ if and only if there exists some preceding $\textit{add}_B(op,b)$ and $b$ is minimal among all preceding operations $\textit{add}_B(op,\cdot)$.
The graph $C$ is the set of all tuples such that $(op,M) \in C$ if and only if there exists some preceding operations $\textit{add}_C(op,M_j)$ and $M = \bigcup_{j} M_j$ for all preceding operations $\textit{add}_C(op,M_j)$.
The method $\textit{read}()$ returns the current state of the three data structures.
Note that, even though $\textit{add}_B(op,\cdot)$ and $\textit{add}_C(op,\cdot)$ may be called multiple times for the same operation $op$, the components $B$ and $C$, as returned by $\textit{read}()$, include at most one tuple per operation: the minimum value for $B$ and the union of all sets for $C$.

We say that an operation is \emph{announced} once it is added to $A$, \emph{booked} once it is added to $B$, and \emph{committed} once it is added to $C$.
Correspondingly, we say that a process is \emph{announcing}, \emph{booking}, or \emph{committing} an operation while executing $\textit{add}_A$, $\textit{add}_B$, or $\textit{add}_C$, respectively.
In particular, when we say that an operation $op$ is being \emph{committed for the first time}, we mean that some process is currently executing $\textit{add}_C(op,\cdot)$ that adds $op$ into $C$ for the first time.
We denote by $\textit{vertices}(B)$ (resp., $\textit{vertices}(C)$) the set of operations that appear in $B$ (resp., $C$).
Note that operations appear only as the first element of a tuple in $B$ but in both parts of a tuple in $C$.
Furthermore, we denote by $\textit{edges}(C)$ the set of edges of $C$, i.e., $(op,op') \in \textit{edges}(C)$ if and only if there exists $(op',M) \in C$ with $op \in M$.
The value associated with an operation $op$ in $B$ is called its \emph{$B$-value} and is denoted by $B[op]$.
If $op \not\in \textit{vertices}(B)$, we say that \emph{$op$ has no $B$-value}, and we define, by convention, $B[op]$ to be $\infty$. 

Given the specification of an ABC Graph, we now discuss its interpretation and use within our universal construction.
Each invocation of $\textit{publish}(op)$ announces its own operation $op$ by adding it to the set $A$, which thus contains every operation invoked so far.
For each operation $op$, the set $B$ records information about the recency of the ABC Graph that $\textit{publish}(op)$ has observed (namely, the size of $A$).
In our construction, each invocation $\textit{publish}(op)$ calls $\textit{add}_A(op)$ and $\textit{add}_B(op,\cdot)$ at most once, and only for its own parameter.
Consequently, the aggregation mechanism for $B$ exists solely for formal completeness and is not needed in practice.
Graph $C$ is interpreted as the dependency graph described above.

\begin{algorithm}[ht]
\caption{Announce-Book-Commit Graph}
\label{algo:abc-graph}
\footnotesize

\BlankLine
\BlankLine

\SetKwProg{Fn}{Local variables}{}{}
\Fn{}{
    $R_A \gets \{\:\}$, set of operations\;
    $R_B \gets \{\:\}$, set of tuples (operation, integer)\;
    $R_C \gets \{\:\}$, set of tuples (operation, set of operations)\;
}

\BlankLine
\BlankLine

\SetKwProg{Fn}{Shared objects}{}{}
\Fn{}{
    $S$: Snapshot object with $S_A[j], S_B[j], S_C[j]$ for process $p_j$\; \label{line:abc:snapshot-object}
}

\BlankLine
\BlankLine

\tcc{$op$: operation}
\SetKwProg{Fn}{Process $p_i$ upon $\boldsymbol{\textit{add}_A(op)}$}{}{}
\Fn{}{ \label{line:abc:addA}
    $R_A \gets R_A \cup \{op\}$\; \label{line:abc:addA:local}
    $S_A[i].\textit{write}(R_A)$\; \label{line:abc:addA:write}
}

\BlankLine
\BlankLine

\tcc{$op$: operation \newline $b$: integer}
\SetKwProg{Fn}{Process $p_i$ upon $\boldsymbol{\textit{add}_B(op, b)}$}{}{}
\Fn{}{ \label{line:abc:addB}
    $R_B \gets R_B \cup \{(op,b)\}$\; \label{line:abc:addB:local}
    $S_B[i].\textit{write}(R_B)$\; \label{line:abc:addB:write}
}

\BlankLine
\BlankLine

\tcc{$op$: operation \newline $M$: set of operations}
\SetKwProg{Fn}{Process $p_i$ upon $\boldsymbol{\textit{add}_C(op,M)}$}{}{}
\Fn{}{ \label{line:abc:addC}
    $R_C \gets R_C \cup \{(op,M)\}$\; \label{line:abc:addC:local}
    $S_C[i].\textit{write}(R_C)$\; \label{line:abc:addC:write}
}

\BlankLine
\BlankLine

\SetKwProg{Fn}{Process $p_i$ upon $\boldsymbol{\textit{read}()}$}{}{}
\Fn{}{ \label{line:abc:read}
    $T \gets S.\textit{snapshot}()$\; \label{line:abc:read:snapshot}
    \tcc{Group by $op$ and merge: \newline
    in $T_B$ - use $\min$ for integers \newline
    in $T_C$ - use $\cup$ for sets}
    \Return $\bigcup_{j} T_A[j],\ \bigcup_{j} T_B[j],\ \bigcup_{j} T_C[j]$\;
}

\BlankLine
\BlankLine

\end{algorithm}

Algorithm~\ref{algo:abc-graph} provides a wait-free and linearizable implementation of the ABC Graph. The proof is deferred to Appendix~\ref{sec:appendix:abc-graph}.
In the implementation, processes store invocations of $\textit{add}$ operations in local registers (lines~\ref{line:abc:addA:local},~\ref{line:abc:addB:local} and~\ref{line:abc:addC:local}) and write them to a shared snapshot object. The $\textit{read}$ operation aggregates all invocations of $\textit{add}$ to $A$, $B$ and $C$ as described above.

\subsection{Dynamically Concurrent Universal Construction}
In the following, we provide intuition for how the dynamically concurrent universal construction in Algorithm~\ref{algo:universal-construction} operates: how it places an operation $op$ correctly into the dependency graph, and why consensus is used only when necessary, satisfying dynamic concurrency (Definition~\ref{def:dynamic-concurrency}).
Formal proofs of linearizability, wait-freedom, and dynamic concurrency are deferred to the Appendices~\ref{sec:appendix:uc:linearizability},~\ref{sec:appendix:uc:wait-freedom}, and~\ref{sec:appendix:uc:dynamic-concurrency}.

\begin{algorithm}[ht]
\caption{Dynamically Concurrent Universal Construction}
\label{algo:universal-construction}
\footnotesize
\BlankLine
\BlankLine
\SetKwProg{Fn}{Shared objects}{}{}
\Fn{}{
    $K$: Snapshot object with $K[j]$ for process $p_j$, initially $[0,\dots,0]$\; \label{line:uc:snapshot-object}
    $G$: Announce-Book-Commit Graph\; \label{line:uc:abc-graph-object}
    $\textit{CONS}_j$: Consensus objects for $j \geq 1$\; \label{line:uc:consensus-objects}
}
\BlankLine
\BlankLine
\SetKwProg{Fn}{Process $p_i$ upon \textit{publish}($op$)}{}{} 
\Fn{}{\label{line:uc:propose}
    $G.\textit{add}_A(op)$\; \label{line:uc:addA}
    $(A,B,C) \gets G.\textit{read}()$\; \label{line:uc:read-1}
    $G.\textit{add}_B(op,|A|)$\; \label{line:uc:addB}
    $(A,B,C) \gets G.\textit{read}()$\; \label{line:uc:read-2}
    $l \gets \textit{linearize}(C)$\; \label{line:uc:linearize-C}
    \If{$op \in \textit{vertices}(C)$}{ \label{line:uc:checkC}
        \Return result of $op$ in $l$\; \label{line:uc:return-1}
    }
    \If{every subset of $A - \textit{vertices}(C) - \{op\}$ commutes with $op$ in $l$}{ \label{line:uc:commutativity}
        $G.\textit{add}_C(op,\textit{vertices}(C))$\; \label{line:uc:addC}
        \Return result of $op$ in $l \cdot (op)$\; \label{line:uc:return-2}
    }
    \tcc{ $\textit{CONFLICT RESOLUTION}$}
    $L \gets K.\textit{snapshot}()$\; \label{line:uc:rc:read-k}\label{line:uc:rc:start}
    $k \gets \max_j L[j]$\; \label{line:uc:rc:max-k}
    \While{$\textit{true}$}{ \label{line:uc:rc:loop}
        $k \gets k + 1$\; \label{line:uc:rc:inc-k}
        $(A,B,C) \gets G.\textit{read}()$\; \label{line:uc:rc:read-1}
        \If{$op \in C$}{ \label{line:uc:rc:check-before-return}
        \Return result of $op$ in $\textit{linearize}(C)$\;\label{line:uc:rc:return}
        }
        $\tilde{op} \in \arg\min_{op' \in B - \textit{vertices}(C)} B[op']$\; \label{line:uc:rc:select}
        $\hat{op} \gets \textit{CONS}_k.\textit{propose}(\tilde{op})$\; \label{line:uc:rc:consensus}
        $(A,B,C) \gets G.\textit{read}()$\; \label{line:uc:rc:read-2}
        \If{$\hat{op} \not\in \textit{vertices}(C)$}{ \label{line:uc:rc:check-before-addC}
            $G.\textit{add}_C(\hat{op},\textit{vertices}(C))$\; \label{line:uc:rc:addC}
        }
        $K[i].\textit{write}(k)$\; \label{line:uc:rc:write-k}\label{line:uc:rc:end}
    }
}
\BlankLine
\BlankLine
\end{algorithm}

Once a process proposes an operation $op$ (line~\ref{line:uc:propose}), it first adds it to set $A$ (line~\ref{line:uc:addA})
to make it visible to concurrent and future operations. 
Then it takes a snapshot (line~\ref{line:uc:read-1}) of the current state ($A$, $B$ and $C$). 
By construction, graph $C$ contains the previously committed operations. 
The operations in $A-\textit{vertices}(C)$ are therefore concurrent. 
Let us call this set $\textit{CUR}$. 
Note that some operations in $\textit{CUR}$ may never be committed, as they may be issued by processes that have later crashed.

After reading the state for the first time, $op$ is added to $B$ (line~\ref{line:uc:addB}) together with $|A|$, the cardinality of $A$.
Intuitively, $|A|$ indicates how many operations $\textit{publish}(op)$ has encountered at its first read, indicating the recency of its view of the system. This helps later to prioritize operations that may have missed conflicts.
After announcing and booking the operation, the process reads the state again (line~\ref{line:uc:read-2}), which updates its local values for $A$, $B$ and $C$ (and implicitly $\textit{CUR}$).
The graph $C$ is then topologically ordered (line $\ref{line:uc:linearize-C}$). 
We denote this ordering by $l$.
We assume inductively that all topological orderings of $C$ are equivalent.
If $op$ is already committed (line~\ref{line:uc:checkC}), the result of the operation in $l$ (according to the sequential specification) is returned (line~\ref{line:uc:return-1}); otherwise, the process checks whether $op$ commutes with every subset of $\textit{CUR} - \{op\}$ in $l$ (line~\ref{line:uc:commutativity}).
Verifying commutativity with \emph{every} subset of $\textit{CUR} - \{op\}$ accounts for the possibility that some operations in $\textit{CUR}$ may never be committed due to crash failures.
Moreover, Lemma~\ref{lemma:equivalence-preservation} guarantees that, for every equivalent ordering of $\textit{CUR} - \{op\}$ in $l$, every possible insertion point of $op$ within those orderings is equivalent in $l$, as formalized in Definition~\ref{def:equivalence-relation-of-orderings}.
If commutativity is verified, $\textit{publish}(op)$ commits $op$ (line~\ref{line:uc:addC}) and returns its response in $l \cdot (op)$ (line~\ref{line:uc:return-2}).
Otherwise, ordering of concurrent operations is required and conflict resolution (lines~\ref{line:uc:rc:start}-\ref{line:uc:rc:end}), where consensus is used, is executed.

Conflict resolution proceeds in iterations. 
In each iteration, the process reads the state (line~\ref{line:uc:rc:read-1}), selects an operation with the minimal $B$-value (among all non-committed operations) (line~\ref{line:uc:rc:select}) and proposes it to the next consensus object (line~\ref{line:uc:rc:consensus}).
The state is read again (line~\ref{line:uc:rc:read-2}) and unless the operation $\hat{op}$ that has been returned by the consensus object is already committed (line~\ref{line:uc:rc:check-before-addC}), it is committed by $\textit{publish}(op)$ in line~\ref{line:uc:rc:addC}.

Each operation $\tilde{op}$ that is proposed to a consensus object, and therefore each operation that is committed in conflict resolution, has minimal $B$-value, according to a version of the ABC Graph that was read by some process executing conflict resolution.
In the following, we are interested in the $B$-values according to this version of the ABC Graph.
As can be easily verified, an operation is added to $B$ at most once, and only by its corresponding invocation of $\textit{publish}$.
Consequently, the $B$-values of different versions of the ABC Graph differ only in the set of operations that have a (finite) $B$-value.
Once an operation has a (finite) $B$-value, it is constant.
Since we consider, by construction, a version of the ABC Graph where $\tilde{op}$ has a minimal $B$-value (among all non-committed operations), there only exist operations with a higher $B$-value or operations that do not have a $B$ value yet.
Each operation that already has, or can obtain, a $B$-value of at most that of $\tilde{op}$ has been encountered by $\textit{publish}(\tilde{op})$ in line~\ref{line:uc:read-2}.\footnote{Since the ABC Graph is linearizable, we can order all invocations of $\textit{add}_A$. Let $op_i$ be the $i$-th operation added to $A$. Since the ABC Graph is read again after executing $\textit{add}_A(op_i)$, $op_i$ will have a $B$-value of at least $i$, once it is booked. An operation that already has, or can obtain, a $B$-value of at most $i$, has to be among the first $i$ operations added to $A$, and has therefore been encountered in line~\ref{line:uc:read-2} by an operation with a $B$-value of $i$.}

One can notice that $\textit{publish}(\tilde{op})$ might have not encountered (in line~\ref{line:uc:read-2}) concurrent operations with higher (possibly infinite) $B$-values.
Therefore, $\textit{publish}(\tilde{op})$ might have encountered no conflict and already passed the check of commutativity (line~\ref{line:uc:commutativity}). 
This is the reason why operations with minimal $B$-value are prioritized in conflict resolution.
Operations with no $B$-value, i.e., operations that are not booked yet, will reread the state (line~\ref{line:uc:read-2}) before committing and therefore encounter every operation that already has a $B$-value.
Thus, an operation with no $B$-value will either encounter a conflict too or find every conflict already resolved in $C$.
Therefore, operations that are not in $B$, i.e. with no $B$-value, need not to be prioritized in the conflict resolution.
In summary, an operation with minimal $B$-value (among concurrent operations) in some version of the ABC Graph has been or will be encountered in line~\ref{line:uc:read-2} by $\textit{publish}(op^*)$ for every non-committed operation $op^*$. Such an operation can therefore be proposed to be committed in conflict resolution without risking conflicting operations to miss it.

Clearly, when an operation is proposed to $\textit{CONS}_k$, every previous $\textit{CONS}_m$ with $m<k$ stores an operation and those operations are committed. 
This does not imply that an operation stored in $\textit{CONS}_k$, once committed, will include all of those operations, stored in previous consensus objects, as dependencies. This is due to the fact, that operations can be concurrently committed in the conflict-free path, while being proposed to a consensus object.
However, it is easy to see that if an operation $op$ is committed in the conflict resolution path, every operation \emph{committed in the conflict resolution path} using a later consensus object will include $op$ in its dependencies.

The algorithm uses a snapshot object (line~\ref{line:uc:snapshot-object}) that records the last completed iteration of every process, allowing processes to skip iterations that have already been completed by any process (lines~\ref{line:uc:rc:read-k},~\ref{line:uc:rc:max-k},~\ref{line:uc:rc:inc-k} and~\ref{line:uc:rc:write-k}). This does not affect correctness, as re-executing a completed iteration has no effect.

\myparagraph{Sketch of Linearizability Proof} Algorithm~\ref{algo:universal-construction} ensures linearizability by checking commutativity among operations and resorting to consensus when conflicts arise. No two conflicting operations can proceed along the conflict-free path simultaneously, because at least one operation will detect the other. Operations that could have ``missed'' conflicts are prioritized, when committing operations in the conflict resolution path, to maintain a consistent view among all processes. While a conflict is being resolved, other conflicting operations also enter conflict resolution, whereas non-conflicting operations continue along the conflict-free path. The resulting dependency graph has topological orderings that are equivalent, and each such ordering serves as a linearization of the history.

\myparagraph{Sketch of Wait-Freedom Proof} Under the assumption that the ABC Graph is wait-free, any invocation of $\textit{publish}$ that encounters no conflict terminates trivially after a finite number of steps. For invocations that perform conflict resolution, the process skips iterations that have already been completed. Given at most $c \leq n$ concurrent processes, the corresponding operation is stored in a consensus object after at most $c+1$ iterations, unless it is concurrently committed via the conflict-free path. In all cases, $\textit{publish}$ completes after a finite number of steps.

\myparagraph{Sketch of Dynamic Concurrency Proof} When reading the ABC Graph (line~\ref{line:uc:read-2}), the tuple $\big(\textit{linearize}(C),A-\textit{vertices}(C)-\{op\}\big)$ essentially represents a system state.
It can be shown that $l := \textit{linearize}(C)$ is a linearization of the history induced by the current run, and that for every possible extension of the run the induced history has a linearization such that $l$ is a prefix of that linearization.
Furthermore, $A-\textit{vertices}(C)-\{op\}$ consists of concurrent operations $op'$ such that $op' \not\in \textit{linearize}(C)$ and $op' \neq op$, as required by Definition~\ref{def:system-state}.
Note that $A-\textit{vertices}(C)-\{op\}$ might be a subset of all concurrent operations, which only restricts the cases where consensus is invoked. Therefore, the check performed in line~\ref{line:uc:commutativity} directly corresponds to the dynamic concurrency condition: consensus is used only when the operation does not commute with some subset of the concurrent operations in that system state.

As mentioned above, formal proofs of linearizability, wait-freedom, and dynamic concurrency are provided in the Appendices~\ref{sec:appendix:uc:linearizability},~\ref{sec:appendix:uc:wait-freedom}, and~\ref{sec:appendix:uc:dynamic-concurrency}.

\section{Discussion}
\label{sec:conc}
In this paper, we settle down the very \emph{possibility} of implementing a dynamically concurrent universal construction that only resorts to consensus if the concurrent operations are in conflict, given the current object state.

The next question is what the \emph{cost} of detecting and resolving dynamic conflicts is.
In our construction, non-conflicting operations complete in two write-read rounds.
However, both the communication overhead of these rounds and the local computation can become significant.
The latter is exponential in the number of concurrent operations, as commutativity must be verified for every subset of those.
In contrast, consensus itself may not be particularly costly, especially in a shared-memory setting, which suggests a trade-off between minimizing synchronization through dynamic conflict detection and the total computational and communication costs associated with it.
It may well be that these costs outweigh the benefits gained from avoiding consensus in conflict-free scenarios.
It is therefore appealing to undertake comparative performance analysis of universal-construction algorithms, with and without dynamic conflict resolution.

It is important to note that our construction makes no assumptions about the type of operations. 
It is common to avoid synchronization in \emph{read-only} workloads, and the same optimization is possible here. 
Read-only operations do not need to be added to the dependency graph, since other processes do not need to be aware of them. 
They can compute their responses directly based on an up-to-date version of the dependency graph, proceeding efficiently while preserving correctness.

The current definition of a concurrent operation accounts for \emph{interval contention}~\cite{solofast}---operations are considered concurrent if their intervals overlap.
A more refined definition accounts for \emph{step contention} which only accounts for \emph{concurrently active} (taking steps) operations~\cite{HLM03-of,solofast}.
It is tempting to design a universal construction that uses consensus only when a concurrently active conflicting operation is detected.
We conjecture, however, that this might not be possible.\footnote{
Note that every sequential objects admits a \emph{solo-fast} implementation~\cite{solofast}, in which operations complete without using strong synchronization primitives in the absence of step contention.
However, using a solo-fast consensus object implementation in our algorithm would not achieve the desired effect.
Even if no conflicting operation takes steps concurrently, some commuting operations could still introduce step contention on the solo-fast consensus implementation.
}

This work is close in spirit to the idea of \emph{optimal concurrency}~\cite{oc-sirocco,oc-pact}, an earlier attempt to grasp the \emph{minimal} use of synchronization in concurrent data structures.
By na\"ively running the \emph{sequential} implementation of an object in the concurrent environment, we obtain, intuitively, the throughput baseline: no implementation is likely to perform better~\cite{DGT15}. 
We can turn this (most probably incorrect) sequential implementation into a correct concurrent one by introducing synchronization primitives. 
One can then reason about \emph{the amount of concurrency} the implementation exhibits using  the set of correct (linearizable) concurrent \emph{schedules} it accepts.
An \emph{optimally concurrent} implementation~\cite{oc-sirocco} rejects a schedule only if it results in an  \emph{incorrect} (i.e., not linearizable) execution.
Examples of optimally concurrent implementations of concrete data structures are given in~\cite{oc-sirocco,oc-pact}.
In contrast, in this paper, we rely on the sequential \emph{specification} of the object, rather then on a concrete sequential implementation of it. 
This makes our approach more general, even though we do not argue that the use of consensus is minimal in any sense.
It is intriguing to explore the issue of ``optimality'' further, e.g., can one argue that no algorithm can use less strong synchronization? 

Finally, dynamic conflict resolution can become an important tool in Byzantine fault-tolerant universal constructions~\cite{bitcoin,ethereum}, in particular in DAG-based blockchains~\cite{dagrider,bullshark,shoal}.


\bibliographystyle{abbrv}
\bibliography{references}

\clearpage

\appendix

\section{ABC Graph - Proof of Linearizability and Wait-freedom}
\label{sec:appendix:abc-graph}
In this section, we prove in Lemma~\ref{lemma:abc-graph:linearizability} that the implementation of the ABC Graph given in Algorithm~\ref{algo:abc-graph} is both linearizable and wait-free, assuming that the underlying snapshot object $S$ (line~\ref{line:abc:snapshot-object}) itself is linearizable and wait-free.

\begin{lemma}
    \label{lemma:abc-graph:linearizability}
    Algorithm~\ref{algo:abc-graph} is a wait-free linearizable implementation of an ABC Graph.
\end{lemma}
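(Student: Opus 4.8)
The plan is to establish wait-freedom first, then linearizability, by reducing both to the corresponding properties of the underlying snapshot object $S$. For wait-freedom: each of the four methods $\textit{add}_A$, $\textit{add}_B$, $\textit{add}_C$, and $\textit{read}$ performs a bounded amount of local computation (a single set union, or a finite aggregation over the $n$ components returned by $S.\textit{snapshot}()$) plus exactly one invocation of a method on $S$ (either $S_A[i].\textit{write}$, $S_B[i].\textit{write}$, $S_C[i].\textit{write}$, or $S.\textit{snapshot}()$). Since $S$ is assumed wait-free, each such invocation returns in finitely many steps, and hence so does every method of the ABC Graph, regardless of the behavior of other processes. The aggregation in $\textit{read}$ (grouping tuples by operation, taking $\min$ of the integers in $T_B$, taking $\bigcup$ of the sets in $T_C$) is finite because each component $T_X[j]$ is a finite set at any point in a run.

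For linearizability, I would exhibit an explicit linearization for every run $R$ of Algorithm~\ref{algo:abc-graph}. The natural choice is: linearize each $\textit{add}_X(\cdot)$ operation at the point where its internal $S_X[i].\textit{write}$ is linearized in the (fixed) linearization of $S$, and linearize each $\textit{read}()$ at the point where its internal $S.\textit{snapshot}()$ is linearized. Since writes and snapshots on $S$ linearize within their own intervals, these chosen points lie within the intervals of the corresponding ABC Graph operations, so the resulting sequential history preserves the real-time precedence order. It remains to check legality, i.e., that each $\textit{read}()$ returns exactly the state prescribed by the ABC Graph specification given the set of $\textit{add}$ operations linearized before it. This follows because the linearization of $S$ guarantees the internal snapshot sees precisely the contents of the registers $S_A[j], S_B[j], S_C[j]$ as written by exactly those $\textit{add}_X$ invocations linearized before this $\textit{read}$: each process $p_j$ writes the cumulative set $R_X$ of all its $\textit{add}_X$ calls so far, so $\bigcup_j T_X[j]$ aggregates all $\textit{add}_X$ invocations linearized before the snapshot. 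Then the specification's aggregation rules coincide termwise with what Algorithm~\ref{algo:abc-graph} computes: $A$ is the union of all announced operations; $(op,b)\in B$ iff $b = \min$ over all $\textit{add}_B(op,\cdot)$ linearized so far, which is what $\min$-merging $T_B$ yields; and $(op,M)\in C$ with $M = \bigcup_j M_j$ over all $\textit{add}_C(op,M_j)$, which is what $\cup$-merging $T_C$ yields.

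The only subtlety — and the step I expect to require the most care — is the interaction between a process's \emph{local} register $R_X$ and the moment its write linearizes: a process may invoke $\textit{add}_X$, update $R_X$ locally, but have its $S_X[i].\textit{write}(R_X)$ linearize only later, and it could even issue a \emph{second} $\textit{add}_X$ (updating $R_X$ again) before the first write linearizes, so that one physical write on $S$ covers two logical $\textit{add}$ operations. One must argue this causes no inconsistency: if $\textit{add}_X(u_1)$ and $\textit{add}_X(u_2)$ are both reflected by a single write $w$, then placing the linearization point of $\textit{add}_X(u_1)$ also at $w$ is still within its interval (its interval contains $w$ because $w$ is issued no earlier than the invocation of $\textit{add}_X(u_2)$, which is after the response of $\textit{add}_X(u_1)$ — using here that a process runs its operations sequentially, so consecutive $\textit{add}_X$ calls by the same process are non-overlapping and their writes occur in program order). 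Hence the write reflecting $\textit{add}_X(u_1)$ is linearized no earlier than necessary but still within bounds, and every $\textit{read}$ whose snapshot linearizes after this write correctly sees $u_1$; every $\textit{read}$ whose snapshot linearizes before it sees neither $u_1$ nor $u_2$, which is consistent since in our chosen linearization both $\textit{add}$'s sit at (or after) $w$. Once this bookkeeping is settled, legality of every $\textit{read}$ is immediate and the lemma follows.
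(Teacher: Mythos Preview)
Your proposal is correct and follows essentially the same route as the paper: linearize each ABC Graph operation at the linearization point of its single internal call on the snapshot object $S$, inherit real-time order from the snapshot linearization, and verify legality by observing that the snapshot sees exactly the cumulative per-process sets $R_X$ written by the $\textit{add}_X$ calls linearized before it, whose aggregation matches the ABC Graph specification.

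One remark: the ``subtlety'' you raise in the last paragraph is a non-issue. Since a process executes its ABC Graph operations sequentially and each $\textit{add}_X$ must complete its internal $S_X[i].\textit{write}$ before returning, two consecutive $\textit{add}_X$ calls by the same process issue two \emph{separate} writes to $S$ that linearize in program order; no single physical write can ever ``cover two logical $\textit{add}$ operations.'' Your attempted handling of that hypothetical is also internally inconsistent---you argue that $w$ lies within the interval of $\textit{add}_X(u_1)$ while simultaneously placing $w$ after the response of $\textit{add}_X(u_1)$---but since the case is vacuous, this does not affect correctness. Simply drop that paragraph and the proof is clean.
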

\begin{proof}
Let $R$ be a run of Algorithm~\ref{algo:abc-graph}, and let $H$ be the history induced by $R$ with respect to the ABC Graph. Furthermore, let $H_S$ denote the history induced by $R$ with respect to the shared snapshot object $S$, i.e., $H$ contains the invocations and responses of ABC Graph operations, while $H_S$ contains the invocations and responses of snapshot operations. Each ABC Graph operation accesses $S$ exactly once (lines~\ref{line:abc:addA:write},~\ref{line:abc:addB:write},~\ref{line:abc:addC:write},~\ref{line:abc:read:snapshot}). 
Consequently, there is a unique mapping from operations of $S$ to the corresponding operations of the ABC Graph (lines~\ref{line:abc:addA},~\ref{line:abc:addB},~\ref{line:abc:addC},~\ref{line:abc:read}).

Let $L_S$ be a linearization of $H_S$, and let $L$ be the sequence of ABC Graph operations obtained by mapping each operation in $L_S$ to its originating ABC Graph operation. We show that $L$ is a linearization of $H$.

For each $op.res \in H$ there is a corresponding operation of $S$ in $L_S$ and therefore $op \in L$. Similarly, for each operation on $S$ in $L_S$ there is a corresponding $op$ with $op.inv \in H$. Assume that there are $op,op' \in H$ with $op <_H op'$. It follows, that the corresponding operation on $S$ of $op$ precedes the one of $op'$ in $H_S$ and therefore in $L_S$ and consequently $op <_L op'$. We conclude that $L$ is a sequential history that is equivalent to a completion of $H$ and preserves the precedence relation of $H$.

It remains to show that $L$ is legal. Among the ABC Graph operations, only $\textit{read}()$ (line~\ref{line:abc:read}) produces a response value. This value is determined by the response value of $S.\textit{snapshot}()$ in line~\ref{line:abc:read:snapshot}. Since $S$ is linearizable, its response value contains exactly those $\textit{write}$ operations that precede it in $L_S$. Consequently, the response value of $\textit{read}()$ contains precisely the $\textit{add}_X$ operations ($X \in \{A,B,C\}$) that precede it in $L$. The aggregation of $A$, $B$, and $C$ coincides with the specification given in Section~\ref{sec:protocol:abc-graph}. Therefore, Algorithm~\ref{algo:abc-graph} is a linearizable implementation of an ABC Graph.

Algorithm~\ref{algo:abc-graph} uses a single shared object: the snapshot object $S$ (line~\ref{line:abc:snapshot-object}). Each ABC Graph operation performs exactly one operation on $S$ along with some local computation. As a result, the number of steps required by the ABC Graph implementations matches that of the underlying snapshot object. Therefore, assuming $S$ is wait-free, the ABC Graph implementation is wait-free as well.
\end{proof}

\section{Universal Construction - Proof of Linearizability}
\label{sec:appendix:uc:linearizability}
In the proof of correctness for Algorithm~\ref{algo:universal-construction}, we assume discrete global time, inaccessible to any individual process. 
Since Algorithm~\ref{algo:abc-graph} is a linearizable implementation of the ABC Graph (Lemma~\ref{lemma:abc-graph:linearizability}), the execution of operations on this object admits a well-defined total order.
We start from an initial time $t=0$. Each invocation of $\textit{add}_A$, $\textit{add}_B$ or $\textit{add}_C$ produces an \emph{atomic extension}, which transforms the graph at time $t$ into the graph at time $t+1$.
We denote such an atomic extension by $t \rightarrow t+1$.
The causing $\textit{add}$ operation is assigned the timestamp $t+1$. An \emph{extension} is then defined as a sequence of one or more consecutive atomic extensions, representing the evolution of the graph across time.
Since the $\textit{read}$ method is state-preserving, we assign it the same timestamp as its most recent predecessor (or $t=0$ if none exists). 
Consequently, if the same timestamp $t$ is assigned to a $\textit{read}$ and an $\textit{add}$ operation, the $\textit{read}$ observes the new state, i.e., is scheduled after the $\textit{add}$ operation. 
We denote $A_t$, $B_t$ and $C_t$ as the values of $A$, $B$ and $C$ at time $t$.

As mentioned at the beginning of Section~\ref{sec:protocol}, linearizability of our construction is implied by the properties introduced there, which are consolidated and slightly extended in their formal statement in Lemma~\ref{lemma:correctness}.
The remainder of this section is dedicated to proving these properties and concluding in Lemma~\ref{lemma:linearizability} that the universal construction in Algorithm~\ref{algo:universal-construction} is linearizable.

\begin{lemma}
    \label{lemma:correctness}
    Each run of Algorithm~\ref{algo:universal-construction} fulfills the following properties. Let $op$ and $op'$ be any two operations in that run.
    \begin{itemize}
        \item For all times $t \geq 0$, $C_t$ is acyclic.
        \item Upon termination of $\textit{publish}(op)$, the operation $op$ is committed.
        \item If $\textit{publish}(op)$ precedes $\textit{publish}(op')$, then $op$ precedes $op'$ in every topological ordering of $C_t$ for every $t$ after $op'$ has been committed.
        \item Let $t_1, t_2 \geq 0$, and let $s_1$ and $s_2$ be topological orderings of $C_{t_1}$ and $C_{t_2}$, respectively.
        For every operation $op$ such that $op \in s_1$ and $op \in s_2$, the sequential specification yields the same response for $op$ in both $s_1$ and $s_2$.
        \item For every time $t_2 \geq 0$ and every operation $op \in C_{t_2}$, if $op$ has a response, then there exists a time $t_1$ with $0 \leq t_1 \leq t_2$ and a topological ordering $T$ of $C_{t_1}$ that contains $op$, such that the response of $op$ agrees with the sequential specification applied to $T$.
    \end{itemize}
\end{lemma}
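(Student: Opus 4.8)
The plan is to prove the five properties roughly in the order listed, using a simultaneous induction on global time $t$ together with a case analysis on the two commit paths (conflict-free, line~\ref{line:uc:addC}; and conflict resolution, line~\ref{line:uc:rc:addC}). The backbone observation is that whenever a process executes $\textit{add}_C(op', M)$, it has just called $G.\textit{read}()$ and set $M := \textit{vertices}(C)$ at some earlier time $t_0$; so every edge added into $op'$ points from an operation already committed strictly before $op'$. This immediately gives \emph{acyclicity}: a topological argument shows that if we order the first-commit times of operations, every incoming edge of $op'$ comes from an operation with a strictly smaller first-commit time, so $C_t$ can contain no cycle. The second property, \emph{$op$ is committed upon termination of $\textit{publish}(op)$}, is a direct inspection of the code: $\textit{publish}(op)$ returns only at lines~\ref{line:uc:return-1}, \ref{line:uc:return-2}, or~\ref{line:uc:rc:return}, and in each case the preceding check or the preceding $\textit{add}_C$ guarantees $op \in \textit{vertices}(C)$ at that point.

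For \emph{real-time order}, suppose $\textit{publish}(op)$ terminates before $\textit{publish}(op')$ is invoked. Then $op$ is committed (by the second property) at some time $t^* $ before $op'$ even calls $G.\textit{add}_A(op')$. I would argue that $op'$ cannot be committed by any process at a version of $C$ that does not already contain $op$: a commit of $op'$ (on either path) reads $C$ at a time after $op'.\textit{inv}$, hence after $t^*$, so $op \in \textit{vertices}(C)$ at that read, hence $op \in M$ for the edge set installed for $op'$, giving the edge $op \to op'$. For the conflict-resolution path one has to also check the ``already committed'' shortcut (line~\ref{line:uc:rc:check-before-addC}): if $op'$ was committed earlier by someone else, that earlier commit still happened after $op'.\textit{inv} > t^*$, so the conclusion persists. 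Since $C_t$ is monotone (edges and vertices are only added — this is part of the ABC-graph specification), the edge $op \to op'$ is present in every $C_t$ after $op'$ is committed, and topological orderings respect edges.

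The real work is in the last two properties, which together say the graph is \emph{stable}: all topological orderings of all $C_t$ that contain a given $op$ assign it the same response. The engine here is Lemma~\ref{lemma:equivalence-preservation} (and its corollary: if $op$ commutes with every subset of a set, all placements of $op$ within that set are equivalent), combined with the inductive hypothesis that \emph{all topological orderings of $C_t$ are equivalent} (in the sense of Definition~\ref{def:equivalence-relation-of-orderings}). So the crux is to maintain this invariant across a single atomic extension $t \to t+1$ that commits some operation $op'$ for the first time. Two sub-cases: (i) $op'$ was committed via the conflict-free path at line~\ref{line:uc:addC}; then the committing process checked at line~\ref{line:uc:commutativity} that $op'$ commutes in $l = \textit{linearize}(C)$ with every subset of $A - \textit{vertices}(C) - \{op'\}$ at read-time $t_0 \le t$. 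I must show that the set of operations that end up being linearized \emph{after} $op'$ in any topological ordering of $C_{t+1}$ is contained (as a set) in that checked set, so that $op'$'s placement — and hence its response and the resulting state — is independent of the ordering; this is where one needs that no operation committed before $op'$ in $C_{t+1}$ is a ``new'' vertex missed at time $t_0$, which follows because $op' \notin \textit{vertices}(C_{t_0})$ forced $M = \textit{vertices}(C_{t_0})$ and all of $\textit{vertices}(C_{t_0})$ precede $op'$. (ii) $op'$ was committed via conflict resolution; here $op' = \hat{op}$ came from $\textit{CONS}_k$, and one uses the prioritization argument sketched in the body — $op'$ has minimal $B$-value among non-committed operations in some version of the graph, hence every still-uncommitted concurrent operation $op^*$ either has already seen $op'$ in its line~\ref{line:uc:read-2} read or will re-read and see it — to conclude that any operation later appended and ordered before $op'$ actually had $op'$ in its dependency set, contradicting minimality, so again $op'$'s relative position is forced.

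The main obstacle, I expect, is exactly the interaction between the two paths in sub-case~(i): one conflicting operation may slip onto the conflict-free path ``just in time'' while another is resolving the conflict via consensus. The body explicitly flags this (``conflicting operations committed via different paths cannot bypass each other'') and points to a commit-adopt-style argument guaranteeing that a directly committed operation lands in the causal past of every concurrent conflicting operation. Formalizing this — showing that the commutativity check at line~\ref{line:uc:commutativity} together with the structure of the $B$-values is strong enough that a direct commit of $op'$ can never be ``overtaken'' by a conflicting operation that was concurrently going through conflict resolution — is the delicate step, and I would isolate it as a separate lemma before assembling the five bullet points and concluding, via Definition~\ref{def:history.linearization}, that $\textit{linearize}(C_t)$ for a sufficiently late $t$ is a legal, precedence-preserving, equivalent sequential history, i.e.\ a linearization (Lemma~\ref{lemma:linearizability}).
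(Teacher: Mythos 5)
Your overall architecture matches the paper's: properties 1--3 by code inspection and monotonicity of the ABC Graph (your acyclicity argument via first-commit times -- every edge installed for $op'$ comes from a vertex of a read in which $op'$ was absent, hence from an operation first committed strictly earlier -- is a valid and slightly more direct alternative to the paper's irreflexivity-plus-transitivity route), and properties 4--5 by induction on $t$ maintaining the invariant that all topological orderings of $C_t$ are equivalent. The genuine gap is in the conflict-free sub-case of the inductive step. You claim that the set of operations that may be linearized \emph{after} the newly committed $op'$ in $C_{t+1}$ -- i.e., the vertices of $C_t$ with no edge to $op'$ -- is contained in the set $A - \textit{vertices}(C) - \{op'\}$ that $\textit{publish}(op')$ checked at line~\ref{line:uc:commutativity}. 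That containment is false: an operation $op''$ can be announced, booked, and committed via the conflict-free path entirely between $op'$'s read at line~\ref{line:uc:read-2} (time $t_0$) and $op'$'s commit at time $t+1$ (the local commutativity check is exponential, so this window is real). Such an $op''$ is unordered relative to $op'$ in $C_{t+1}$ yet was never examined by $\textit{publish}(op')$. The repair is the symmetric argument of the paper's Lemma~\ref{lemma:equivalence-of-any-placement}: every such missed $op''$ must itself have observed $op'$ in $A - \textit{vertices}(C)$ at its own line~\ref{line:uc:read-2} read (because $op'$ was announced before $t_0$), hence verified commutativity with $op'$; one then sorts the missed operations by the recency of their reads and runs a second, inner induction, invoking Lemma~\ref{lemma:equivalence-preservation} at each step to commute $op'$ past them. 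You do flag a ``delicate step,'' but you locate it in the interaction between the two commit paths; the unavoidable difficulty already arises with conflict-free commits alone, and your sketch gives no mechanism for it.

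A second, smaller omission concerns the last bullet. When $\textit{publish}(op)$ returns at line~\ref{line:uc:return-2} but $op$ was concurrently committed first by another process, the sequence $l \cdot (op)$ from which the response is computed need not be a prefix of a topological ordering of any graph $C_t$ that actually occurs in the run, so the equivalence invariant does not directly apply. The paper closes this with an indistinguishability argument over runs in which the two commit steps are reordered (Case~2 of Lemma~\ref{lemma:response-value-sanity}); your plan treats this bullet as an immediate consequence of the invariant and does not address the concurrent-first-commit case.
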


\subsection{Fundamental Properties and Acyclicity}

After expanding the notion of the inclusion relation to graphs in Definition~\ref{def:graph-inclusion}, we establish some basic properties of Algorithm~\ref{algo:universal-construction} through a series of Lemmata. 
Important insights are for example that the ABC Graph is only growing (Lemma~\ref{lemma:basic-properties}), only growing by at most one node at a time (Lemma~\ref{lemma:previous-nodes}) and that the graph $C$ is transitive (Lemma~\ref{lemma:transitivity}) and acyclic (Lemma~\ref{lemma:acyclicity}) and consequently suitable for topological ordering.
Lemma~\ref{lemma:separation-of-orderings} shows that it is always possible to select a topological ordering of a version of $C$ in which a prefix forms a topological ordering of an earlier version of $C$.

\begin{definition}
    \label{def:graph-inclusion}
    Let $G$ and $H$ be graphs. We write $G \subseteq H$ if $\textit{vertices}(G) \subseteq \textit{vertices}(H) \wedge \textit{edges}(G) \subseteq \textit{edges}(H)$.
\end{definition}

\begin{lemma}
    \label{lemma:basic-properties}
    The sets $A_t$ and $B_t$, and graph $C_t$ satisfy the following properties for every time $t \geq 0$:
    \begin{enumerate}
        \item $A_t \subseteq \textit{vertices}(B_t) \subseteq \textit{vertices}(C_t)$.
        \label{lemma:basic-properties.subset}
        \item Let $op$ be an operation. No other process than the one executing $\textit{publish}(op)$ announces or books $op$. It is announced at most once and booked at most once.
        \label{lemma:basic-properties.A-and-B-once}
        \item If $t \leq t'$, then $A_t \subseteq A_{t'}$, $B_t \subseteq B_{t'}$, and $C_t \subseteq C_{t'}$.
        \label{lemma:basic-properties.increasing}
        \item For any finite set of times $t_1 \leq \dots \leq t_m$, we have $C_{t_m} = \bigcup_j C_{t_j}$.
        \label{lemma:basic-properties.set-of-graphs}
    \end{enumerate}
\end{lemma}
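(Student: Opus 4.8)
The plan is to verify the four items by a direct inspection of the control flow of $\textit{publish}$ in Algorithm~\ref{algo:universal-construction}, together with the aggregation semantics of the ABC Graph from Section~\ref{sec:protocol:abc-graph}; by Lemma~\ref{lemma:abc-graph:linearizability} and the timestamping convention fixed at the beginning of this section we may reason about the ABC Graph as an atomic object whose $\textit{add}$-events are totally ordered. Since the items build on one another, I would prove them in the order (2), (3), (1), (4): (3) needs (2) for its $B$-component, (1) needs both (2) and (3), and (4) is an immediate corollary of (3).

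For (2): $\textit{add}_A$ is invoked in the algorithm only at line~\ref{line:uc:addA}, always with the parameter $op$ of the enclosing $\textit{publish}(op)$, and $\textit{add}_B$ only at line~\ref{line:uc:addB}, with first argument the enclosing parameter. By the uniqueness assumption on operations (Section~\ref{sec:model}), each $op$ has exactly one invocation $\textit{publish}(op)$, run by $op.p$; hence $op$ is announced at most once and booked at most once, in both cases only by $op.p$. For (3): monotonicity of $A_t$ and of $\textit{vertices}(C_t)$ and $\textit{edges}(C_t)$ is immediate, since $A$ and each dependency set in $C$ are unions over all preceding $\textit{add}$-events and thus only grow (recall Definition~\ref{def:graph-inclusion}); for $B_t$, item (2) gives that each $op$ is booked at most once, so the ``minimal $B$-value'' degenerates to the unique value written, if any, and $B_t$ is likewise increasing in $t$. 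Item (4) then follows at once: for $t_1 \leq \dots \leq t_m$, property (3) gives $C_{t_j} \subseteq C_{t_m}$ for every $j$, hence $\bigcup_j C_{t_j} \subseteq C_{t_m}$, while $C_{t_m}$ is one of the terms of the union, yielding the reverse inclusion.

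For (1), I would fix an operation $op$ and track the chronological order of its at-most-one announce-event, its at-most-one book-event, and the commit-events touching it. Within $\textit{publish}(op)$, the $\textit{add}_A(op)$ of line~\ref{line:uc:addA} precedes the $\textit{add}_B(op,\cdot)$ of line~\ref{line:uc:addB} in program order, which—combined with (2) and propagated through the monotonicity of (3)—relates $A_t$ with $\textit{vertices}(B_t)$. To relate $\textit{vertices}(B_t)$ with $\textit{vertices}(C_t)$, I would consider the first $\textit{add}_C(op,\cdot)$ event and split on the line that produced it: if line~\ref{line:uc:addC}, it lies inside $\textit{publish}(op)$ and is preceded there by the $\textit{add}_B(op,\cdot)$ of line~\ref{line:uc:addB}; if line~\ref{line:uc:rc:addC}, then $op$ equals the value $\hat{op}$ returned by some $\textit{CONS}_k.\textit{propose}$, so by validity of $\textit{CONS}_k$ some process proposed $op$, having selected it at line~\ref{line:uc:rc:select} from $B-\textit{vertices}(C)$ as read at line~\ref{line:uc:rc:read-1}; since that read occurs, in real time, before the $\textit{add}_C$ in question (via the consensus object and the return at line~\ref{line:uc:rc:read-2}), it precedes it in the fixed total order, so $op$ is already booked at that earlier timestamp. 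Monotonicity (3) closes the argument.

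I expect the main obstacle to be exactly this conflict-resolution branch of (1): the process committing an operation at line~\ref{line:uc:rc:addC} is in general not the one publishing it, so membership of the committed operation in $B$ must be certified indirectly—by chasing the committed value back, through the validity (and linearizability) of $\textit{CONS}_k$, to a $\textit{propose}$ issued by some process after a successful selection at line~\ref{line:uc:rc:select}—and one must carefully argue that the ABC-Graph read underlying that selection genuinely precedes the $\textit{add}_C$ event in the fixed global order, so that monotonicity can be applied. Everything else reduces to routine reading of the pseudocode and the ABC-Graph aggregation rules.
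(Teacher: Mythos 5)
Your proposal follows the same route as the paper, which simply declares items (1)--(3) ``trivial from the implementation'' and derives (4) from (3); your elaboration of items (2), (3) and (4) is correct, and your careful handling of the $B$-component of monotonicity (using item (2) to collapse the $\min$-aggregation) and of the conflict-resolution branch of item (1) (chasing $\hat{op}$ back through consensus validity to a selection from $B-\textit{vertices}(C)$ at line~\ref{line:uc:rc:select}) is sound and more explicit than the paper's.

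One point you should not leave implicit: for item (1) you say your argument ``relates'' $A_t$ with $\textit{vertices}(B_t)$ and $\textit{vertices}(B_t)$ with $\textit{vertices}(C_t)$, but the inclusions your argument actually establishes are $\textit{vertices}(C_t) \subseteq \textit{vertices}(B_t) \subseteq A_t$ --- the \emph{reverse} of the chain as literally printed in the lemma. The literal statement $A_t \subseteq \textit{vertices}(B_t)$ is false (a process that announces $op$ and crashes, or simply has not yet reached line~\ref{line:uc:addB}, leaves $op \in A_t$ but unbooked), and the paper itself later invokes the reversed chain, e.g.\ ``$op \in \textit{vertices}(B_{\hat{t}}) \subseteq A_{\hat{t}}$'' in the proof of Lemma~\ref{lemma:equivalence-of-any-placement}. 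So the printed item (1) is evidently a typo; your proof is of the correct, intended statement, but you should say so explicitly rather than hedge on the direction.
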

\begin{proof}
    Items \ref{lemma:basic-properties.subset}-\ref{lemma:basic-properties.increasing} follow trivially from the implementation.
    Item~\ref{lemma:basic-properties.set-of-graphs} follows directly from item~\ref{lemma:basic-properties.increasing}.
\end{proof}

\begin{lemma}
    \label{lemma:irreflexivity}
    The graph $C_t$ is irreflexive for every time $t \geq 0$.
\end{lemma}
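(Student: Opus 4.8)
The plan is to trace every way an edge can enter $C$ and observe that no self-loop can ever be created. By the definition of $\textit{edges}(C)$, a self-loop $(op,op)\in\textit{edges}(C_t)$ exists if and only if there is a tuple $(op,M)\in C_t$ with $op\in M$; by the aggregation semantics of the ABC Graph (a tuple $(op,M)$ with $M=\bigcup_j M_j$ over all preceding $\textit{add}_C(op,M_j)$), this in turn requires that some invocation $\textit{add}_C(op,M_j)$ with $op\in M_j$ was completed before time $t$. So it suffices to show that Algorithm~\ref{algo:universal-construction} never invokes $\textit{add}_C(op,M)$ with $op\in M$.

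There are exactly two call sites of $\textit{add}_C$ in Algorithm~\ref{algo:universal-construction}: line~\ref{line:uc:addC} and line~\ref{line:uc:rc:addC}. In the first, $\textit{publish}(op)$ calls $\textit{add}_C(op,\textit{vertices}(C))$, where $C$ is the local value obtained from the $\textit{read}$ in line~\ref{line:uc:read-2}; this line is reached only after the guard of line~\ref{line:uc:checkC} evaluates to false, i.e., only when $op\notin\textit{vertices}(C)$ for that very $C$. Hence the second argument $M=\textit{vertices}(C)$ does not contain $op$, so this invocation does not add a self-loop. In the second, a $\textit{publish}$ instance calls $\textit{add}_C(\hat{op},\textit{vertices}(C))$ with $C$ read in line~\ref{line:uc:rc:read-2}, and this call is guarded by the test of line~\ref{line:uc:rc:check-before-addC}, which holds only when $\hat{op}\notin\textit{vertices}(C)$ for that same local $C$; again the first argument is absent from the set passed as the second argument.

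Therefore no invocation $\textit{add}_C(op,M_j)$ with $op\in M_j$ ever occurs, so $C_t$ contains no self-loop for any $t\geq 0$, i.e., $C_t$ is irreflexive. There is no real obstacle here; the only point requiring care is to note that the guard and the set passed to $\textit{add}_C$ refer to the \emph{same} local snapshot of $C$ (from line~\ref{line:uc:read-2}, respectively line~\ref{line:uc:rc:read-2}), so that modifications of the shared graph by concurrent processes between the $\textit{read}$ and the subsequent $\textit{add}_C$ cannot reintroduce the committed operation into the set $M$ that is actually supplied.
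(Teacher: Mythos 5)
Your proof is correct and follows essentially the same route as the paper's: both reduce irreflexivity to the claim that no invocation $\textit{add}_C(op,M)$ with $op\in M$ ever occurs, and both verify this by inspecting the two call sites (lines~\ref{line:uc:addC} and~\ref{line:uc:rc:addC}) and their guards (lines~\ref{line:uc:checkC} and~\ref{line:uc:rc:check-before-addC}). Your explicit remark that the guard and the set $M$ refer to the same local snapshot of $C$ is a useful clarification that the paper leaves implicit.
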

\begin{proof}
    The initial graph $C_{0}$ is empty and trivially irreflexive. Let $t \rightarrow t+1$ be an atomic extension of the ABC Graph. If this is the result of an invocation of $\textit{add}_A$ or $\textit{add}_B$, we have $C_{t+1} = C_t$ and irreflexivity follows by induction. 
    Assume that $t \rightarrow t+1$ is the result of an invocation of $\textit{add}_C(op,M)$ and therefore adds incoming edges to at most one node. 
    To prove irreflexivity, we have to show that $op \notin M$.
    The method $\textit{add}_C(op,M)$ is called in lines \ref{line:uc:addC} and \ref{line:uc:rc:addC}. 
    In both cases, the condition $op \notin M$ has been explicitly checked previously.
\end{proof}

\begin{lemma}
    \label{lemma:previous-nodes}
    Let $E_t := t \rightarrow t+1$ be an atomic extension of the ABC Graph and let this extension be the result of an invocation of $\textit{add}_C(op,M)$. We have
    \begin{enumerate}
        \item $M = \textit{vertices}(C_{t'})$ for some time $t' \leq t$ and
        \item $\textit{vertices}(C_{t+1}) - \textit{vertices}(C_t) \subseteq \{op\}$.
    \end{enumerate}
    We define $C(E_t) := C_{t'}$ for the $C_{t'}$ that was used to calculate $M$.
\end{lemma}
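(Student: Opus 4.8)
The plan is to establish item~1 by inspecting the two call sites of $\textit{add}_C$ in Algorithm~\ref{algo:universal-construction}, and then to obtain item~2 from item~1 together with the monotonicity of $C$ recorded in Lemma~\ref{lemma:basic-properties}.

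First I would observe that $\textit{add}_C(op,M)$ is invoked only in line~\ref{line:uc:addC} and in line~\ref{line:uc:rc:addC}, and that in both places the argument $M$ is set to $\textit{vertices}(C)$, where $C$ is the component returned by the most recent preceding $G.\textit{read}()$ within the same invocation of $\textit{publish}$ --- namely the read in line~\ref{line:uc:read-2} in the first case and the read in line~\ref{line:uc:rc:read-2} in the second. Since $\textit{read}()$ is state-preserving, the timestamp convention fixed at the start of this section assigns it the timestamp $t'$ of its most recent $\textit{add}$-predecessor (or $t'=0$ if it has none), and it returns precisely $C_{t'}$; hence $M=\textit{vertices}(C_{t'})$, and we may define $C(E_t):=C_{t'}$ for this $t'$. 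To finish item~1 I would argue $t'\le t$: the $\textit{read}()$ returns before $\textit{add}_C(op,M)$ is invoked in the same process, so it precedes $\textit{add}_C(op,M)$ both in the induced history and --- by linearizability of the ABC Graph (Lemma~\ref{lemma:abc-graph:linearizability}) --- in the total order underlying our timestamps; since the extension $E_t$ caused by this $\textit{add}_C$ carries timestamp $t+1$, every $\textit{add}$ preceding the $\textit{read}()$ has timestamp at most $t$, so $t'\le t$.

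For item~2 I would combine $t'\le t$ with Lemma~\ref{lemma:basic-properties}, item~\ref{lemma:basic-properties.increasing}, to get $C_{t'}\subseteq C_t$, hence $M=\textit{vertices}(C_{t'})\subseteq\textit{vertices}(C_t)$. By the ABC Graph specification, the only effect of $\textit{add}_C(op,M)$ on the $C$-component is to insert the tuple $(op,M)$ (merging $M$ into an already present entry for $op$ if one exists), so the only operations it can introduce into $\textit{vertices}(C)$ are $op$ itself and the members of $M$; as the latter are already vertices of $C_t$, the only operation that may be newly added is $op$, which is exactly the claim $\textit{vertices}(C_{t+1})-\textit{vertices}(C_t)\subseteq\{op\}$.

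The one genuinely delicate point is the timestamp bookkeeping in item~1: ruling out that the $\textit{read}()$ producing $M$ is assigned the same timestamp $t+1$ as the extension $E_t$ itself. This is settled cleanly by invoking linearizability of the ABC Graph together with the convention that a state-preserving $\textit{read}$ inherits the timestamp of its most recent predecessor; everything else is a direct reading of the code and of Lemma~\ref{lemma:basic-properties}.
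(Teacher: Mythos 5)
Your proposal is correct and follows essentially the same route as the paper's proof: inspect the two call sites of $\textit{add}_C$, observe that $M$ is the vertex set of a previously read version of $C$, and conclude item~2 from the monotonicity of $C$ (Lemma~\ref{lemma:basic-properties}, item~\ref{lemma:basic-properties.increasing}). The only difference is that you spell out the timestamp bookkeeping for $t' \le t$ explicitly, which the paper leaves implicit in the phrase ``previously read graph.''
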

\begin{proof}
    The method $\textit{add}_C(op,M)$ is called in lines \ref{line:uc:addC} and \ref{line:uc:rc:addC}. In both cases it is called with $M=\textit{vertices}(C')$ for some previously read graph $C'$. With Lemma~\ref{lemma:basic-properties}~(\ref{lemma:basic-properties.increasing}) we have $M \subseteq \textit{vertices}(C_t)$ and therefore $\textit{vertices}(C_{t+1}) - \textit{vertices}(C_t) \subseteq \{op\}$. As multiple graphs $C_{t'}$ might fulfill $M=\textit{vertices}(C_{t'})$, we define $C(E_t)$ as the $C_{t'}$ that was used in the actual execution of the algorithm to calculate $M$.
\end{proof}

Lemma~\ref{lemma:previous-nodes} states that, in any atomic extension, caused by an invocation of $\textit{add}_C(op,M)$, in which incoming edges are added from the nodes in $M$ to the node $op$, the set $M$ must correspond to the nodes of a previous version of graph $C$. Consequently, at most one operation (namely, $op$) is added to graph $C$ in such an extension, since every operation in $M$ is already present in the graph.
Furthermore, we can infer that for every time $t \geq 0$ and operation $op \in C_t$, the exist times $t_1 \leq \dots \leq t_k$ such that the set of nodes that have incoming edges to $op$ in $C_t$ is the union of all $\textit{vertices}(C_{t_i})$ with $1 \leq i \leq k$, which equals $\textit{vertices}(C_{t_k})$ by Lemma~\ref{lemma:basic-properties}~(\ref{lemma:basic-properties.set-of-graphs}).

\begin{lemma}
    \label{lemma:transitivity}
    The graph $C_t$ is transitive for every time $t \geq 0$.
\end{lemma}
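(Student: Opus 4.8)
The plan is to prove the claim by induction on the time $t$, with base case $t = 0$ immediate since $C_0$ is empty. For the inductive step, I only need to worry about atomic extensions $E_t : t \to t+1$ caused by an invocation of $\textit{add}_C(op, M)$: extensions caused by $\textit{add}_A$ or $\textit{add}_B$ leave $C$ unchanged, and $\textit{read}$ is state-preserving, so in those cases $C_{t+1} = C_t$ and transitivity is inherited. For an $\textit{add}_C(op, M)$ extension, Lemma~\ref{lemma:previous-nodes} gives $M = \textit{vertices}(C_{t'})$ for some $t' \le t$, and inspecting the only two call sites of $\textit{add}_C$ (lines~\ref{line:uc:addC} and~\ref{line:uc:rc:addC}) together with the guards in lines~\ref{line:uc:checkC} and~\ref{line:uc:rc:check-before-addC} shows that the operation being committed is absent from the $C$ that was read to form $M$; hence $op \notin M = \textit{vertices}(C_{t'})$. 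Consequently the edges of $C_{t+1}$ that are not already in $C_t$ all lie in $\{(m, op) : m \in M\}$, i.e., every new edge points into $op$.

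The key auxiliary fact I would establish first is a ``downward closure'' of predecessor sets: writing $\textit{pred}_t(v) = \{u : (u,v) \in \textit{edges}(C_t)\}$, if $v \in \textit{vertices}(C_{t'})$ with $t' \le t$, then $\textit{pred}_t(v) \subseteq \textit{vertices}(C_{t'})$. This holds because, as noted after Lemma~\ref{lemma:previous-nodes}, $\textit{pred}_t(v)$ is a union of sets of the form $\textit{vertices}(C_{s})$ arising from invocations $\textit{add}_C(v, \textit{vertices}(C_{s}))$ with $s \le t$; the guard at that call site forces $v \notin \textit{vertices}(C_{s})$, and since $v \in \textit{vertices}(C_{t'})$ and $C$ is monotone (Lemma~\ref{lemma:basic-properties}), this is only possible if $s < t'$, so each such $\textit{vertices}(C_s) \subseteq \textit{vertices}(C_{t'})$.

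With this in hand, I would verify transitivity of $C_{t+1}$ by examining a directed $2$-path $x \to y \to z$ in $C_{t+1}$. If both edges are in $C_t$, the induction hypothesis applies. The path cannot use two new edges, since a new edge has its source in $M$ and its target $op$, which would force $op \in M$. If only $(y,z)$ is new, then $z = op$ and $y \in M = \textit{vertices}(C_{t'})$, while $(x,y) \in \textit{edges}(C_t)$; downward closure gives $x \in \textit{pred}_t(y) \subseteq \textit{vertices}(C_{t'}) = M$, so $(x, op) = (x,z)$ is among the edges into $op$ in $C_{t+1}$. If only $(x,y)$ is new, then $y = op$, $x \in M$, and $(op, z) \in \textit{edges}(C_{t+1})$; since no new edge has source $op$, in fact $(op,z) \in \textit{edges}(C_t)$, so $op \in \textit{pred}_t(z)$, which means $op \in \textit{vertices}(C_s)$ for some $s \le t$ with $\textit{vertices}(C_s) \subseteq \textit{pred}_t(z)$; as $op \notin \textit{vertices}(C_{t'})$, monotonicity forces $t' < s$, hence $x \in M = \textit{vertices}(C_{t'}) \subseteq \textit{vertices}(C_s) \subseteq \textit{pred}_t(z)$, i.e., $(x,z) \in \textit{edges}(C_t) \subseteq \textit{edges}(C_{t+1})$. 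This exhausts the cases, so $C_{t+1}$ is transitive.

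The main obstacle I anticipate is that the predecessor set of an operation in $C$ is not frozen at its first commit: redundant re-commitments (possible because the step between reading $C$ and calling $\textit{add}_C$ is not atomic) can add further incoming edges later. The downward-closure fact is exactly what controls this — it certifies that any predecessor ever attributed to $op$ was already present in $C$ at the instant $op$ was first committed, which is what makes the monotonicity arguments in the case analysis go through. Pinning down this fact precisely, from the guards of lines~\ref{line:uc:checkC} and~\ref{line:uc:rc:check-before-addC} and the monotonicity of $C$, is the real content; the remaining case analysis is routine.
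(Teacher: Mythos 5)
Your proof is correct, but it is organized differently from the paper's. The paper proves the lemma directly by contradiction: given a non-transitive triple $a \to b \to c$ with $(a,c)$ missing, it compares the first-commit times $t_a, t_b$ with the times of the reads that produced the edges $(b,c)$ and $(a,b)$, derives $t_b \le t_{\textit{read}} < t_a$, and concludes that the read used to add the edge $(a,b)$ must have contained $b$ itself, contradicting the guard that underlies Lemma~\ref{lemma:irreflexivity}. You instead proceed by induction on atomic extensions, isolating as an explicit invariant the ``downward closure'' of predecessor sets ($\textit{pred}_t(v) \subseteq \textit{vertices}(C_{t'})$ whenever $v \in \textit{vertices}(C_{t'})$), which you derive from the same ingredients the paper uses implicitly: Lemma~\ref{lemma:previous-nodes}, the guards at the two $\textit{add}_C$ call sites, and monotonicity (Lemma~\ref{lemma:basic-properties}). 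Your case analysis on two-edge paths is exhaustive and each case closes correctly; in particular you correctly handle the two asymmetric cases (new edge into $op$ followed by an old edge out of $op$, and vice versa), which is exactly where redundant re-commitments could cause trouble. The paper's argument is shorter; yours makes the nesting structure of predecessor sets explicit, which is arguably more robust against the re-commitment subtlety you flag at the end and which the paper's phrase ``(or incoming edges to $c$)'' handles only in passing. Both are valid proofs of the statement.
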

\begin{proof}
    Proof by contradiction. Assume $C_t$ is not transitive. There exists $a,b,c \in \textit{vertices}(C_t)$ with $(a,b),(b,c) \in \textit{edges}(C_t)$ but $(a,c) \notin \textit{edges}(C_t)$. Let us denote the first time $a$ was added to $C$ as $t_a$, and the first time $b$ was added to $C$ as $t_b$ (i.e., we have $b \notin \textit{vertices}(C_{t'})$ for all $t'<t_b$ and $b \in \textit{vertices}(C_{t'})$ for all $t' \geq t_b$). As we have $(b,c) \in \textit{edges}(C_t)$, there must exist a time $t_{\textit{read}} \geq t_b$ such that $C_{t_{\textit{read}}}$ was used to add $c$ (or incoming edges to $c$) to $C$. With $(a,c) \notin \textit{edges}(C_t)$, we conclude $t_b \leq t_{\textit{read}} < t_a$. As we have $(a,b) \in \textit{edges}(C_t)$ there must have been a read of $C$ after (or at the same time as) $t_a$ that was used to add incoming edges to $b$. Therefore, this read is after $t_b$ and must have read $b$. As $C_t$ is irreflexive by Lemma \ref{lemma:irreflexivity}, this is a contradiction.
\end{proof}

\begin{lemma}
    \label{lemma:acyclicity}
    The graph $C_t$ is acyclic for every time $t \geq 0$.
\end{lemma}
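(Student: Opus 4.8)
The plan is to obtain acyclicity as an immediate corollary of the two structural properties of $C_t$ already in hand: transitivity (Lemma~\ref{lemma:transitivity}) and irreflexivity (Lemma~\ref{lemma:irreflexivity}). I would argue by contradiction. Suppose $C_t$ contains a directed cycle $v_0 \to v_1 \to \dots \to v_{k-1} \to v_0$ for some $k \geq 1$, i.e., $(v_i, v_{i+1}) \in \textit{edges}(C_t)$ for $0 \leq i < k-1$ and $(v_{k-1}, v_0) \in \textit{edges}(C_t)$.

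If $k = 1$ the cycle is a self-loop $(v_0, v_0) \in \textit{edges}(C_t)$, which contradicts Lemma~\ref{lemma:irreflexivity} directly. If $k \geq 2$, I would apply transitivity inductively along the cycle: from $(v_0, v_1)$ and $(v_1, v_2)$, Lemma~\ref{lemma:transitivity} gives $(v_0, v_2) \in \textit{edges}(C_t)$; combining this with $(v_2, v_3)$ gives $(v_0, v_3)$, and so on, until $(v_0, v_{k-1})$ together with $(v_{k-1}, v_0)$ yields $(v_0, v_0) \in \textit{edges}(C_t)$, again contradicting irreflexivity. Hence no directed cycle exists in $C_t$, and since $t$ was arbitrary, $C_t$ is acyclic for every $t \geq 0$.

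I do not expect any genuine obstacle at this point; the real work was done in establishing transitivity (which in turn relied on Lemma~\ref{lemma:previous-nodes}, namely that incoming edges into a node are always drawn from the vertex set of an \emph{earlier} version of $C$, so a node cannot receive an incoming edge from a node added strictly later than one of its existing in-neighbours). The only minor point of care is to treat the degenerate length-one cycle separately, which is precisely why irreflexivity is invoked in its own right rather than being subsumed. One could equivalently phrase the argument as ``a transitive and irreflexive relation is a strict partial order, and strict partial orders admit no cycles,'' but writing out the inductive use of transitivity keeps the proof short and self-contained.
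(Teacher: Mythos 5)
Your proposal is correct and follows exactly the paper's argument: the paper also derives acyclicity directly from irreflexivity (Lemma~\ref{lemma:irreflexivity}) and transitivity (Lemma~\ref{lemma:transitivity}), merely leaving implicit the inductive collapse of a cycle to a self-loop that you spell out. No differences of substance.
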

\begin{proof}
    The graph $C_t$ is irreflexive by Lemma \ref{lemma:irreflexivity} and transitive by Lemma \ref{lemma:transitivity}. Acyclicity follows directly.
\end{proof}

\begin{lemma}
    \label{lemma:no-backwards-edge}
    Let $0 \leq t_1 \leq t_2$, $a \in \textit{vertices}(C_{t_1})$ and $b \in \textit{vertices}(C_{t_2}) - \textit{vertices}(C_{t_1})$.
    Then for all $t \geq 0$, $(b,a) \notin C_t$.
\end{lemma}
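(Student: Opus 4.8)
The plan is a proof by contradiction: assume $(b,a) \in C_t$ for some time $t \ge 0$, and derive a conflict with the guard that precedes every invocation of $\textit{add}_C$. The first step is to localize the offending edge to a single $\textit{add}_C$ invocation. By the aggregation rule for $C$ (Section~\ref{sec:protocol:abc-graph}), $(b,a) \in C_t$ means there is a tuple $(a,M) \in C_t$ with $b \in M$, and $M$ is the union of the sets $M_j$ over all invocations $\textit{add}_C(a,M_j)$ preceding $t$; hence $b \in M_j$ for at least one such invocation. Let $E$ be the atomic extension it causes. By Lemma~\ref{lemma:previous-nodes}, $M_j = \textit{vertices}(C(E)) = \textit{vertices}(C_{t'})$, where $C_{t'}$ is the version of $C$ the executing process actually read — at line~\ref{line:uc:read-2} (if the call is line~\ref{line:uc:addC}) or at line~\ref{line:uc:rc:read-2} (if the call is line~\ref{line:uc:rc:addC}) — to build the argument.

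Next I would compare timestamps. Let $t_a$ (resp.\ $t_b$) be the first time at which $a$ (resp.\ $b$) belongs to $\textit{vertices}(C)$. Since $a \in \textit{vertices}(C_{t_1})$, monotonicity (Lemma~\ref{lemma:basic-properties}~(\ref{lemma:basic-properties.increasing})) gives $t_a \le t_1$. Since $b \notin \textit{vertices}(C_{t_1})$ while $b \in \textit{vertices}(C_{t_2})$, the same monotonicity forces $t_b > t_1$. From $b \in M_j = \textit{vertices}(C_{t'})$ we get $t' \ge t_b$, and therefore $t' \ge t_b > t_1 \ge t_a$, so $a \in \textit{vertices}(C_{t'})$ again by monotonicity. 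Intuitively: any version of $C$ that already contains $b$ must contain the older node $a$ as well.

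Finally I would invoke the guard on $C_{t'}$. If the invocation $\textit{add}_C(a,M_j)$ sits at line~\ref{line:uc:addC}, it is reached by $\textit{publish}(a)$ only after the test $op \in \textit{vertices}(C)$ at line~\ref{line:uc:checkC} evaluated to false; if it sits at line~\ref{line:uc:rc:addC}, it is guarded by the test $\hat{op} \notin \textit{vertices}(C)$ at line~\ref{line:uc:rc:check-before-addC} with $\hat{op} = a$. In both branches there is no re-read of $G$ between the test and the $\textit{add}_C$ call, so the test concerns exactly the version $C_{t'}$; hence $a \notin \textit{vertices}(C_{t'})$, contradicting the previous paragraph. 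Therefore no such $t$ exists. The only delicate point — the ``main obstacle'' — is this bookkeeping: one must be sure that the version of $C$ whose vertices become the dependency set $M_j$ is the very version on which the ``$a$ not yet committed'' guard was checked; once that is pinned down, the monotonicity comparison of $t_a$, $t_b$, $t'$ closes the argument immediately.
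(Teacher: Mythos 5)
Your proof is correct and follows essentially the same route as the paper's: both compare the first-commit times $t_a < t_b \le t'$ to conclude that the version of $C$ read to build the dependency set of $a$ already contains $a$ itself. The only difference is cosmetic — the paper closes by citing irreflexivity/acyclicity (Lemma~\ref{lemma:acyclicity}), whereas you inline that lemma's own argument by appealing directly to the $op \notin \textit{vertices}(C)$ guards at lines~\ref{line:uc:checkC} and~\ref{line:uc:rc:check-before-addC}.
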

\begin{proof}
    The claim is trivial for $t_1 = t_2$. Assume $t_1 < t_2$.
    We proceed by contradiction. 
    Suppose that there exists $a \in \textit{vertices}(C_{t_1})$ and $b \in \textit{vertices}(C_{t_2}) - \textit{vertices}(C_{t_1})$, such that, for some time $t' \geq 0$, $(b,a) \in C_{t'}$. 
    Let us call $t_a$ the first time when $a$ was committed and $t_b$ the first time when $b$ was committed.
    Since $a \in \textit{vertices}(C_{t_1})$, it follows that $t_a \leq t_1$ and since $b \in \textit{vertices}(C_{t_2}) - \textit{vertices}(C_{t_1})$, it follows that $t_1 < t_b$ and therefore $t_a < t_b$. 
    Since $(b,a) \in C_{t'}$, it follows that $a$ was committed with $C_{t_{\textit{read}}}$ as dependency with $t_b \leq t_{\textit{read}}$. We conclude $t_a < t_{\textit{read}}$, which contradicts acyclicity proven in Lemma~\ref{lemma:acyclicity}.
\end{proof}

\begin{lemma}
    \label{lemma:separation-of-orderings}
    Let $0 \leq t_1 \leq t_2$. 
    There exists a topological ordering $r$ of $C_{t_2}$ such that a prefix of $r$ is a topological ordering of $C_{t_1}$.
\end{lemma}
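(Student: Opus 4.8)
The plan is to exploit Lemma~\ref{lemma:no-backwards-edge} to show that the vertex set of $C_{t_1}$ is ``downward closed'' inside the DAG $C_{t_2}$, so that a topological ordering of $C_{t_2}$ can be assembled by listing the old vertices first and the new ones afterwards, with the old block being, by construction, a topological ordering of $C_{t_1}$.

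First I would set $V_1 := \textit{vertices}(C_{t_1})$ and $V_2 := \textit{vertices}(C_{t_2})$. By Lemma~\ref{lemma:basic-properties}~(\ref{lemma:basic-properties.increasing}) we have $C_{t_1} \subseteq C_{t_2}$, hence $V_1 \subseteq V_2$ and $\textit{edges}(C_{t_1}) \subseteq \textit{edges}(C_{t_2})$. Write $C_{t_2}[W]$ for the subgraph of $C_{t_2}$ induced by a vertex set $W$. Since $C_{t_2}$ is acyclic by Lemma~\ref{lemma:acyclicity}, so are the induced subgraphs $C_{t_2}[V_1]$ and $C_{t_2}[V_2 \setminus V_1]$; choose topological orderings $r_1$ of $C_{t_2}[V_1]$ and $r_2$ of $C_{t_2}[V_2 \setminus V_1]$, and put $r := r_1 \cdot r_2$.

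Next I would verify that $r$ is a topological ordering of $C_{t_2}$. Every vertex of $C_{t_2}$ occurs exactly once in $r$. Take any edge $(u,v) \in \textit{edges}(C_{t_2})$: if $u,v \in V_1$ the edge is respected since $r_1$ is a topological ordering of $C_{t_2}[V_1]$; if $u,v \in V_2 \setminus V_1$ it is respected since $r_2$ is a topological ordering of $C_{t_2}[V_2 \setminus V_1]$; if $u \in V_1$ and $v \in V_2 \setminus V_1$ it is respected because $r_1$ entirely precedes $r_2$; and the remaining case $u \in V_2 \setminus V_1$, $v \in V_1$ is impossible: applying Lemma~\ref{lemma:no-backwards-edge} with $a := v \in \textit{vertices}(C_{t_1})$ and $b := u \in \textit{vertices}(C_{t_2}) - \textit{vertices}(C_{t_1})$ yields $(u,v) = (b,a) \notin C_t$ for every $t$, in particular $(u,v) \notin \textit{edges}(C_{t_2})$, a contradiction.

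Finally, I would argue that the prefix $r_1$ of $r$ is a topological ordering of $C_{t_1}$: it lists exactly the vertices $V_1 = \textit{vertices}(C_{t_1})$, and since $\textit{edges}(C_{t_1}) \subseteq \textit{edges}(C_{t_2})$ with both endpoints of every edge of $C_{t_1}$ lying in $V_1$, we get $\textit{edges}(C_{t_1}) \subseteq \textit{edges}(C_{t_2}[V_1])$, so $r_1$ respects every edge of $C_{t_1}$. The only substantive ingredient is the ``no backwards edge'' fact; everything else is routine bookkeeping with topological orderings, so I do not anticipate a real obstacle — the single point requiring care is matching the hypotheses of Lemma~\ref{lemma:no-backwards-edge} exactly (it needs the target vertex in the old vertex set and the source vertex strictly new), which is precisely the partition $V_1 \uplus (V_2 \setminus V_1)$ used above.
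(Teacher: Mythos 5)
Your proof is correct and follows essentially the same route as the paper's: both rest on acyclicity (Lemma~\ref{lemma:acyclicity}), the absence of backward edges from new to old vertices (Lemma~\ref{lemma:no-backwards-edge}), and monotonicity of the graph (Lemma~\ref{lemma:basic-properties}), concatenating an ordering of the old vertices with one of the new vertices. The only cosmetic difference is that the paper extracts the two blocks as subsequences of a single topological ordering of $C_{t_2}$, whereas you order the two induced subgraphs independently; both are valid.
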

\begin{proof}
    By Lemma~\ref{lemma:acyclicity}, there exists a topological ordering $s$ of $C_{t_2}$.
    By Lemma~\ref{lemma:no-backwards-edge}, there are no edges in $C_{t_2}$ from nodes in $\textit{vertices}(C_{t_2}) - \textit{vertices}(C_{t_1})$ to those in $\textit{vertices}(C_{t_1})$.
    Consequently, if we define $r := r_1 \cdot r_2$, where $r_1$ is the subsequence of $s$ with exactly the nodes in $C_{t_1}$ and $r_2$ is the subsequence of $s$ with exactly the nodes in $\textit{vertices}(C_{t_2}) - \textit{vertices}(C_{t_1})$, then $r$ forms a topological ordering of $C_{t_2}$.
    By Lemma~\ref{lemma:basic-properties}~(\ref{lemma:basic-properties.increasing}), $r_1$ is a topological ordering of $C_{t_1}$.
\end{proof}

\subsection{Commitment and Real-time Ordering}

Having established fundamental properties of the ABC Graph when used by the Algorithm~\ref{algo:universal-construction}, most notably acyclicity and, consequently, the existence of topological orderings, in this section we proceed to prove in Lemma~\ref{lemma:commitment} that an operation is committed before its corresponding invocation of the $publish$ method terminates, and in Lemma~\ref{lemma:precedence} that the algorithm satisfies real-time ordering.

\begin{lemma}
    \label{lemma:commitment}
    Upon termination of $\textit{publish}(op)$, the operation $op$ is committed.
\end{lemma}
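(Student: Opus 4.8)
The plan is to carry out a straightforward case analysis over the points at which an invocation of $\textit{publish}(op)$ can return, and to show that at each such point $op$ has already been added to $C$, and remains in $C$ afterwards by monotonicity of the ABC Graph (Lemma~\ref{lemma:basic-properties}, item~\ref{lemma:basic-properties.increasing}), so that $op$ is committed at the very moment $\textit{publish}(op)$ terminates. Inspecting Algorithm~\ref{algo:universal-construction}, there are exactly three return statements: line~\ref{line:uc:return-1}, line~\ref{line:uc:return-2}, and line~\ref{line:uc:rc:return}.

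First, if $\textit{publish}(op)$ returns at line~\ref{line:uc:return-1}, then the guard of the enclosing conditional (line~\ref{line:uc:checkC}) held, i.e.\ $op \in \textit{vertices}(C)$ for the $C$ read at line~\ref{line:uc:read-2}; hence $op$ is already committed, and stays committed since $C$ never loses vertices. Second, if $\textit{publish}(op)$ returns at line~\ref{line:uc:return-2}, then immediately before, at line~\ref{line:uc:addC}, the process invoked $G.\textit{add}_C(op,\textit{vertices}(C))$, which by the specification of the ABC Graph in Section~\ref{sec:protocol:abc-graph} inserts $op$ into $\textit{vertices}(C)$; no step of this process between line~\ref{line:uc:addC} and line~\ref{line:uc:return-2} removes it, and monotonicity keeps $op$ committed through the return.

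The remaining case is a return from within the conflict-resolution loop (lines~\ref{line:uc:rc:loop}--\ref{line:uc:rc:end}). The key observation is that this loop contains a single return statement, line~\ref{line:uc:rc:return}, guarded by the test $op \in C$ at line~\ref{line:uc:rc:check-before-return}; therefore, whenever $\textit{publish}(op)$ exits conflict resolution it does so with $op$ already in $C$, i.e.\ committed, and once more monotonicity ensures this persists until the response event. Note that I do not need to show the loop ever terminates---that is precisely the wait-freedom argument deferred to Appendix~\ref{sec:appendix:uc:wait-freedom}; the present lemma is conditional (``upon termination''), so a run with a non-terminating invocation of $\textit{publish}(op)$ satisfies the statement vacuously.

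I expect no genuine obstacle: the lemma is essentially an inspection of the control flow of Algorithm~\ref{algo:universal-construction}. The only points requiring a little care are (i) invoking the correct direction of the ABC Graph specification, namely that $\textit{add}_C(op,\cdot)$ really does make $op$ a vertex of $C$, and (ii) appealing to monotonicity of $C$ (Lemma~\ref{lemma:basic-properties}, item~\ref{lemma:basic-properties.increasing}) to bridge from ``$op \in C$ at the line where the return value is computed'' to ``$op \in C$ at the response event'', so that the word \emph{committed} in the statement is justified at termination rather than merely at some earlier step.
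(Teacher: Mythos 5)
Your proof is correct and follows essentially the same route as the paper's: a case analysis over the three return points (lines~\ref{line:uc:return-1},~\ref{line:uc:return-2},~\ref{line:uc:rc:return}), observing that each is either guarded by a check that $op \in C$ or immediately preceded by $\textit{add}_C(op,\cdot)$. Your explicit appeal to monotonicity of $C$ to carry commitment through to the response event is a minor addition the paper leaves implicit.
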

\begin{proof}
    The method $\textit{publish}$ can return at lines~\ref{line:uc:return-1},~\ref{line:uc:return-2} and~\ref{line:uc:rc:return}.
    If $\textit{publish}(op)$ returned at line~\ref{line:uc:return-1}, it had verified that $op$ is committed in line~\ref{line:uc:checkC}.
    If $\textit{publish}(op)$ returned at line~\ref{line:uc:return-2}, it committed $op$ in line~\ref{line:uc:addC}.
    If $\textit{publish}(op)$ returned at line~\ref{line:uc:rc:return}, it had verified that $op$ is committed in line~\ref{line:uc:rc:check-before-return}.
    Consequently, $op$ is committed at return.
\end{proof}

\begin{lemma}
    \label{lemma:precedence}
    If $\textit{publish}(op)$ precedes $\textit{publish}(op')$, then $op$ precedes $op'$ in every topological ordering of $C_t$ for every $t$ after $op'$ has been committed.
\end{lemma}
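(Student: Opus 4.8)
The plan is to prove something slightly stronger and cleaner: that the \emph{first} time $op'$ is added to $C$, the edge $(op,op')$ is already created. Since $C$ only grows (Lemma~\ref{lemma:basic-properties}~(\ref{lemma:basic-properties.increasing})), this edge then persists in $C_t$ for every later $t$, and a direct edge $(op,op')$ forces $op$ before $op'$ in every topological ordering of $C_t$. So essentially all the work concerns a single atomic extension.

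First I would fix notation. By Lemma~\ref{lemma:commitment}, $op$ is committed before $\textit{publish}(op)$ terminates, and by the precedence hypothesis this happens before $\textit{publish}(op')$ is invoked; let $t_{op}$ be the timestamp of the first commit of $op$, so $op \in \textit{vertices}(C_t)$ for all $t \geq t_{op}$. Let $E := (t_{op'}-1 \rightarrow t_{op'})$ be the atomic extension that adds $op'$ to $C$ for the first time; it is caused by some invocation of $\textit{add}_C(op',M)$, executed either at line~\ref{line:uc:addC} (necessarily by $\textit{publish}(op')$ itself) or at line~\ref{line:uc:rc:addC} (by some $\textit{publish}(op'')$ whose iteration $k$ received $\hat{op}=op'$ from $\textit{CONS}_k$). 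By Lemma~\ref{lemma:previous-nodes}, $M = \textit{vertices}(C(E)) = \textit{vertices}(C_{t'})$, where $C_{t'}$ is the graph returned by the $\textit{read}()$ call that produced $M$. The key step is to show that this $\textit{read}()$ is performed \emph{after} $op$ was committed, i.e.\ $t' \geq t_{op}$; monotonicity then gives $op \in \textit{vertices}(C_{t'}) = M$, hence $(op,op') \in \textit{edges}(C_{t_{op'}})$, and we are done.

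I would then split into the two cases. The conflict-free case (line~\ref{line:uc:addC}) is immediate: the call is made by $\textit{publish}(op')$ and $M$ is the vertex set read at line~\ref{line:uc:read-2}, which is performed after $\textit{publish}(op')$ is invoked, hence after $op$ was committed. The conflict-resolution case (line~\ref{line:uc:rc:addC}) is the delicate one, and this is where I expect the main obstacle: the commit may be performed by a \emph{different} invocation $\textit{publish}(op'')$ whose iteration may have begun long before $\textit{publish}(op')$ was invoked, and $M$ is the vertex set read at line~\ref{line:uc:rc:read-2}, immediately after the consensus call returns. To handle it, I would first isolate a small fact about consensus: if an invocation of $\textit{propose}$ returns a value $w$, then some invocation of $\textit{propose}(w)$ must have \emph{started} before that invocation returned (otherwise, in any linearization of the consensus object, the first and therefore decided operation would be a $\textit{propose}(w)$ whose linearization point lies after the response of the returning operation, contradicting that it is linearized first). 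Applying this to $\textit{publish}(op'')$'s call on $\textit{CONS}_k$, some process $P$ invoked $\textit{CONS}_k.\textit{propose}(op')$ before that call returned; by lines~\ref{line:uc:rc:read-1} and~\ref{line:uc:rc:select}, $P$ had read $op'$ as a booked, not-yet-committed operation in the same iteration before proposing it, so $op'$ was already booked at that moment, hence booked before $\textit{publish}(op'')$'s consensus call returned. By Lemma~\ref{lemma:basic-properties}~(\ref{lemma:basic-properties.A-and-B-once}), $op'$ can be booked only by $\textit{publish}(op')$, at line~\ref{line:uc:addB}, which runs after $\textit{publish}(op')$ is invoked and therefore after $op$ was committed. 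Thus $\textit{publish}(op'')$'s read at line~\ref{line:uc:rc:read-2} happens after $op$ was committed, which is exactly what was needed.

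Finally I would assemble the pieces: in both cases $op \in M$, so the edge $(op,op')$ is present in $C_{t_{op'}}$ and, by monotonicity, in $C_t$ for every $t$ after $op'$ is committed; hence $op$ precedes $op'$ in every topological ordering of such $C_t$. The only real subtlety, and the step I would write out most carefully, is the consensus-ordering argument in the conflict-resolution case, since it is what bridges the gap between the iteration of a process that knows nothing \emph{a priori} about $\textit{publish}(op')$ and the real-time fact that $op'$ could not even be proposed until it was booked.
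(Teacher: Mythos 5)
Your proposal is correct and follows essentially the same route as the paper's proof: reduce to showing that the $\textit{read}()$ supplying the dependency set $M$ for the first commit of $op'$ occurs after $op$ is committed, splitting on whether $op'$ is committed via line~\ref{line:uc:addC} or line~\ref{line:uc:rc:addC}, and in the latter case tracing back through the booking of $op'$ at line~\ref{line:uc:addB}. The only difference is that you spell out explicitly (via linearizability/validity of the consensus object) why some $\textit{propose}(op')$ must have started before the committing process's consensus call returned, a step the paper asserts more tersely; this is a welcome addition rather than a divergence.
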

\begin{proof}
    Suppose $\textit{publish}(op)$ precedes $\textit{publish}(op')$. By Lemma~\ref{lemma:commitment}, $op$ is committed upon the termination of $\textit{publish}(op)$. By Lemma~\ref{lemma:previous-nodes}, there exists a graph $C_{t'}$ whose nodes have been used as incoming edges when committing $op'$.
    Suppose that this graph $C_{t'}$ has been read after the termination of $\textit{publish}(op)$.
    We therefore have $op \in C_{t'}$, implying that the edge $(op,op')$ is added to $C$ by the time $\textit{publish}(op')$ terminates.
    The claim then follows directly.
    
    It remains to show that $C_{t'}$ was read after the termination of $\textit{publish}(op)$.
    If $op'$ was committed by $publish(op')$, $C_{t'}$ was read during the execution of $\textit{publish}(op')$ and therefore after the termination of $\textit{publish}(op)$.
    Otherwise, suppose that $op'$ was committed by another invocation $p$ of $\textit{publish}$ and thus in conflict resolution.
    The graph $C_{t'}$ was read by $p$ in line~\ref{line:uc:rc:read-2} after some invocation of $\textit{publish}$ had proposed $op'$ to some consensus object.
    This occurred after $op'$ had been booked by $\textit{publish}(op')$ (Lemma~\ref{lemma:basic-properties}~(\ref{lemma:basic-properties.A-and-B-once})) and therefore after the termination of $\textit{publish(op)}$.
\end{proof}

\subsection{Equivalence of Topological Orderings}
After proving the first three claims of Lemma~\ref{lemma:correctness}, it remains to establish the final two. 
Before returning to these claims in Section~\ref{sec:appendix:uc:linearizability:proof-completion}, we first prove Lemma~\ref{lemma:equivalence-of-every-topological-ordering} as an intermediate result.

Recall the definition of the graph $C(E_t)$ from Lemma~\ref{lemma:previous-nodes} for an atomic extension $E_t$ that is the result of an invocation of $\textit{add}_C(op,M)$. We have $M = \textit{vertices}(C(E_t))$.
If $op$ was added to $C$ for the first time at $E_t$, $C(E_t)$ consists precisely of those operations of $C_t$ (equivalently, of $C_{t+1}$) that have an outgoing edge to $op$ in $C_{t+1}$.

\begin{definition}
    \label{def:P(E_t)}
    Let $E_t := t \rightarrow t+1$ be an atomic extension of the ABC Graph and let this extension be the result of an invocation of $\textit{add}_C(op,M)$. Let $C(E_t)$ be defined as in Lemma~\ref{lemma:previous-nodes}, such that $M = \textit{vertices}(C(E_t))$.
    Furthermore, let $op \not\in C_t$.
    
    We define $P(E_t) \subseteq C_t$ as the subgraph that is disjoint from $C(E_t)$.
\end{definition}

Given $P(E_t)$ as defined in Definition~\ref{def:P(E_t)}, for all $op' \in C_{t+1}$, exactly one of the following holds: $op' \in C(E_t)$, $op' \in P(E_t)$, or $op' = op$.
Note that we only define $P(E_t)$ in the case of $op \notin C_t$, i.e., $op$ is committed for the first time.
Intuitively, $P(E_t)$ captures the portion of the graph that was added after $C(E_t)$ but before $op$ was added to $C_t$.\footnote{The extension(s) from $C(E_t)$ to $C_t$ may include additional edges within $C(E_t)$, as well as edges from nodes in $C(E_t)$ to nodes in $P(E_t)$. This information is not represented when partitioning $C_t$ in $C(E_t)$ and $P(E_t)$.} Although edges from nodes in $C(E_t)$ to nodes in $P(E_t)$ may exist, the reverse is ruled out by transitivity (Lemma~\ref{lemma:transitivity}) and the definition of $P(E_t)$: if a node in $P(E_t)$ had an edge to a node in $C(E_t)$, it would (by transitivity) also have an edge to $op$ and thus would belong to $C(E_t)$, contradicting disjointness.\footnote{Note that $op \notin C_t$ by assumption and therefore $(op',op) \in C_{t+1}$ implies $op' \in C(E_t)$ for every $op' \in C_{t+1}$.}

\begin{definition}
    \label{def:graph-cap-sequence}
    Let $G$ be a graph whose nodes are operations, and let $s$ be a sequence of operations. We write $G \cap s$ for the subgraph of $G$ induced by the set of nodes appearing in $s$.
\end{definition}

\begin{lemma}
    \label{lemma:equivalence-of-every-topological-ordering}
    For each fixed $t \geq 0$, all topological orderings of $C_t$ are equivalent. Moreover, this equivalence holds not only for the complete sequences but also for any pair of prefixes consisting of the same set of operations.
\end{lemma}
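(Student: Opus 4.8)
The plan is to prove the statement by strong induction on $t$, carrying the prefix claim inside the induction hypothesis, since it is needed in the inductive step. The base case $t=0$ is vacuous: $C_0$ is empty, its only topological ordering is the empty sequence, whose only prefix is empty. For the inductive step, let $E_t := t\to t+1$ be the atomic extension, which is caused by a single $\textit{add}$ call. If it is $\textit{add}_A$ or $\textit{add}_B$, then $C_{t+1}=C_t$ and the hypothesis applies verbatim. If it is $\textit{add}_C(op,M)$ with $op\in\textit{vertices}(C_t)$ already, then $C_t\subseteq C_{t+1}$ by Lemma~\ref{lemma:basic-properties}~(\ref{lemma:basic-properties.increasing}), so every topological ordering of $C_{t+1}$, and every prefix of one, is also a topological ordering (resp.\ prefix thereof) of $C_t$; again the hypothesis closes the case. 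The only real work is thus the case where $\textit{add}_C(op,M)$ commits $op$ for the first time.

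In that case I would use Lemma~\ref{lemma:previous-nodes} and Definition~\ref{def:P(E_t)} to partition $C_t = C(E_t)\sqcup P(E_t)$, where $\textit{vertices}(C(E_t))=M$ is exactly the in-neighborhood of $op$ in $C_{t+1}$, where $\textit{vertices}(C(E_t))$ is downward closed in $C_t$ (by transitivity, Lemma~\ref{lemma:transitivity}: a predecessor of a predecessor of $op$ is again a predecessor of $op$), and where there is no edge from $\textit{vertices}(P(E_t))$ to $\textit{vertices}(C(E_t))$, as observed after Definition~\ref{def:P(E_t)}. The heart of the proof is then the claim $(\star)$: for every topological ordering $l$ of $C(E_t)$, $op$ commutes in $l$ with $\textit{vertices}(P(E_t))$ in the sense of Definition~\ref{def:commutativity}, i.e., with every ordering of every subset. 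Since all topological orderings of $C(E_t)=C_{t'}$ are equivalent by the induction hypothesis and commutativity depends on its first argument only through its $\simeq$-class, it suffices to establish $(\star)$ for one such $l$.

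Granting $(\star)$, I would finish as follows. Take two topological orderings $\tau_1,\tau_2$ of $C_{t+1}$ and write $\tau_i = x_i\cdot op\cdot y_i$. Deleting $op$ leaves $s_i := x_i\cdot y_i$, a topological ordering of the induced subgraph of $C_{t+1}$ on $\textit{vertices}(C_t)$, which equals $C_t$ (every edge of $C_{t+1}$ is either an edge of $C_t$ or points into $op$). As all in-neighbors of $op$ precede it, $\textit{vertices}(C(E_t))\subseteq\textit{vertices}(x_i)$ and hence $\textit{vertices}(y_i)\subseteq\textit{vertices}(P(E_t))$. Since $\textit{vertices}(x_i)$ is downward closed in $C_t$ (being the vertex set of a prefix of $s_i$), contains $\textit{vertices}(C(E_t))$, and has no back-edge into it, the prefix part of the induction hypothesis gives $x_i\simeq l\cdot z_i$ for a topological ordering $l$ of $C(E_t)$ and a topological ordering $z_i$ of $C_t\cap(\textit{vertices}(x_i)\setminus\textit{vertices}(C(E_t)))$. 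For any subsequence $w$ of $y_i$, the sequence $z_i\cdot w$ is an ordering of a subset of $\textit{vertices}(P(E_t))$, so by $(\star)$ we get $l\cdot z_i\cdot w\cdot op\simeq l\cdot op\cdot z_i\cdot w\simeq l\cdot z_i\cdot op\cdot w$; hence $op$ commutes in $l\cdot z_i\simeq x_i$ with $w$. Thus $op$ commutes in $x_i$ with every subset of $\textit{vertices}(y_i)$, and by the key consequence of Lemma~\ref{lemma:equivalence-preservation} with $s=s'$ we obtain $\tau_i = x_i\cdot op\cdot y_i\simeq x_i\cdot y_i\cdot op = s_i\cdot op$. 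Since $s_1\simeq s_2$ by the induction hypothesis, $\tau_1\simeq s_1\cdot op\simeq s_2\cdot op\simeq\tau_2$. For two prefixes with equal vertex set: if neither contains $op$ they are prefixes of $s_1,s_2$ and the hypothesis applies; if both contain $op$, each has the form $x_i\cdot op\cdot y_i'$ with $y_i'$ a prefix of $y_i$, and moving $op$ to the end by the same commutativity argument yields $(x_i\cdot y_i')\cdot op$ with $x_i\cdot y_i'$ a prefix of $s_i$, so the prefix part of the hypothesis closes it.

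The obstacle is $(\star)$, which I would isolate as a separate lemma drawing on the commit discipline of Algorithm~\ref{algo:universal-construction}. If $op$ is committed through the conflict-free path (line~\ref{line:uc:addC}), the committing invocation computed $l=\textit{linearize}(C(E_t))$ at line~\ref{line:uc:linearize-C}, and the guard of line~\ref{line:uc:commutativity} already certifies commutativity of $op$ in $l$ with every subset of $A'-\textit{vertices}(C(E_t))-\{op\}$, where $A'$ is the value of $A$ read at line~\ref{line:uc:read-2}; one then has to show $\textit{vertices}(P(E_t))\subseteq A'$, and for an operation $op'\in\textit{vertices}(P(E_t))$ announced only after that read, argue that $\textit{publish}(op')$ must itself have seen $op$ in $A$ (as $op$ was announced earlier) and hence either already found $op$ committed or certified commutativity with $op$, propagating the fact along the chain of such operations. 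If $op$ is committed through conflict resolution (line~\ref{line:uc:rc:addC}), $op$ was selected as a non-committed operation of minimal $B$-value before being ordered by consensus; here one invokes the prioritization argument together with the commit-adopt style guarantee that a directly committed operation lands in the causal past of every concurrent operation conflicting with it, to show that an $op'\in\textit{vertices}(P(E_t))$ not commuting with $op$ would already occur in $C(E_t)$, a contradiction. Reconciling the context $l$ against which commutativity was actually checked with the recorded dependency set $C(E_t)$, and ruling out conflicting operations slipping into $P(E_t)$, is the step I expect to be genuinely delicate.
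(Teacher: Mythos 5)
Your outer structure matches the paper's: induction on $t$ carrying the prefix claim, dispatching the $\textit{add}_A$/$\textit{add}_B$/edges-only cases, partitioning $C_t$ into $C(E_t)$ and $P(E_t)$ via Lemma~\ref{lemma:previous-nodes}, and reducing everything to the ability to move $op$ past the operations of $P(E_t)$. The finish you give, granting $(\star)$, is also sound. The problem is that $(\star)$ itself --- the entire technical content of the lemma --- is left unproven, and as you state it, it is stronger than what the algorithm certifies and stronger than what the paper establishes. The commutativity check at line~\ref{line:uc:commutativity} certifies that $op$ commutes in $l$ with every subset of the $\textit{CUR}$ that $\textit{publish}(op)$ \emph{saw}; but $P(E_t)$ can contain operations $op'$ announced after that read. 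For such an $op'$, what is certified is that \emph{$op'$} commutes with subsets containing $op$ \emph{in $op'$'s own context} $c_{op'}$, which is a linearization of a strictly larger graph than $C(E_t)$. Commutativity of $op'$ with $\{op\}$ after $c_{op'}=l\cdot z$ does not give commutativity of $op$ with $\{op'\}$ after $l$ alone: the intervening operations $z$ change the state, and nothing rules out that $op$ and $op'$ commute after $z$ but not before it. So $(\star)$, which demands that $op$ commute in $l$ with \emph{every ordering of every subset} of $\textit{vertices}(P(E_t))$, is not derivable from the guards; I would expect it to be outright false for some objects.

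What the paper proves instead (Lemma~\ref{lemma:equivalence-of-any-placement}) is the weaker statement that all placements of $op$ within one \emph{fixed} topological ordering $w$ of $P(E_t)\cap o$, after $c$, are equivalent --- which is exactly enough for your concluding step. Its proof is an inner induction that adds the operations of $P(E_t)$ one by one in order of the recency of their reads at line~\ref{line:uc:read-2}; at each step the newly added $op'$ is the one whose certified commutativity (in \emph{its} context, matched to the current context via the outer induction hypothesis) lets one shuttle $op'$ out of the way, apply the inner hypothesis to relocate $op$, and shuttle $op'$ back, with a further case split between the conflict-free and conflict-resolution commit paths (in the latter, one first shows every member of $P(E_t)$ was committed conflict-free and saw $op$, using the minimal-$B$-value prioritization). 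Your ``propagate the fact along the chain'' remark gestures at this, but the chain is precisely where the work lies, and it yields only the fixed-ordering statement, not $(\star)$. As submitted, the proposal has a genuine gap at its central claim.
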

\begin{proof}

Let $C_0$, $\dots$, $C_t$, $C_{t+1}$, $\dots$ be the states of graph $C$ as it evolves over time. 
As the induction hypothesis, assume that for all $t' \leq t$, any two topological orderings of $C_{t'}$ are equivalent - not only as full sequences, but also in the sense that any pair of prefixes consisting of the same set of operations are themselves equivalent.

As the hypothesis holds trivially for the empty graph $C_0$, it remains to show that this property also holds for $C_{t+1}$. 
If the atomic extension $E_t := t \rightarrow t+1$ is the result of an invocation of $\textit{add}_A$ or $\textit{add}_B$, we have $C_{t+1} = C_t$ and the statement follows directly from the induction hypothesis.
If only edges are added during $E_t$, the statement also follows directly from the induction hypothesis, as every topological ordering of $C_{t+1}$ is also a valid topological ordering of $C_t$.

Now suppose that a new operation $op$ is committed (added to $C$) for the first time in the atomic extension $E_t$. 
Let us consider two prefixes of topological orderings of $C_{t+1}$ that consist of the same set of operations. If $op$ is not among these operations, the equivalence follows from the induction hypothesis. In the following, we assume that $op$ is included in both prefixes.

We define two subgraphs to analyze the structural implications of adding $op$ to $C_t$. 
First, let $C(E_t)$, as defined in Lemma~\ref{lemma:previous-nodes}, denote the subgraph that has been read when $op$ was added - that is, every node in $C(E_t)$ has an outgoing edge to $op$. Note that, since $op$ was newly added, only nodes in $C(E_t)$ have edges to it in $C_{t+1}$.
As every node in $C(E_t)$ has an outgoing edge to $op$, the nodes in $C(E_t)$ must appear before $op$ in any topological ordering of $C_{t+1}$. 
Therefore, if a prefix of a topological ordering includes $op$, it must also include all nodes of $C(E_t)$. Hence, the presence of $op$ in both prefixes implies that the nodes of $C(E_t)$ are also included in both prefixes.
Second, let $P(E_t) \subseteq C_t$, as defined in Definition~\ref{def:P(E_t)}, denote the subgraph that is disjoint from $C(E_t)$.
As $op\notin C_t$,  $P(E_t)$ does not contain $op$.

Let $o = o_0 \cdot op \cdot o_1$ be a prefix of a topological ordering of $C_{t+1}$. Since every node in $C(E_t)$ has an edge to $op$, it follows that $\textit{vertices}(C(E_t)) \subseteq o_0$. By the induction hypothesis, we have $o_0 \simeq c \cdot \tilde{o_0}$, where $c$ is any topological ordering of $C(E_t)$, and $\tilde{o_0}$ is the subsequence of $o_0$ containing only those operations not in $C(E_t)$.\footnote{A subsequence is obtained from a sequence by removing any number of elements from it without changing the order of the remaining elements. The subsequence does not have to appear continuously in the original sequence.}
Trivially, $o_0 \cdot o_1$ is a topological ordering of $C_t$.
Moreover, since there are no edges from nodes in $P(E_t)$ to nodes in $C(E_t)$, we can choose $c$, the topological ordering of $C(E_t)$, such that $c \cdot \tilde{o_0} \cdot o_1$ is also a topological ordering of $C_t$.
This allows us to decompose the topological ordering $o_0 \cdot o_1$ into two parts: $c$, a topological ordering of $C(E_t)$, and $\tilde{o_0} \cdot o_1$, a topological ordering of $P(E_t) \cap o$. Using $o_0 \simeq c \cdot \tilde{o_0}$, we conclude $o = o_0 \cdot op \cdot o_1 \simeq c \cdot \tilde{o_0} \cdot op \cdot o_1$. It remains to show that
\begin{align*}
    c \cdot \tilde{o_0} \cdot op \cdot o_1 \simeq c \cdot \tilde{o_0} \cdot o_1 \cdot op,
\end{align*}
which implies $o \simeq o_0 \cdot o_1 \cdot op$, i.e., that $op$ can be delayed to the end of the sequence without affecting the resulting state or any response value. Once this is established, the induction hypothesis for $C_{t+1}$ follows from the induction hypothesis for $C_t$ and the transitivity of the equivalence relation in the following way. Consider two prefixes $o = o_0 \cdot op \cdot o_1$ and $o' = o_0' \cdot op \cdot o_1'$ of topological orderings of $C_{t+1}$, and assume that both prefixes consist of the same set of operations. We conclude
\begin{align*}
    o = o_0 \cdot op \cdot o_1 \simeq o_0 \cdot o_1 \cdot op \simeq o_0' \cdot o_1' \cdot op \simeq o_0' \cdot op \cdot o_1' = o'.
\end{align*}
The equivalence $c \cdot \tilde{o_0} \cdot op \cdot o_1 \simeq c \cdot \tilde{o_0} \cdot o_1 \cdot op$ is shown by Lemma \ref{lemma:equivalence-of-any-placement} below.
\end{proof}

\begin{lemma}
    \label{lemma:equivalence-of-any-placement}
    Suppose that an operation $op$ is committed for the first time in the atomic extension $E_t := t \rightarrow t+1$, and the statement of Lemma~\ref{lemma:equivalence-of-every-topological-ordering} holds for every $t'$ with $0 \leq t' \leq t$.\footnote{This is the inductive hypothesis on $C_t$ in the proof of the lemma above.}

    Let $o$ be a prefix of a topological ordering of $C_{t+1}$ that contains $op$ and let $c$ be a topological ordering of $C(E_t)$ that is also a prefix of some topological ordering of $C_t$.

    For any fixed topological ordering $w$ of $P := P(E_t) \cap o$, all placements of $op$ in $w$ after $c$ are equivalent, i.e., $c \cdot w_1 \cdot op \cdot w_2 \simeq c \cdot w_1' \cdot op \cdot w_2'$ for all $w_1 \cdot w_2 = w_1' \cdot w_2' = w$.
\end{lemma}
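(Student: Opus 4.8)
The plan is to prove that $op$ commutes, in the state reached after $c$, with every subset of the operations appearing in $w = P(E_t) \cap o$, and then to invoke the corollary of Lemma~\ref{lemma:equivalence-preservation} (stated just after it: "if an operation commutes with every subset of a set, every placement of this operation in this set is equivalent"). Concretely, I want to show that for every operation $op' \in \textit{vertices}(P)$, and more generally every subset $S \subseteq \textit{vertices}(P)$, we have $c \cdot s \cdot op \simeq c \cdot op \cdot s$ for every ordering $s$ of $S$; then the desired chain $c \cdot w_1 \cdot op \cdot w_2 \simeq c \cdot w_1' \cdot op \cdot w_2'$ follows by repeated transposition, exactly as in the corollary. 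So the whole lemma reduces to a single commutativity claim about $op$ against subsets of $P(E_t)$.

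To establish that commutativity, I would trace back to the moment $E_t$ at which $op$ was committed for the first time. The key structural fact is \emph{why} $op$ was added to $C$ with dependency set exactly $\textit{vertices}(C(E_t))$ rather than a larger set. There are two cases, corresponding to lines~\ref{line:uc:addC} and~\ref{line:uc:rc:addC} of Algorithm~\ref{algo:universal-construction}. In the conflict-free case (line~\ref{line:uc:addC}), the committing invocation $\textit{publish}(op)$ had read a graph $C' = C(E_t)$ and a set $A'$, and passed the commutativity check on line~\ref{line:uc:commutativity}: every subset of $A' - \textit{vertices}(C') - \{op\}$ commutes with $op$ in $l = \textit{linearize}(C')$. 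Now $P(E_t)$ consists of nodes that were committed after $C(E_t)$ was read but before $op$ was added, and each such node, being already announced (by Lemma~\ref{lemma:basic-properties}~(\ref{lemma:basic-properties.subset}), $\textit{vertices}(C) \subseteq$ announced operations only in the reverse direction — so I must be careful here), I would argue was present in $A'$: more precisely, any node $op'$ of $P(E_t)$ that is \emph{not} in $A'$ must have been announced strictly after $\textit{publish}(op)$'s read on line~\ref{line:uc:read-2}, but then $\textit{publish}(op')$ reads $C$ after $op$ is committed and hence sees $op \in \textit{vertices}(C)$ — contradicting $op' \in P(E_t)$ being disjoint from $C(E_t)$'s "future". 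The upshot is $\textit{vertices}(P(E_t)) \subseteq A' - \textit{vertices}(C') - \{op\}$, so the line~\ref{line:uc:commutativity} check already guarantees $op$ commutes in $l$ with every subset of $\textit{vertices}(P(E_t))$. Since $c$ is a topological ordering of $C(E_t) = C'$ and (by the inductive hypothesis, Lemma~\ref{lemma:equivalence-of-every-topological-ordering} for $t' \le t$) all such orderings are equivalent, the state after $c$ equals the state after $l$, and commutativity transfers from $l$ to $c$. In the conflict-resolution case (line~\ref{line:uc:rc:addC}), $op = \hat{op}$ was returned by a consensus object, and I would need the analogous argument: by the prioritization of minimal $B$-value operations, every concurrent non-committed operation has encountered $op$; translating this into the statement that $P(E_t)$ nodes commute with $op$ in $c$ is the more delicate part.

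The main obstacle I anticipate is this last translation in the conflict-resolution branch: unlike the conflict-free branch, there is no explicit commutativity check guarding line~\ref{line:uc:rc:addC}, so I cannot read the needed commutativity off the code directly. Instead I expect to need the invariant — presumably proved elsewhere or needing to be extracted here — that an operation committed in conflict resolution was selected as a minimal-$B$-value non-committed operation, together with the footnote argument that such an operation has been or will be encountered on line~\ref{line:uc:read-2} by every non-committed operation; in particular every node of $P(E_t)$, when it was committed, must have \emph{either} encountered $op$ already in $C$ (impossible, since it is disjoint from $C(E_t)$ and committed before $op$) \emph{or} have had $op$ among the operations it checked commutativity against, forcing commutativity of that node (hence, by Lemma~\ref{lemma:equivalence-preservation}, of every subset) with $op$ in the appropriate state. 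Reconciling "the state $op'$ checked against" with "the state after $c$" will again rely on the inductive hypothesis that topological orderings of the relevant earlier $C_{t'}$ are equivalent. Once every node of $P(E_t)$ is shown to commute with $op$ in $c$, the subset-commutativity and hence the equivalence of all placements follows mechanically from the corollary to Lemma~\ref{lemma:equivalence-preservation}.
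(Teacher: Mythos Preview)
Your reduction to ``$op$ commutes in $c$ with every subset of $\textit{vertices}(P)$'' is where the argument breaks.

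In the conflict-free case you claim $\textit{vertices}(P(E_t)) \subseteq A' - \textit{vertices}(C') - \{op\}$, arguing that any $op' \in P(E_t)$ not in $A'$ would have to read $C$ only after $op$ is committed. This is false: $op'$ can be announced \emph{after} $\textit{publish}(op)$'s read on line~\ref{line:uc:read-2} and still be committed \emph{before} $op$ --- the entire execution of $\textit{publish}(op')$ can fit in the window between $op$'s read and $op$'s $\textit{add}_C$ on line~\ref{line:uc:addC}. Such $op'$ lies in $P(E_t)$ but not in $A'$, so $op$'s commutativity check on line~\ref{line:uc:commutativity} never saw it. The paper calls this set $P - \textit{CUR}$ and handles it separately.

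The deeper problem, which also blocks your conflict-resolution case, is that the commutativity information you can extract is \emph{distributed}: for each $op' \in P$ not covered by $op$'s own check, what you know is that $\textit{publish}(op')$ verified that $op'$ commutes with every subset of \emph{its} $\textit{CUR}_{op'}$ in \emph{its} state $c_{op'}$. Knowing that each individual $op'$ commutes with $op$ in some state $c_{op'}$ does not yield that $op$ commutes with every subset of $P$ in state $c$; commutativity is not compositional, and the base states differ. The paper closes this gap with an inner induction: it orders the operations of $P$ by the recency of their line~\ref{line:uc:read-2} read and inserts them one by one, at each step using the outer hypothesis to rewrite $c \cdot u^j$ as $c_{op'} \cdot u^j_{op'}$, then using $op'$'s own commutativity check (which covers $op$ together with all remaining operations, since they are all in $\textit{CUR}_{op'}$) to slide $op'$ past $op$, and finally rewriting back. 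This bookkeeping is the actual content of the lemma; it cannot be replaced by a single appeal to the corollary of Lemma~\ref{lemma:equivalence-preservation}.
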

\begin{proof}
We analyze the two cases, depending on whether $op$ has been committed via the conflict-free or the conflict resolution path, separately. 
We refer to the statement of Lemma~\ref{lemma:equivalence-of-every-topological-ordering} as the \textbf{Outer Induction Hypothesis}.

\textbf{Case~1: Conflict resolution path.}
Let $op$ be committed via the conflict resolution path. There exists an iteration $\hat{k}$ such that $\textit{CONS}_{\hat{k}}$ stores $op$. 
Let $p$ be the invocation of $publish$ that committed $op$.
As $p$ can reach iteration $\hat{k}$ only if all iterations $k < \hat{k}$ are completed and the outcomes of $\textit{CONS}_k.\textit{propose}()$ are added to $C$ (lines~\ref{line:uc:rc:read-k} and~\ref{line:uc:rc:write-k}), $C(E_t)$ read by $p$ in line~\ref{line:uc:rc:read-2} includes all operations stored in $\textit{CONS}_k$ with $k < \hat{k}$.
Analogously, an operation committed in a subsequent iteration $k > \hat{k}$ will include $op$ in its dependencies. 
Thus, as $P(E_t)$ is defined as the subgraph of $C_t$ induced by $\textit{vertices}(C_t)-\textit{vertices}(C(E_t))$ and $op\notin C_t$, every $op' \in P(E_t)$ has been committed via the conflict-free path.

Let $\hat{p}$ be an invocation of $publish$ that \emph{proposed} $op$ to $\textit{CONS}_{\hat{k}}$ such that no proposal on that consensus object precedes the proposal of $\hat{p}$. Note that $\hat{p}$ could be different from $p$.
In line~\ref{line:uc:rc:select}, $\hat{p}$ selected an operation, i.e., $op$, from $B_{\hat{t}} - \textit{vertices}(C_{\hat{t}})$ with minimal $B$-value according to the ABC Graph at time $\hat{t}$, the time of $\hat{p}$'s read in line~\ref{line:uc:rc:read-1}.

Consider an arbitrary invocation $p'=\textit{publish}(op')$ such that $op' \notin \textit{vertices}(C_{\hat{t}})$.
If $op' \notin B_{\hat{t}}$, then $p'$ will encounter $op \in A$ in line~\ref{line:uc:read-2}, because $op \in \textit{vertices}(B_{\hat{t}}) \subseteq A_{\hat{t}}$.
If $op' \in B_{\hat{t}}$, then $B[op'] \geq B[op]$, since $op$ has minimal $B$-value for time $\hat{t}$. Hence, $p'$ had or will encounter $op \in A$ in line~\ref{line:uc:read-2}.
We conclude, that every invocation $p'=\textit{publish}(op')$, such that $op' \notin \textit{vertices}(C_{\hat{t}})$, encounters $op \in A$ in line~\ref{line:uc:read-2}.

Since $\hat{t}$, the time when $\hat{p}$ read in line~\ref{line:uc:rc:read-1}, precedes $\hat{p}$'s proposal in line~\ref{line:uc:rc:consensus} and no proposal precedes $\hat{p}$'s proposal, the read of $C(E_t)$ in line~\ref{line:uc:rc:read-2} succeeds $\hat{t}$.
By Lemma~\ref{lemma:basic-properties}~(\ref{lemma:basic-properties.increasing}), $C_{\hat{t}} \subseteq C(E_t)$ follows.
Let $op' \in P(E_t)$ be an arbitrary operation.
As established above, $op'$ was committed via the conflict-free path before $op$ was committed.
Since $op' \in C_{\hat{t}}$ implies $op' \in C(E_t)$, which is a contradiction, we have $op' \notin C_{\hat{t}}$.
As we have shown above, $op'$ has found $op \in A$ in line~\ref{line:uc:read-2}.
Since $op'$ could not have found $op \in C$, we conclude that $op'$ has found $op \in A - \textit{vertices}(C)$ in line~\ref{line:uc:read-2}
and verified commutativity in line~\ref{line:uc:commutativity}.

Recall that $o$ is defined as a prefix of a topological ordering of $C_{t+1}$ that contains $op$, and that $c$ is defined as a topological ordering of $C(E_t)$ that is also a prefix of some topological ordering of $C_t$.
Let $w$ be a topological ordering of $P := P(E_t) \cap o$. 
In the following we are showing that all placements of $op$ in $w$ after $c$ are equivalent.
We prove this by induction over $P$, starting with an empty set and  
inductively adding  operations to $P$ one by one.
To define the induction order, we sort the operations according to the recency of the state that the corresponding invocation of $\textit{publish}$ read in line \ref{line:uc:read-2}, from the oldest to the most recent. 
Note that we insert operations in their original place in $w$ such that the relative ordering of $w$ remains the same. 
We introduce the following notation. 
Let $w^j$ denote the sequences of operations that consists of the first $j$ operations according to the sorting of $P$ that we just defined. 
Our induction hypothesis is the following:

\textbf{Inner Induction Hypothesis:} For every fixed prefix $\hat{w}^j$ of $w^j$, all placements of $op$ in $\hat{w}^j$ after $c$ are equivalent.

\textbf{Inner Induction Base Case:} The base case is trivial as $w^0$, and therefore every prefix $\hat{w}^0$, is empty.

\textbf{Inner Induction Step:} Assume that the induction hypothesis holds true for $0 \leq j < |P|$. 
Let $\hat{w}^{j+1}$ be a prefix of $w^{j+1}$. 
If $\hat{w}^{j+1}$ is also a prefix of $w^j$, the inner induction hypothesis for $j$ applies and we are done. 
Let $\hat{w}^{j+1} = u^j \cdot op' \cdot v^j$ such that $u^j \cdot v^j$ is a prefix of $w^j$ and $op' \in P$ is the next operation, added in this inner induction step. 
We have to show, that all placements of $op$ in $\hat{w}^{j+1}$ after $c$ are equivalent.
This is done by showing that every placement of $op$ is equivalent to placing it after $\hat{w}^{j+1}$. 
The induction hypothesis for $j+1$ follows by transitivity.

By definition of $\hat{w}^{j+1}$, we have $c \cdot \hat{w}^{j+1} \cdot op = c \cdot u^j \cdot op' \cdot v^j \cdot op$.
As $c \cdot u^j$ is a prefix of some topological ordering of the graph $C_t$ we can use the Outer Induction Hypothesis.\footnote{By definition, $c$ is a topological ordering of $C(E_t)$ and a prefix of some topological ordering of $C_t$. It is easy to see that $c \cdot w$ is also a prefix of a topological ordering of $C_t$.
When removing operations from $w$ in the reverse order in which they were introduced during the induction, we always remove the operation that read the most recent state in line~\ref{line:uc:read-2}, which therefore cannot have an outgoing edge to any of the remaining operations.
By induction, $c \cdot w^j$ is a prefix of a topological ordering of $C_t$; since $u^j$ is a prefix of $\hat{w}^j$, which in turn is a prefix of $w^j$, the same holds for $c \cdot \hat{w}^j$ and $c \cdot u^j$.}
Let $c_{op'} \cdot u^j_{op'}$ be a topological ordering of the same prefix of $C_t$, where $c_{op'}$ are the operations that $\textit{publish}(op')$ read as $C$ in line \ref{line:uc:read-2}. 
The Outer Induction Hypothesis gives us $c \cdot u^j \simeq c_{op'} \cdot u^j_{op'}$. 
Note that $u^j_{op'} \cdot op' \cdot v^j \cdot op$ is a subset of $\textit{CUR}$ (the operations in $A-\textit{vertices}(C)$) of $\textit{publish}(op')$ in line \ref{line:uc:read-2}. 
Lemma \ref{lemma:equivalence-preservation} implies that all placements of $op'$ in this set are equivalent. It follows that
\begin{align*}
    c \cdot \hat{w}^{j+1} \cdot op
    &= c \cdot u^j \cdot op' \cdot v^j \cdot op \\
    &\simeq c_{op'} \cdot u^j_{op'} \cdot op' \cdot v^j \cdot op \\
    &\simeq c_{op'} \cdot u^j_{op'} \cdot v^j \cdot op \cdot op' \\
    &\simeq c \cdot u^j \cdot v^j \cdot op \cdot op'.
\end{align*}
We distinguish two cases, according to whether $op$ should be placed in $u^j$ or $v^j$.

\textbf{Case~1.1:} To place $op$ in $u^j = (u^j)_0 \cdot (u^j)_1$, where $(u^j)_0 \cdot (u^j)_1$ is an arbitrary split of $u^j$, we use the Inner Induction Hypothesis for $j$ in the following way. Note that $u^j \cdot v^j$ and therefore also $u^j$ are prefixes of $w^j$.
\begin{align*}
    c \cdot u^j \cdot v^j \cdot op \cdot op'
    &\simeq c \cdot u^j \cdot op \cdot v^j \cdot op' \\
    &\simeq c_{op'} \cdot u^j_{op'} \cdot op \cdot v^j \cdot op' \\
    &\simeq c_{op'} \cdot u^j_{op'} \cdot op \cdot op' \cdot v^j \\
    &\simeq c \cdot u^j \cdot op \cdot op' \cdot v^j \\
    &\simeq c \cdot (u^j)_0 \cdot op \cdot (u^j)_1 \cdot op' \cdot v^j.
\end{align*}

\textbf{Case~1.2:} Suppose that $op$ has to be placed in $v^j = (v^j)_0 \cdot (v^j)_1$, where $(v^j)_0 \cdot (v^j)_1$ is an arbitrary split of $v^j$.
\begin{align*}
    c \cdot u^j \cdot v^j \cdot op \cdot op'
    &\simeq c \cdot u^j \cdot (v^j)_0 \cdot op \cdot (v^j)_1 \cdot op' \\
    &\simeq c_{op'} \cdot u^j_{op'} \cdot (v^j)_0 \cdot op \cdot (v^j)_1 \cdot op' \\
    &\simeq c_{op'} \cdot u^j_{op'} \cdot op' \cdot (v^j)_0 \cdot op \cdot (v^j)_1 \\
    &\simeq c \cdot u^j \cdot op' \cdot (v^j)_0 \cdot op \cdot (v^j)_1.
\end{align*}

We conclude that every placement of $op$ in $w^{j+1}$ after $c$ is equivalent to placing $op$ after $c \cdot w^{j+1}$.
Transitivity of the equivalence relation gives us the Inner Induction Hypothesis for $j+1$.
The claim follows from the Inner Induction Hypothesis for $j = |P|$.

\textbf{Case~2: Conflict-free path.}
Assume now that $op$ has been committed via the conflict-free path. 
Let $w$ be a topological ordering of $P$. 
Recall that we have to show that every placement of $op$ in $w$ after $c$ is equivalent. 
As $op$ has been committed via the conflict-free path, commutativity with every subset of $\textit{CUR} \cap P$ has been checked. 
By Lemma~\ref{lemma:equivalence-preservation}, for any fixed ordering $p$ of $\textit{CUR} \cap P$, all placements of $op$ in $p$ after $c$ are equivalent, i.e., $c \cdot p_0 \cdot op \cdot p_1 \simeq c \cdot p_0' \cdot op \cdot p_1'$ for $p = p_0 \cdot p_1 = p_0' \cdot p_1'$.
We are therefore concerned about the operations that have been concurrently added and have not been encountered by $op$, i.e., $P - CUR$. 
Consider $op' \in P - CUR$.
By the definition of $\textit{CUR}$, $op'$ has not been read by $op$ in $A$ in line \ref{line:uc:read-2}.
Therefore, $op'$ could not have been proposed to any consensus object in conflict resolution, as $op'$ has a higher $B$-value than $op$, and $op$ has been booked prior to $op'$. 
From this we infer that every $op' \in P - CUR$ has been added via the conflict-free path. 
As $op'$ has not been found by $op$ in $A$ in line~\ref{line:uc:read-2}, $op'$ has read in line~\ref{line:uc:read-1} a (proper) superset of what $op$ has read in line~\ref{line:uc:read-2} (Lemma~\ref{lemma:basic-properties}~(\ref{lemma:basic-properties.increasing})).
In particular, $op'$ has found $op \in A - \textit{vertices}(C)$ in line~\ref{line:uc:read-2} and verified commutativity in line~\ref{line:uc:commutativity}.

To prove that all placements of $op$ in $w$ after $c$ are equivalent, we follow an argument analogous to that employed in the conflict resolution path case above.
In contrast to the conflict resolution path case, we do not need to begin with an empty set before incrementally adding operations. 
As shown above, we already got the equivalence of all placements of $op$ in the orderings of subsets of $\textit{CUR} \cap P$.
The operations in $P - CUR$ are handled as in the conflict resolution path case. 
We sort them according to the recency of the state they read in line \ref{line:uc:read-2}, from the oldest to the most recent. 
These operations are then inserted at their original positions in $w$, thereby preserving the relative ordering within $w$. 
We introduce the following notation. 
Let $w^j$ denote the sequences of operations that consists of the operations in $\textit{CUR} \cap P$ as well as the first $j$ operations according to the sorting of $P - CUR$ that we just defined. 
The Inner Induction Hypothesis and step is identical to the conflict resolution path case. 
The base case is directly given by Lemma~\ref{lemma:equivalence-preservation} as $op$ commutes with every subset of $\textit{CUR} \cap P$.
The claim follows directly from the Inner Induction Hypothesis for $j = |P-CUR|$.
\end{proof}

\subsection{Proof Completion}
\label{sec:appendix:uc:linearizability:proof-completion}
Lemmata~\ref{lemma:acyclicity}, \ref{lemma:commitment} and~\ref{lemma:precedence} correspond exactly to the first three items of Lemma~\ref{lemma:correctness}.
Having established Lemma~\ref{lemma:equivalence-of-every-topological-ordering} in the last section, we proceed to show the remaining two items of Lemma~\ref{lemma:correctness} in Lemmata~\ref{lemma:consistency} and~\ref{lemma:response-value-sanity}, thereby completing the proof of Lemma~\ref{lemma:correctness}.
Finally, we show in Lemma~\ref{lemma:linearizability} that Algorithm~\ref{algo:universal-construction} is linearizable.

\begin{lemma}
    \label{lemma:consistency}
    Let $t_1, t_2 \geq 0$, and let $s_1$ and $s_2$ be topological orderings of $C_{t_1}$ and $C_{t_2}$, respectively. For every operation $op$ such that $op \in s_1$ and $op \in s_2$, the sequential specification yields the same return value for $op$ in both $s_1$ and $s_2$.
\end{lemma}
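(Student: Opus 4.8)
The plan is to route both topological orderings through a single ``staircase'' ordering of the larger graph and then invoke the equivalence of topological orderings established in Lemma~\ref{lemma:equivalence-of-every-topological-ordering}. Assume without loss of generality that $t_1 \le t_2$ (the statement is symmetric in the two indices). Since $op \in s_1$, we have $op \in \textit{vertices}(C_{t_1})$, and Lemma~\ref{lemma:basic-properties}~(\ref{lemma:basic-properties.increasing}) then gives $op \in \textit{vertices}(C_{t_2})$ as well, consistent with $op \in s_2$. Applying Lemma~\ref{lemma:separation-of-orderings} to $t_1 \le t_2$, I would fix a topological ordering $r = r_1 \cdot r_2$ of $C_{t_2}$ such that $r_1$ is a topological ordering of $C_{t_1}$; in particular $op$ occurs in the prefix $r_1$.

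Next I would apply Lemma~\ref{lemma:equivalence-of-every-topological-ordering} twice: applied to $C_{t_1}$ it yields $s_1 \simeq r_1$, and applied to $C_{t_2}$ it yields $s_2 \simeq r$. By Definition~\ref{def:equivalence-relation-of-orderings}, equivalent orderings assign every operation the same output, so $op$ returns the same value in $s_1$ as in $r_1$, and the same value in $s_2$ as in $r$. It then remains to compare $op$'s output in $r_1$ with its output in $r = r_1 \cdot r_2$: since $op$ lies inside the prefix $r_1$, the subsequence strictly preceding $op$ is identical in both sequences, so the object reaches the same state just before $op$ is applied; as $\sigma$ is a function, $op$ returns the same value in $r_1$ and in $r$. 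Chaining the three equalities gives the claim.

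The argument is essentially bookkeeping, and I do not expect any genuine difficulty; the one point that needs care is the choice of $r$, which is exactly what Lemma~\ref{lemma:separation-of-orderings} provides. It guarantees that all vertices of $C_{t_1}$ can be placed before the vertices introduced between $t_1$ and $t_2$, so that the extension from $C_{t_1}$ to $C_{t_2}$ does not alter the state seen by $op$. Without this, a generic topological ordering of $C_{t_2}$ could interleave newly committed operations before $op$ and thereby change the state on which its response is evaluated, so one cannot compare $s_2$ directly to $s_1$ — the detour through the staircase ordering $r$ is what makes the comparison legitimate.
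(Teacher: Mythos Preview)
Your proposal is correct and follows essentially the same route as the paper: assume $t_1 \le t_2$, obtain the ``staircase'' ordering $r = r_1 \cdot r_2$ from Lemma~\ref{lemma:separation-of-orderings}, apply Lemma~\ref{lemma:equivalence-of-every-topological-ordering} to get $s_1 \simeq r_1$ and $s_2 \simeq r$, and then observe that $op$'s response in $r_1$ coincides with its response in $r$ since $op$ lies in the prefix. The paper compresses this last step into ``and therefore also in $r$'', whereas you spell out explicitly that the state reached just before $op$ is identical in $r_1$ and $r$; otherwise the arguments are the same.
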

\begin{proof}
    Assume, without loss of generality, $0 \leq t_1 \leq t_2$.
    Let $s_1$ and $s_2$ be topological orderings of $C_{t_1}$ and $C_{t_2}$, respectively.
    By Lemma~\ref{lemma:separation-of-orderings}, there exists some topological ordering $r = r_1 \cdot r_2$ of $C_{t_2}$ such that $r_1$ is a topological ordering of $C_{t_1}$.
    Lemma~\ref{lemma:equivalence-of-every-topological-ordering} then implies $r \simeq s_2$ and $r_1 \simeq s_1$.
    Since $op \in s_1$, we also have $op \in r_1$.
    With $r_1 \simeq s_1$, the response value for $op$ is identical in both $s_1$ and $r_1$, and therefore also in $r$.
    The claim follows because $r \simeq s_2$.
\end{proof}

\begin{lemma}
    \label{lemma:response-value-sanity}
    For every time $t_2 \geq 0$ and every operation $op \in C_{t_2}$, if $op$ has a response, then there exists a time $t_1$ with $0 \leq t_1 \leq t_2$ and a topological ordering $T$ of $C_{t_1}$ that contains $op$, such that the response of $op$ agrees with the sequential specification applied to $T$.
\end{lemma}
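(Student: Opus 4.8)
The plan is to trace, for a given operation $op \in C_{t_2}$ that has a response, back to the invocation of $\textit{publish}(op)$ that actually returned that response, and to show that the graph state at the moment of return already contained $op$ with the same computed result. Let $p$ be the invocation of $\textit{publish}(op)$, which is the only invocation that can issue a response for $op$ (operations are unique, and only the owning process returns a value for its own operation). The method $\textit{publish}$ returns at one of lines~\ref{line:uc:return-1},~\ref{line:uc:return-2}, or~\ref{line:uc:rc:return}. I would handle these three cases and in each exhibit a time $t_1 \le t_2$ together with a topological ordering $T$ of $C_{t_1}$ that contains $op$ and on which the sequential specification gives exactly the returned value.

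For the return at line~\ref{line:uc:return-1}: the response is the result of $op$ in $l = \textit{linearize}(C)$, where $C = C_{t_1}$ was read in line~\ref{line:uc:read-2} at some time $t_1$, and line~\ref{line:uc:checkC} guarantees $op \in \textit{vertices}(C_{t_1})$; moreover $t_1 \le t_2$ since $op$ remains in $C$ and $C$ only grows (Lemma~\ref{lemma:basic-properties}~(\ref{lemma:basic-properties.increasing})), so once $op \in C_{t_1}$ and $op \in C_{t_2}$ we must have $t_1 \le t_2$. Take $T := l$, which is a topological ordering of $C_{t_1}$ containing $op$; by construction its sequential-specification value for $op$ is the returned value. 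For the return at line~\ref{line:uc:rc:return}: the argument is identical, using the read of $C$ at line~\ref{line:uc:rc:read-1} and the check $op \in C$ at line~\ref{line:uc:rc:check-before-return}. For the return at line~\ref{line:uc:return-2}: here $op$ was committed by $p$ itself at line~\ref{line:uc:addC}, say in the atomic extension $E_{t'} := t' \rightarrow t'+1$, so $op \in C_{t'+1}$; set $t_1 := t'+1$. The returned value is the result of $op$ in $l \cdot (op)$, where $l = \textit{linearize}(C_{\bar t})$ for the graph $C_{\bar t}$ read at line~\ref{line:uc:read-2}, and $C_{\bar t} = \textit{vertices}(C(E_{t'}))$ is exactly the dependency set attached to $op$ in $E_{t'}$. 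Since $l$ is a topological ordering of $C(E_{t'})$ all of whose vertices point to $op$ in $C_{t_1}$ and $op$ has no other in-edges, $l \cdot (op)$ is a topological ordering of $C_{t_1}$ containing $op$; take $T := l \cdot (op)$. Finally $t_1 \le t_2$ because $op \in C_{t_1}$ and $op \in C_{t_2}$ with $C$ monotone, so $t_1$ cannot exceed $t_2$.

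The only mildly delicate point is the line~\ref{line:uc:return-2} case, where I must check that the value returned, namely the result of $op$ computed from $l \cdot (op)$, really is the sequential-specification value obtained by applying the operations of $T := l \cdot (op)$ in order starting from $q_0$; this is immediate once we observe that $l$ is a linearize-output, hence a genuine topological ordering, and that appending the single new vertex $op$ (whose only incoming edges come from $\textit{vertices}(C(E_{t'})) = \textit{vertices}(l)$) keeps it a valid topological ordering of $C_{t_1}$ — this is exactly the structural observation already recorded after Definition~\ref{def:P(E_t)}. I expect no real obstacle here; the argument is a straightforward case analysis keyed to the three return statements, and in each case the algorithm literally computes the response from a topological ordering of the graph version it read, which by monotonicity of $C$ is a $C_{t_1}$ with $t_1 \le t_2$.
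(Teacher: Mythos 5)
Your case split on the three return sites is the right skeleton, and the lines~\ref{line:uc:return-1} and~\ref{line:uc:rc:return} cases are fine (they are exactly the paper's ``immediate'' cases). But the line~\ref{line:uc:return-2} case has a genuine gap: you assert that $op$ ``has no other in-edges'' than those from $\textit{vertices}(C(E_{t'}))=\textit{vertices}(l)$, which silently assumes that your own $\textit{add}_C(op,\cdot)$ at line~\ref{line:uc:addC} is the \emph{first} commitment of $op$. That need not hold. Between the read at line~\ref{line:uc:read-2} (where $p$ observes $op\notin\textit{vertices}(C)$) and the write at line~\ref{line:uc:addC}, a concurrent process running conflict resolution may select $op$ at line~\ref{line:uc:rc:select} (it is already booked), win consensus on it, and commit it at line~\ref{line:uc:rc:addC} with dependency set $\textit{vertices}(C_{\tilde t})$ for some later read $\tilde t$, which can strictly contain $\textit{vertices}(l)$. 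In the actual run, $op$ then has incoming edges from operations outside $l$, so $l\cdot(op)$ is not a prefix of any topological ordering of any $C_{t_1}$ containing $op$, and it is no longer evident that the value $p$ returns (computed from $l\cdot(op)$) matches the specification on a real graph version. This is precisely what the paper's Case~2 handles, via an indistinguishability argument over the perturbed runs $R.p$, $R.p'$, $R.p.p'$, $R.p'.p$, the commutativity of $\textit{add}$ operations on the ABC Graph, and Lemma~\ref{lemma:consistency}; your proposal does not address it at all.

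A secondary, more easily repaired flaw in the same case: even when $p$'s commit is the first one, $T:=l\cdot(op)$ is generally not a topological ordering of $C_{t'+1}$, because $C_{t'}$ may already contain operations committed after $t_{read}$ (the set $P(E_{t'})$), which $T$ omits entirely. You need to append those operations after $op$ (the paper's $r_2$, obtained via Lemma~\ref{lemma:separation-of-orderings}) and then argue via Lemma~\ref{lemma:equivalence-of-every-topological-ordering} that $r_1\cdot op\cdot r_2\simeq l\cdot op\cdot r_2$, so that $op$'s response in the full ordering coincides with the one computed from $l\cdot(op)$. Finally, your justification of $t_1\le t_2$ (``once $op\in C_{t_1}$ and $op\in C_{t_2}$ we must have $t_1\le t_2$'') is not literally valid; the clean way to get $t_1\le t_2$ for \emph{every} $t_2$ with $op\in C_{t_2}$ is to exhibit agreement with some topological ordering of some graph version containing $op$ and then invoke Lemma~\ref{lemma:consistency} to transfer it to the earliest such version.
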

\begin{proof}
    Since $op$ has a response, the corresponding $\textit{publish}(op)$ terminated. If $\textit{publish}(op)$ returned at lines~\ref{line:uc:return-1} or~\ref{line:uc:rc:return}, the claim is immediate.
    Suppose that $\textit{publish}(op)$ returned at line~\ref{line:uc:return-2}.
    Let $t_{com} \rightarrow t_{com}+1$ be the atomic extension produced by $\textit{add}(op, M)$ in line~\ref{line:uc:addC} with $M=\textit{vertices}(C_{t_{read}})$ for some $t_{read} \leq t_{com}$ (Lemma~\ref{lemma:previous-nodes}).
    Let $l$ be the topological ordering of $C_{t_{read}}$ that $\textit{publish}(op)$ calculated in line~\ref{line:uc:linearize-C}.

    \textbf{Case~1:} Suppose $op \notin C_{t_{com}}$.
    By Lemma~\ref{lemma:separation-of-orderings}, there exists a topological ordering $r = r_1 \cdot r_2$  of $C_{t_{com}}$ such that $r_1$ is a topological ordering of $C_{t_{read}}$, and $r_2$ consists of nodes from $\textit{vertices}(C_{t_{com}}) - \textit{vertices}(C_{t_{read}})$.
    Since $op \notin C_{t_{com}}$, there is no edge in $C_{t_{com}+1}$ from a node in $\textit{vertices}(C_{t_{com}+1}) - \textit{vertices}(C_{t_{read}})$ to $op$.
    Hence, $T := r_1 \cdot op \cdot r_2$ is a topological ordering of $C_{t_{com}+1}$.
    By Lemma~\ref{lemma:equivalence-of-every-topological-ordering}, $r_1$ is equivalent to $l$, and thus $T$ is equivalent to $l \cdot op \cdot r_2$.
    Consequently, the response of $op$ (line~\ref{line:uc:return-2}) agrees with the sequential specification applied to $T$.

    \textbf{Case~2:} Suppose that $op \in C_{t_{com}}$.
    Therefore, there was a preceding commitment of $op$.
    Let $p$ be the process $\textit{publish}(op).p$ and $p'$ the process that committed $op$ for the first time.
    For the sake of the argument, in the following, we will introduce alternatives to the original run, that we denote by $\textit{ORUN}$.
    For a run $R'$ and a process $q$ that can do a step, $R'.q$ denotes the run where $q$ makes a step after $R'$, and no other step is made.
    Since the ABC Graph is linearizable, we view each operation on it as one single step (together with local computations).
    Furthermore, let $C_{R'}$ denote the graph $C$ after the run $R'$.

    Let $R$ be the prefix of $\textit{ORUN}$ right before $p'$ commits $op$.
    Since $p$ has not found $op$ after executing line~\ref{line:uc:read-2}, it is also about to commit $op$ in the next step and return in line~\ref{line:uc:return-2}.
    Consequently, $R.p$, $R.p'.p$ and $\textit{ORUN}$ are indistinguishable for $p$ at the time when $p$ terminates $\textit{publish}(op)$.
    In particular, the response value of $op$ is the same in all runs.
    An argument analogous to the one used in Case~1 shows that there is a topological ordering $T' := r_1 \cdot op \cdot r_2$ of $C_{R.p}$ with $r_1 \simeq l$.
    Therefore, the response of $op$ (line~\ref{line:uc:return-2}) agrees with the sequential specification applied to $T'$.

    Since $R.p$ is a prefix of $R.p.p'$, the graphs $C_{R.p}$ and $C_{R.p.p'}$ belong to the same run $R.p.p'$\footnote{Let $\hat{t} \geq 2$ be the time at the end of run $R.p.p'$. The graph $C_{R.p}$ appears in run $R.p.p'$ at time $\hat{t}-1$.} and therefore Lemma~\ref{lemma:consistency} is applicable: The sequential specification yields the same response value for $op$ in every topological ordering of $C_{R.p}$ and $C_{R.p.p'}$.
    The same holds for $C_{R.p'}$ and $C_{R.p'.p}$ analogously. 
    Furthermore, we observe that, as every two $\textit{add}$ operations on the ABC Graph commute, the graphs $C_{R.p.p'}$ and $C_{R.p'.p}$ are identical.
    We conclude that the sequential specification yields the same response for $op$ in every topological ordering of $C_{R.p}$ and $C_{R.p'}$.

    Since the response of $op$ agrees with the sequential specification applied to a topological ordering of $C_{R.p}$, namely $T'$, it also agrees with every topological ordering of $C_{R.p'}$.
    Since $R.p'$ is effectively a prefix of the original run $\textit{ORUN}$, $C_{R.p'}$ is a version of graph $C$ in $\textit{ORUN}$. The claim follows directly.
\end{proof}

\begin{lemma}
    \label{lemma:linearizability}
    Algorithm~\ref{algo:universal-construction} is a linearizable universal construction. Moreover, for any run of the algorithm, every topological ordering of the final graph $C$ is a linearization of the induced history.
\end{lemma}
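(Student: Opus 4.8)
The plan is to take an arbitrary run $R$ of Algorithm~\ref{algo:universal-construction}, let $H$ be the history it induces on the implemented object (identifying each invoked $\textit{publish}(op)$ with the abstract operation $op$), and let $C_\infty := \bigcup_{t \geq 0} C_t$ be the final dependency graph, which for a finite run is just $C_{t_{\max}}$ and in general is the pointwise union, well defined by Lemma~\ref{lemma:basic-properties}~(\ref{lemma:basic-properties.increasing}). First I would observe that $C_\infty$ is acyclic: any cycle is finite, hence already present in some $C_t$, contradicting Lemma~\ref{lemma:acyclicity}. So $C_\infty$ admits a topological ordering; fix one, call it $S$, show $S$ is a linearization of $H$, and then upgrade to \emph{every} topological ordering.

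Next I would pin down the operation set of $S$. By Lemma~\ref{lemma:commitment}, every completed $\textit{publish}(op)$ has committed $op$, so $op \in C_\infty$, hence $op \in S$; conversely, $op \in C_\infty$ requires $\textit{add}_C(op,\cdot)$ to have been invoked, which in the code is preceded by $op.inv$, so $op \in H$. Thus $\mathrm{ops}(S)$ is exactly the set of committed operations, containing all complete operations of $H$ and a subset of its incomplete ones. I then define the completion $H'$ of $H$ to consist of all complete operations together with the incomplete operations appearing in $S$, each assigned the response it gets in $S$. To get $S \mid p = H' \mid p$ for every process $p$: the two contain the same operations of $p$; those are pairwise real-time ordered (a process runs operations sequentially), hence by Lemma~\ref{lemma:precedence} connected by edges in $C_\infty$, so ordered identically in $S$ and in $H \mid p$; and responses agree for incomplete operations by construction and for complete ones by the legality argument below. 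The same two observations (Lemma~\ref{lemma:precedence} supplies an edge for every real-time precedence, and per-process operations are totally ordered) also show that $S$ preserves $<_H$.

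The substantive step is legality of $S$: for every $op \in S$ with a recorded response $r$ in $H$, applying $S$ sequentially must return $r$ at $op$. Here I invoke Lemma~\ref{lemma:response-value-sanity} to get a time $t_1$ and a topological ordering $T$ of $C_{t_1}$ containing $op$ whose induced response at $op$ is $r$, and then transfer this to $S$ via Lemma~\ref{lemma:consistency} and Lemma~\ref{lemma:equivalence-of-every-topological-ordering}. For a finite run this is immediate with $C_{t_2} = C_\infty$. For an infinite run I would first lift those two lemmata to $C_\infty$ by the standard finiteness argument: the prefix of $S$ up to $op$ is a finite, downward-closed set of vertices, hence contained in some $\textit{vertices}(C_{t_2})$, and since $S$ respects all edges of $C_\infty$ and, by Lemma~\ref{lemma:no-backwards-edge}, there are no backward edges from later to earlier vertices, this prefix can be realized (using Lemma~\ref{lemma:separation-of-orderings}) as a prefix of a topological ordering of $C_{t_2}$; then Lemma~\ref{lemma:equivalence-of-every-topological-ordering} equates the response at $op$ with the one in $T$ through $C_{t_2}$. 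Concluding, $S$ is legal, hence a linearization of $H$, and since $R$ was arbitrary, Algorithm~\ref{algo:universal-construction} is linearizable.

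Finally, for the ``moreover'' clause I would note that by Lemma~\ref{lemma:equivalence-of-every-topological-ordering} (extended to $C_\infty$ as above) any two topological orderings $S,S'$ of the final graph are equivalent, so they produce the same final state, the same response at every operation, and the same per-process order (forced by edges). Hence legality, equivalence to the completion $H'$, and preservation of $<_H$ transfer verbatim from $S$ to $S'$, so every topological ordering of the final $C$ is a linearization of $H$. I expect the main obstacle to be the infinite-run bookkeeping---cleanly lifting the equivalence and consistency lemmas from the time-indexed $C_t$ to $C_\infty$ while keeping downward-closure and the ``no backward edge'' invariant (Lemma~\ref{lemma:no-backwards-edge}) under control---rather than any fresh conceptual difficulty.
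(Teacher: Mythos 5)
Your proposal is correct and follows essentially the same route as the paper: acyclicity of the final graph gives a topological ordering, Lemmata~\ref{lemma:commitment} and~\ref{lemma:precedence} give equivalence to a completion and preservation of precedence, and legality is obtained by combining Lemma~\ref{lemma:response-value-sanity} with Lemma~\ref{lemma:consistency} (via Lemma~\ref{lemma:equivalence-of-every-topological-ordering}). The only difference is that the paper restricts attention to finite runs, so its final graph is a concrete $C_{t_{\max}}$ and no lifting to $C_\infty$ is needed; your extra bookkeeping for infinite runs is sound but not part of the paper's argument.
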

\begin{proof}
    To prove the claim, we have to show that, for any given sequential specification and any finite run of Algorithm~\ref{algo:universal-construction}, a topological ordering of the final graph $C$ is a linearization of the induced history.\footnote{Note that the implementation of universal construction presented in Algorithm~\ref{algo:universal-construction} exports the method $\textit{publish}(op)$, whereas the object specified by the sequential specification exports the method $op$ directly. This difference does not affect correctness and can be disregarded when reasoning about equivalences of histories.}
    Let $H$ be the history of a run $R$ of Algorithm~\ref{algo:universal-construction}.
    Let $C_R$ denote the graph $C$ at the end of run $R$. Let $S$ be a topological ordering of $C_R$. Lemma~\ref{lemma:acyclicity} guarantees the existence of such a topological ordering.
    As a sequence of operations, $S$ forms a sequential history.
    It remains to show that $S$ serves as a linearization of $H$.
    By Definition~\ref{def:history.linearization}, $S$ must be equivalent\footnote{Note the distinction between equivalence of histories (Definition~\ref{def:history.equivalent}) and equivalence of sequences of operations (Definition~\ref{def:equivalence-relation-of-orderings}).} to a completion of $H$, preserve the precedence relation of $H$ and be legal.

    By Lemmata~\ref{lemma:commitment} and \ref{lemma:precedence}, $S$ is equivalent to a completion of $H$ and preserves the precedence relation of $H$.
    It remains to show legality, i.e., that $S$ satisfies the sequential specification.
    For each operation $op$ in $S$, the response must agree with the sequential specification applied to $S$. For operations in $S$ that do not have a response, we can assign one according to the sequential specification applied to $S$, which trivially satisfies legality.

    Let $op$ be an operation in $S$ that has a response.
    By Lemma~\ref{lemma:response-value-sanity}, there exists a topological ordering $T$ of $C_t$ for some time $t \geq 0$ that contains $op$, such that the response of $op$ agrees with the sequential specification applied to $T$.
    By Lemma~\ref{lemma:consistency}, the sequential specification yields the same response for $op$ in both $T$ and $S$, i.e., the response of $op$ agrees with the sequential specification applied to $S$.
    
    Since this holds for every operation in $S$, each operation returned the value prescribed by the sequential specification in $S$. Hence, $S$ is legal. We conclude that $S$ is a linearization of $H$, and therefore that Algorithm~\ref{algo:universal-construction} is linearizable.
\end{proof}
\section{Universal Construction - Proof of Wait-Freedom}
\label{sec:appendix:uc:wait-freedom}
Having established the linearizability of Algorithm~\ref{algo:universal-construction} in the previous section, we now address its liveness guarantees, specifically proving in Lemma~\ref{lemma:wait-freedom} that it satisfies wait-freedom, assuming that the underlying shared objects itself are wait-free. The underlying objects are a snapshot object $K$ (line~\ref{line:uc:snapshot-object}), an ABC Graph $G$ (line~\ref{line:uc:abc-graph-object}), and a sequence of consensus objects $\textit{CONS}_j$ for $j \geq 1$ (line~\ref{line:uc:consensus-objects}).
We showed in Section~\ref{sec:appendix:abc-graph}, that Algorithm~\ref{algo:abc-graph} is a wait-free linearizable implementation of an ABC Graph.

\begin{lemma}
    \label{lemma:wait-freedom}
    The universal construction given in Algorithm~\ref{algo:universal-construction} is wait-free. Each invocation of $\textit{publish}$ terminates at most after $c + 2$ iterations of conflict resolution, for the maximum level of concurrency $c \leq n$.
\end{lemma}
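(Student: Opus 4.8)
The plan is to reduce wait-freedom to a bound of $c+2$ on the number of conflict-resolution iterations, and to obtain that bound by a pigeonhole argument over the consensus objects. First, observe that everything outside the \textit{while} loop of conflict resolution --- the calls $G.\textit{add}_A$, $G.\textit{read}$, $G.\textit{add}_B$, $G.\textit{read}$, $K.\textit{snapshot}$, the topological ordering in line~\ref{line:uc:linearize-C}, and the commutativity test in line~\ref{line:uc:commutativity} --- amounts to finitely many operations on the ABC Graph (wait-free by Lemma~\ref{lemma:abc-graph:linearizability}) and on $K$, together with purely local computation over finite sets (enumerating every subset of $\textit{CUR}-\{op\}$ terminates, if slowly). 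Each pass through the loop performs one $G.\textit{read}$, one $\textit{CONS}_k.\textit{propose}$, a second $G.\textit{read}$, at most one $G.\textit{add}_C$, one $K.\textit{write}$, and bounded local work. Since all base objects are assumed wait-free, it therefore suffices to bound the number of loop iterations.

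The second step is a structural invariant: at every point of the run, if $K[j]=m$ for some process $j$, then $\textit{CONS}_1,\dots,\textit{CONS}_m$ have all been proposed to and the operations they store all lie in $\textit{vertices}(C)$. This goes by induction on the run, using that a process writes $k$ to $K[i]$ (line~\ref{line:uc:rc:write-k}) only after it has proposed to $\textit{CONS}_k$ and forced $\textit{CONS}_k$'s stored value into $\textit{vertices}(C)$ (lines~\ref{line:uc:rc:consensus}--\ref{line:uc:rc:addC}), and that a process reaches iteration $k$ only once iteration $k-1$ has been completed by \emph{someone} --- itself, or, when the iteration count is inherited, the process whose $K$-entry it read in line~\ref{line:uc:rc:max-k}. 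Two corollaries follow: (i) the first proposal to $\textit{CONS}_k$ occurs only after a value $\ge k-1$ has been written to $K$; and (ii) the operation stored in $\textit{CONS}_k$ is already in $\textit{vertices}(C)$ before any proposal to $\textit{CONS}_{k+1}$, so, since a committed operation stays in $\textit{vertices}(C)$ and can never again be selected in line~\ref{line:uc:rc:select}, distinct consensus objects store distinct operations.

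Now fix $\textit{publish}(op)$, run by $p_i$, that enters conflict resolution. Let $T_0$ be the time of its $K.\textit{snapshot}$ (line~\ref{line:uc:rc:read-k}), let $k_0=\max_j L[j]$, and let $\beta:=B[op]$, the value of $|A|$ it read in line~\ref{line:uc:read-1}; note $op$ is already booked before $T_0$. The process runs iterations $k=k_0+1,k_0+2,\dots$ By corollary~(i), for every $k\ge k_0+2$ the first proposal to $\textit{CONS}_k$ happens after a value $\ge k-1>k_0$ has reached $K$, hence strictly after $T_0$, so that first proposer sees $op$ booked; it therefore either already finds $op\in\textit{vertices}(C)$ or, in line~\ref{line:uc:rc:select}, selects an operation of $B$-value $\le\beta$ (since $op$, being booked but not committed, then lies in $B-\textit{vertices}(C)$). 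If the selected operation is $\neq op$, it has a finite $B$-value $\le\beta$; using the fact that the $i$-th operation ever added to $A$ receives $B$-value $\ge i$, such an operation was invoked no later than $T_0$, and since it is not committed at the (post-$T_0$) proposal time it has not returned by $T_0$ either --- hence it is one of the at most $c-1$ invocations other than $\textit{publish}(op)$ that are simultaneously active at the single instant $T_0$. Combining with corollary~(ii): among the $c$ objects $\textit{CONS}_{k_0+2},\dots,\textit{CONS}_{k_0+c+1}$, if none stored $op$ they would store $c$ pairwise-distinct operations drawn from a pool of size $\le c-1$, which is impossible; so $op\in\textit{vertices}(C)$ by the time $p_i$ finishes iteration $k_0+c+1$ (either some $\textit{CONS}_{k_0+j}$ with $2\le j\le c+1$ stores $op$ and $p_i$ commits it when it processes that object, or $op$ was committed independently), and $p_i$ returns at line~\ref{line:uc:rc:return} on its next iteration --- at most $c+2$ iterations overall.

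The main obstacle is exactly this final count, namely pinning down the right notion of concurrency. The naive bound ``at most $c-1$ operations concurrent with $\textit{publish}(op)$'' is false, because the first proposals to $\textit{CONS}_{k_0+2},\dots,\textit{CONS}_{k_0+c+1}$ are spread over the (possibly long) conflict-resolution interval; the repair is the $B$-value monotonicity observation, which forces every operation that can be stored in one of those objects to be alive already at the fixed time $T_0$, so that a pigeonhole over $c$ slots against $c-1$ live operations applies. A secondary subtlety, responsible for ``$+2$'' rather than ``$+1$'', is that $\textit{CONS}_{k_0+1}$ may have been first proposed to before $op$ was booked and can thus hold an operation of $B$-value $>\beta$, so that one iteration is potentially ``wasted'' and not covered by the pigeonhole pool.
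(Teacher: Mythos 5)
Your proof is correct and follows essentially the same route as the paper's: both bound the loop by a pigeonhole over $\textit{CONS}_{k_0+2},\dots,\textit{CONS}_{k_0+c+1}$, using $B$-value monotonicity to rule out operations invoked after the $K$-snapshot and Lemma~\ref{lemma:commitment} to rule out operations that returned before it, so that every stored operation other than $op$ must be one of the at most $c-1$ other invocations active at $T_0$. Two small remarks: consensus validity lets the decided value come from \emph{any} proposer, not just the first, but your corollary~(i) reasoning applies verbatim to every proposer of $\textit{CONS}_k$ with $k\geq k_0+2$, so nothing is lost; and your explicit distinctness invariant (corollary~(ii)) makes precise a step the paper leaves implicit.
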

\begin{proof}
Let $p$ be a process that executes $\textit{publish}(op)$.

\textbf{Case~1:} Suppose $p$ verifies commitment of $op$ in line~\ref{line:uc:checkC} or the commutativity in line~\ref{line:uc:commutativity}. It is immediate that the number of steps it can take before terminating $\textit{publish}(op)$ is finite.

\textbf{Case~2:} Suppose $p$ fails to verify the commitment of $op$ in line~\ref{line:uc:checkC} and fails to verify the commutativity property in line~\ref{line:uc:commutativity}. Let $k_0$ denote the value of $k \geq 0$ read by $p$ in line~\ref{line:uc:rc:read-k} and let $c \leq n$ be the maximum level of concurrency, i.e., the maximum number of processes that can propose operations concurrently. We prove by contradiction that $\textit{publish}(op)$ terminates at latest in iteration $k_0 + c + 2$.

Assume, for the sake of contradiction, that $\textit{publish}(op)$ does not terminate by iteration $k_0 + c + 2$. Since it did not return in iteration $k_0 + 1$, $op$ was not decided by any $\textit{CONS}_k$ with $k \leq k_0$; otherwise, $op$ would have been committed and $\textit{publish}(op)$ would have returned at line~\ref{line:uc:rc:return} in iteration $k_0 + 1$. From iteration $k_0 + 1$ onward, $\textit{publish}(op)$ executes the conflict resolution procedure; hence, each $k$ with $k_0 + 1 \leq k \leq k_0 + c + 1$, $\textit{CONS}_k$ decides on an operation different from $op$. Otherwise, $\textit{publish}(op)$ would terminate in iteration $k + 1$.

Let us consider the $c$ different iterations with $k_0 + 2 \leq k \leq k_0 + c + 1$ and the operations stored in $\textit{CONS}_k$. Since $c$ is the concurrency bound, among those $c$ operations there exists an operation $op'$ such that $\textit{publish}(op')$ was not active at the moment when $p$ read $k_0$. Either $\textit{publish}(op')$ terminated before $p$ read $k_0$ or $\textit{publish}(op')$ was invoked afterwards.

\textbf{Case~2.1:} Suppose $\textit{publish}(op')$ was invoked after $p$ read $k_0$. The $B$-value of $op'$ will be higher than $B[op]$. Furthermore, every process that encounters $op'$ in line~\ref{line:uc:rc:read-1} will also encounter $op$ with a lower $B$-value. Consequently, no process would propose $op'$ to a consensus object $\textit{CONS}_k$ with $k_0 + 2 \leq k \leq k_0 + c + 1$. This contradicts the validity of $\textit{CONS}_k$.

\textbf{Case~2.2:} Suppose $\textit{publish}(op')$ terminated before $p$ read $k_0$. By Lemma~\ref{lemma:commitment}, $op'$ is committed before $p$ read $k_0$ and consequently before any process executed line~\ref{line:uc:rc:read-1} in iteration $k_0 + 2$; otherwise, $p$ would have read at least $k_0 + 1$ in line~\ref{line:uc:rc:read-k}. Consequently, no process would propose $op' \in C$ to a subsequent consensus object, contradicting validity of consensus.

We conclude that the assumption that $\textit{publish}(op)$ does not terminate by iteration $k_0 + c + 2$ leads to a contradiction. Therefore, Algorithm~\ref{algo:universal-construction} is wait-free.
\end{proof}

\section{Universal Construction - Proof of Dynamic Concurrency}
\label{sec:appendix:uc:dynamic-concurrency}
After proving linearizability and wait-freedom of Algorithm~\ref{algo:universal-construction}, we show in Lemma~\ref{lemma:dynamic-concurrency} that the algorithm satisfies dynamic concurrency according to Definition~\ref{def:dynamic-concurrency}.

\begin{lemma}
    \label{lemma:dynamic-concurrency}
    The universal construction given in Algorithm~\ref{algo:universal-construction} is dynamically concurrent.
\end{lemma}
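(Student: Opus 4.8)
The plan is to locate the unique point at which Algorithm~\ref{algo:universal-construction} can touch a strong synchronization primitive and to read off the required system state from the test that guards it. First I would note that the two base objects $K$ and $G$ of the construction are implementable from atomic read-write registers alone (the snapshot $K$ by assumption, and $G$ by Algorithm~\ref{algo:abc-graph}, which itself uses only a snapshot), so the only strong primitive ever used is the consensus call in line~\ref{line:uc:rc:consensus}. That line is reached only inside the conflict-resolution loop, which $\textit{publish}(op)$ enters only when the test in line~\ref{line:uc:checkC} falls through (so $op \notin \textit{vertices}(C)$ for the $C$ read in line~\ref{line:uc:read-2}) and the test in line~\ref{line:uc:commutativity} fails, i.e., there is a subset $O' \subseteq A - \textit{vertices}(C) - \{op\}$ with which $op$ does not commute in $l := \textit{linearize}(C)$. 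Hence it suffices to show that, with these values, $(l,O)$ is a system state during $op$ in $H$ for some $O \supseteq A - \textit{vertices}(C) - \{op\}$; then $O'$ is a subset of $O$ witnessing Definition~\ref{def:dynamic-concurrency}.

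Second, I would fix the time $t^*$ at which the read of line~\ref{line:uc:read-2} is linearized in the ABC Graph (passing to an equivalent run in which ABC Graph operations act atomically, which does not change the induced history $H$), so that $A = A_{t^*}$ and $C = C_{t^*}$, and take $H'$ to be the prefix of $H$ consisting of all $\textit{publish}$-invocations and -responses occurring up to and including this read. I would then check the three clauses of Definition~\ref{def:system-state}. Clause~1 is immediate: $op.inv \in H'$ since $op$ has been announced and booked, and $op.res \notin H'$ since $op.p$ is still running. For clause~2: $l$ does not contain $op$ because line~\ref{line:uc:checkC} fell through, so $op \notin \textit{vertices}(C_{t^*})$; $l$ is equivalent to a completion of $H'$ because every operation complete in $H'$ is committed by $t^*$ (Lemma~\ref{lemma:commitment}), hence lies in $C_{t^*} = \textit{vertices}(l)$, while the per-process order matches by Lemma~\ref{lemma:precedence}, and the recorded responses (and the specification-assigned responses for operations of $C_{t^*}$ with no recorded response) agree with the sequential specification applied to $l$ by Lemmata~\ref{lemma:response-value-sanity} and~\ref{lemma:consistency} — which also gives legality of $l$; precedence of $H'$ is preserved by Lemma~\ref{lemma:precedence}. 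It remains to produce a linearization $L$ of the whole history $H$ with $l$ as a prefix: by Lemma~\ref{lemma:separation-of-orderings} the final graph $C$ has a topological ordering whose prefix is a topological ordering of $C_{t^*}$; by Lemma~\ref{lemma:equivalence-of-every-topological-ordering} this prefix may be replaced by $l$ (the result is still a topological ordering of the final $C$, since by Lemma~\ref{lemma:no-backwards-edge} no edge leads back into $C_{t^*}$); and by Lemma~\ref{lemma:linearizability} this ordering is a linearization of $H$ — reducing to the finite case, or first completing pending operations via Lemma~\ref{lemma:wait-freedom}, if the run is infinite.

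Third, clause~3 defines $O$ as the set of all $op'$ with $op'.inv \in H'$, $op' \notin l$ and $op' \neq op$, so there is nothing to choose; it only remains to observe that any $op' \in A_{t^*} - \textit{vertices}(C_{t^*}) - \{op\}$ has $op'.inv \in H'$ (it was announced no later than $t^*$, hence invoked before the read), $op' \notin l$ and $op' \neq op$, whence $A_{t^*} - \textit{vertices}(C_{t^*}) - \{op\} \subseteq O$ and therefore $O' \subseteq O$. Consequently $(l,O)$ is a system state during $op$ in $H$ and $op$ does not commute in $l$ with the subset $O' \subseteq O$, which is exactly the conclusion of Definition~\ref{def:dynamic-concurrency}.

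I expect the delicate step to be clause~2: the prefix $H'$ must be cut exactly at the linearization time of the read in line~\ref{line:uc:read-2} (not at its response) so that $l = \textit{linearize}(C_{t^*})$ is simultaneously a faithful linearization of $H'$ — which forces every operation complete in $H'$ to already appear in $C_{t^*}$ — and a prefix of a linearization of $H$. Both halves are essentially a repackaging of the linearizability analysis of Lemmata~\ref{lemma:precedence}, \ref{lemma:separation-of-orderings}, \ref{lemma:equivalence-of-every-topological-ordering}, \ref{lemma:consistency}, \ref{lemma:response-value-sanity} and~\ref{lemma:linearizability}, but assembling them while tracking this cut requires some care; everything else — matching the line~\ref{line:uc:commutativity} test syntactically to the dynamic-concurrency condition, and checking the inclusion $A - \textit{vertices}(C) - \{op\} \subseteq O$ — is routine.
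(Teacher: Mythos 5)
Your proposal is correct and follows essentially the same route as the paper's proof: it identifies the read in line~\ref{line:uc:read-2} as determining the candidate system state, shows that $l=\textit{linearize}(C_{t'})$ is a prefix of a linearization of $H$ via Lemmata~\ref{lemma:no-backwards-edge}/\ref{lemma:separation-of-orderings}, \ref{lemma:equivalence-of-every-topological-ordering} and~\ref{lemma:linearizability}, verifies that $l$ linearizes a suitable prefix $H'$ using Lemmata~\ref{lemma:commitment} and~\ref{lemma:precedence}, and concludes with the inclusion $A-\textit{vertices}(C)-\{op\}\subseteq O$. The only (harmless) deviation is your choice of $H'$ as the prefix cut at the linearization time of the read, whereas the paper takes the maximal prefix containing no response of an operation outside $l$; both satisfy Definition~\ref{def:system-state} by the same appeal to Lemma~\ref{lemma:commitment}.
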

\begin{proof}
    Let $H$ be an arbitrary history induced by a run of Algorithm~\ref{algo:universal-construction} and $\textit{publish}(op)$ an operation in $H$. 
    Assume that process $\textit{publish}(op).p$ used a strong synchronization primitive while executing $\textit{publish}(op)$ and, thus, executed conflict resolution for $op$.
    Therefore, the conditions in lines~\ref{line:uc:checkC} and~\ref{line:uc:commutativity} failed for $\textit{publish}(op)$.

    Let $t'$ be the time when $\textit{publish}(op).p$ read the ABC Graph in line~\ref{line:uc:read-2}. It remains to show that there exists a set of operations $O$ with $O'' := A_{t'} - \textit{vertices}(C_{t'}) - \{op\} \subseteq O$ such that $(l, O)$ is a configuration during $op$ in $H$ for $l := \textit{linearize}(C_{t'})$.

    Let $C_H$ be the graph $C$ at the end of the run that induced $H$. 
    By Lemma~\ref{lemma:linearizability}, every topological ordering of $C_H$ is a linearization of $H$. 
    By Lemma~\ref{lemma:no-backwards-edge}, no operation in $\textit{vertices}(C_H)-\textit{vertices}(C_{t'})$ has an edge to an operation in $\textit{vertices}(C_{t'})$. 
    Therefore, there exists a topological ordering $\hat{l} = \hat{l}_0 \cdot \hat{l}_1$ of $C_H$ and thus a linearization of $H$ such that $op' \in \hat{l}_0$ if and only if $op' \in C_{t'}$ for all $op' \in C_H$.
    Since $\hat{l}_0$ is also a topological ordering of $C_{t'}$, we have $l \simeq \hat{l}_0$ by Lemma~\ref{lemma:equivalence-of-every-topological-ordering}. 
    Thus, $L := l \cdot \hat{l}_1$ is a linearization of $H$ and $l$ a prefix of $L$.
    Furthermore, since $\textit{publish}(op).p$ executed conflict resolution, it failed the condition $op \in C_{t'}$ in line~\ref{line:uc:checkC}. 
    We therefore have $op \notin l$.

    Let $H'$ be the maximal prefix of $H$ such that $op'.res \in H'$ implies $op' \in l$ for every operation $op'$. 
    We have to show $op.inv \in H'$ and $op.res \notin H'$.
    Since $H'$ is maximal, it excludes $op.inv$ only if there exists an operation $op' \notin l$ with $op'.res <_H op.inv$. 
    Let $op'$ be an operation such that $op'.res$ precedes $op.inv$ in $H$. 
    By Lemma~\ref{lemma:commitment}, $op'$ is committed upon termination of $\textit{publish}(op')$. 
    Therefore, we have $op' \in C_{t'}$ and consequently $op' \in l$. Thus, $op.inv \in H'$ holds. 
    Finally, $op.res \notin H'$ is implied by $op \notin l$.
    
    Furthermore, we have to show that for all $op' \in l$, it holds that $op'.inv \in H'$. 
    Suppose, for the sake of contradiction, that there exists $op' \in l$ with $op'.inv \notin H'$. 
    Since $H'$ is maximal, $op'.inv \notin H'$ implies that there exists $op'' \notin l$ with $op''.res <_H op'.inv$. 
    By Lemma~\ref{lemma:precedence}, $op''$ precedes $op'$ in every topological ordering of $C_{t'}$. This contradicts $op'' \notin l$. 
    We conclude that $op'.inv \in H'$ for all $op' \in l$.

    We now have to show that $l$ is a linearization of $H'$. 
    By construction, $l$ is a prefix of $L$, where $L$ is a linearization of $H$. 
    Consequently, $l$ is a legal sequential history. 
    Since $H'$ is a prefix of $H$, $l$ preserves the precedence relation of $H'$, as the precedence relation of $H'$ is a subset of that of $H$. 
    It remains to show that $l$ is equivalent to a completion of $H'$.
    As established above, we have $op'.inv \in H'$ for all $op' \in l$ and, by construction, $op' \in l$ for all $op.res \in H'$. 
    Hence, we can construct a completion $\hat{H'}$ of $H'$ such that $op' \in \hat{H'}$ if and only if $op' \in l$. 
    It remains to show that $l$ is equivalent to $\hat{H'}$ according to Definition~\ref{def:history.equivalent}.
    Suppose, for the sake of contradiction, that $l$ and $\hat{H'}$ are not equivalent. 
    Hence, there exist some process $p'$ and operations $op', op'' \in \hat{H'} \mid p'$ such that $op' <_{\hat{H'}} op''$ but $op'' <_l op'$. 
    This leads to a contradiction, since $op' <_{\hat{H'}} op''$ implies $op' <_{H'} op''$, which in turn implies $op' <_H op''$, then $op' <_L op''$, and finally $op' <_l op''$. 
    Therefore, $l$ is equivalent to $\hat{H'}$, a completion of $H'$, and consequently, $l$ is a linearization of $H'$.

    It remains to show that $O'' = A_{t'} - \textit{vertices}(C_{t'}) - \{op\}$ is a subset of the set $O$ of all operations $op'$ for which $op'.inv \in H'$, $op' \notin l$, and $op' \neq op$. 
    Let $op' \in O''$. 
    By definition of $O''$, we have $op' \neq op$ and $op' \notin \textit{vertices}(C_{t'})$ and therefore $op' \notin l$. 
    Suppose, for the sake of contradiction, that $op'.inv \notin H'$. 
    By definition of $H'$, there must exist an operation $op'' \notin l$ such that $op''.res$ precedes $op'.inv$; otherwise, $H'$ would not be maximal. 
    It follows from $op' \in O'' \subseteq A_{t'}$ that time $t'$, when $\textit{publish}(op).p$ read the ABC Graph in line~\ref{line:uc:read-2}, succeeds $op'.inv$ and therefore $op''.res$. 
    Lemma~\ref{lemma:commitment} contradicts $op'' \notin C_{t'}$, i.e., $op'' \notin l$. 
    This contradiction proves $O'' \subseteq O$. 
    The claim follows from Definition~\ref{def:dynamic-concurrency}.
\end{proof}

\end{document}